\renewcommand{\phi}{\varphi}
\renewcommand{\epsilon}{\varepsilon}
\def\U{\mathrm{U}(1)}
\renewcommand{\sc}[1]{\mathcal{#1}}
\newcommand{\bb}[1]{\mathbb{#1}}
\newtheorem{theorem}{Theorem}[section]
\newtheorem{lemma}[theorem]{Lemma}
\newtheorem{corollary}[theorem]{Corollary}
\theoremstyle{definition}
\newtheorem{definition}[theorem]{Definition}
\newtheorem{remark}[theorem]{Remark}
\newtheorem{example}[theorem]{Example}
\begin{document}
\title{Fractionalization as an alternate to charge ordering in electronic insulators}

\author{Seth Musser}
\affiliation{Department of Physics, Massachusetts Institute of Technology, Cambridge, Massachusetts 02139, USA}
\affiliation{Condensed Matter Theory Center and Joint Quantum Institute, Department of Physics, University of Maryland, College Park, Maryland 20742, USA}
\author{Meng Cheng}
\affiliation{Department of Physics, Yale University, New Haven, Connecticut 06511, USA}
\author{T. Senthil}
\affiliation{Department of Physics, Massachusetts Institute of Technology, Cambridge, Massachusetts 02139, USA}
\begin{abstract}
Incompressible insulating phases of electronic systems at partial filling of a lattice are often associated with charge ordering that breaks lattice symmetry. The resulting phases have an enlarged unit cell with an effective integer filling. Here we explore the possibility of  insulating states\textemdash which we dub ``quantum charge liquids" (QCL)\textemdash at partial lattice filling that preserve lattice translation symmetry. Such QCL phases  must necessarily either have gapped fractionally charged excitations and associated topological order or have gapless neutral excitations.  We establish some general constraints on gapped fermionic QCL phases that restrict the nature of their topological order. We prove a number of results on the \textit{minimal} topological order that is consistent with the lattice filling.  In particular we show that, at rational fillings $\nu = p/q$ with $q$ an even integer, the minimal ground-state degeneracy on a torus of the fermionic QCL is $4q^2$, four times larger than that of the bosonic QCL at the same filling. We comment on models and physical systems which may host fermionic QCL phases and discuss the phenomenology of these phases.
\end{abstract}

\maketitle

\section{Introduction}

The competition between interelectron Coulomb repulsion and electronic kinetic energy is at the heart of many interesting phenomena in condensed-matter physics. If the electrons live on a crystalline lattice, then at any fractional filling $\nu$, defined as the mean number of electrons per unit cell, metallic Fermi-liquid phases will occur if the kinetic energy dominates. When the Coulomb energy dominates, and the filling $\nu$ is rational, an insulating state is expected where the electrons  are localized  in a spatial pattern that breaks lattice translation symmetry. These states are known as Wigner-Mott insulators. Recent experiments \cite{regan_mott_2020, wang_correlated_2020, xu_correlated_2020, mak_semiconductor_2022} on moir\'{e} transition metal dichalcogenide (TMD) materials have found evidence for Wigner-Mott insulating phases at many fractional fillings $\nu$. 

How does the Wigner-Mott insulator evolve into the Fermi liquid as the bandwidth is increased at fixed Coulomb interaction? Are there new phases stabilized at intermediate bandwidth that are distinct from both the Wigner-Mott insulator and the Fermi liquid? Remarkably, questions such as these may be experimentally accessible in the moir\'{e} TMD systems where the bandwidth can be simply controlled by tuning a perpendicular electric field. 

\begin{figure}
    \centering
    \includegraphics[width=\columnwidth]{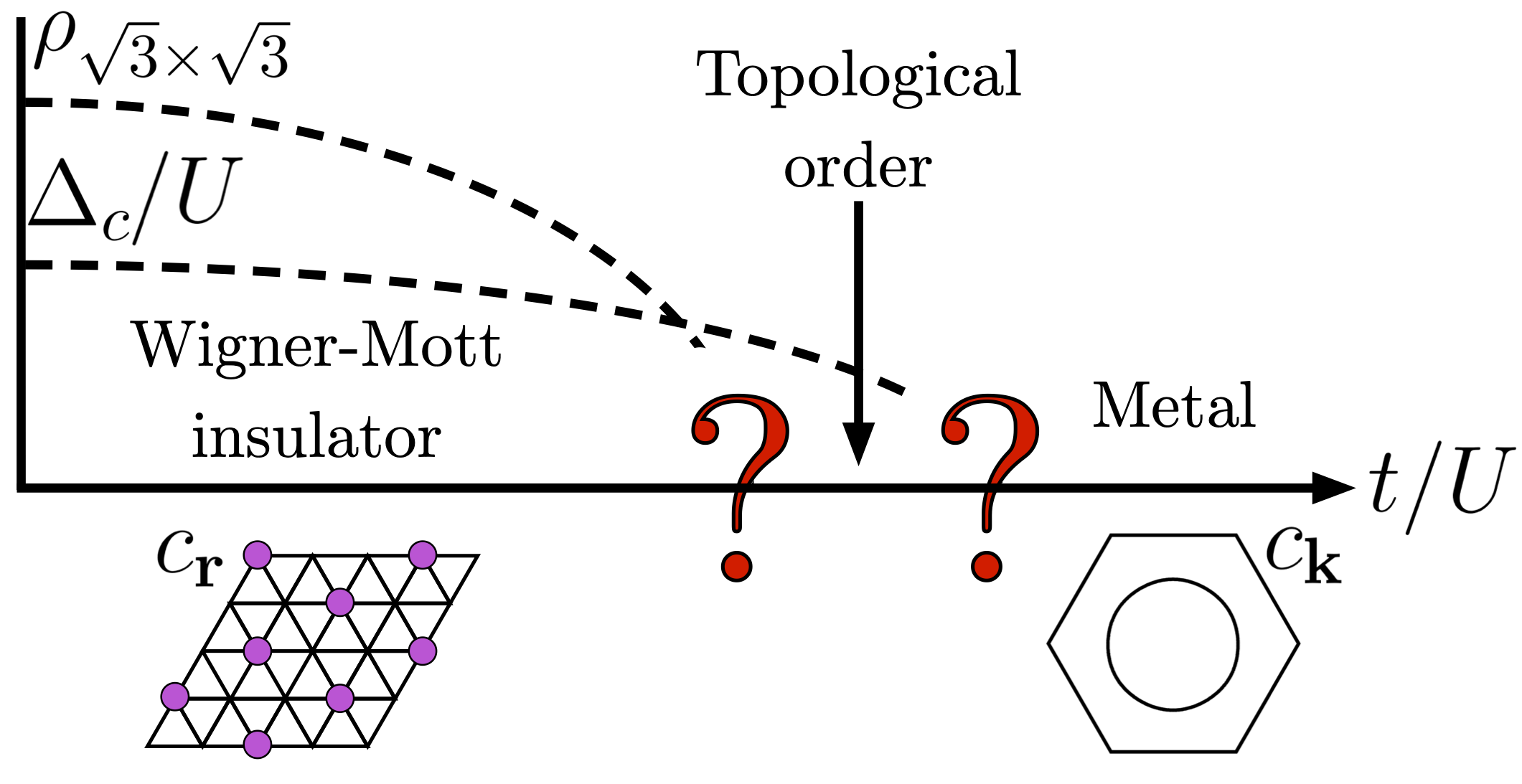}
    \captionsetup{justification=raggedright}
    \caption{A possible evolution between the Wigner-Mott insulator and a Fermi liquid at a fixed fractional lattice filling. $t$ sets the scale of the bandwidth, and $U$ is a measure of the electron-electron interaction strength. We consider an intermediate phase where the charge order is lost before the insulator-metal transition.\footnote{In the extended Hubbard model $t/U\approx 0.1$ in numerical studies of the insulator-metal transition \cite{zhou_quantum_2024}.} This is displayed in the cartoon, where the dimensionless charge gap is $\Delta_c/U$ and the dimensionless charge density wave order parameter is $\rho_{\sqrt{3}\times \sqrt{3}}$. It goes to zero before the charge gap closes. General constraints tell us that this insulating phase can be fully gapped but must exhibit TO, and have excitations that carry fractional charge. For a treatment of the possibility of a continuous phase transition between the metal and the Wigner-Mott insulator without this intervening topologically ordered phase, see Refs.~\cite{xu_interaction-driven_2022, musser_theory_2022, musser_metal_2022}.}
	\label{fig:intermediate}
\end{figure}

In this paper we focus on insulating states at fractional lattice filling. We explore the possibility of states which preserve the translation symmetry of the lattice and are therefore distinct from Wigner-Mott insulators. The well-known Lieb-Schultz-Mattis-Oshikawa-Hastings (LSMOH) theorem \cite{lieb_two_1961, oshikawa_commensurability_2000, hastings_lieb-schultz-mattis_2004} tells us that such states necessarily have fractional-charge excitations and associated topological order (TO), or have gapless neutral excitations.  We will refer to these as ``quantum charge liquids'' (QCLs). We specifically focus on the case where the intervening phase is fully gapped, and hence topological, as indicated in Fig.~\ref{fig:intermediate}.

Although this paper is focused on the experimentally relevant case of fermionic QCLs, we note that the concept, if not the name, of QCLs has arisen before in bosonic systems. One early example is the triangular lattice quantum dimer model, which can be mapped to a hopping problem of hardcore bosons on a kagome lattice \cite{moessner_quantum_2008}. Other tractable microscopic models for bosonic fractionalized phases have also been constructed \cite{balents_fractionalization_2002,senthil_microscopic_2002, motrunich_exotic_2002, motrunich_bosonic_2003}. More fractionalized states of bosons can be accessed theoretically either through parton constructions \cite{senthil_z_2_2000}, or through vortex condensation from a proximate superfluid phase \cite{balents_dual_1999, senthil_z_2_2000,lannert_quantum_2001,balents_putting_2005}.\footnote{We note that many of these constructions are linked to closely related states \cite{read_large-n_1991, wen_mean-field_1991, savary_quantum_2016, broholm_quantum_2020} that occur as gapped quantum spin liquid (QSL) phases of quantum magnets with, at least, a $\U$ spin symmetry \cite{zou_symmetry_2018}. One key difference between quantum magnets and the charge-frustrated systems of interest in this paper is the action of time-reversal symmetry. Electrical charge is even under time reversal while the spin is odd under time reversal. Thus, time reversal acts differently on the generator of global $\U$ in the two cases. The incompressible insulating states of interest in this paper where the lattice filling of particles is a generic rational fraction correspond, in the quantum magnet analog, to magnetization plateau states which, except in the special case of zero magnetization, are not time-reversal invariant. Thus we can even have chiral TO as a possibility for magnetization plateau states \cite{misguich_magnetization_2001} while they are not allowed at fractional filling of a time-reversal invariant charge system. Apart from this conceptual distinction, the natural microscopic interactions of charge-frustrated systems and quantum magnets are of course different. Finally, the concept of a fermionic QCL discussed in this paper has no analog in a spin system.}

In contrast, fermionic QCLs have gotten surprisingly little attention despite their relevance to the aforementioned experiments in moir\'{e} TMDs. Early works \cite{pollmann_charge_2006, pollmann_correlated_2006, pollmann_fermionic_2011} suggest that spinless fermions at $\nu = 1/2$ on a lattice can form a $\bb{Z}_2$ TO insulating state that preserves charge conservation, lattice translation, and time-reversal symmetries. Contrary to these suggestions, and to the familiar bosonic case, we show that such a state is not allowed to exist on general grounds. Instead a filling $\nu = 1$ is needed to observe $\bb{Z}_2$ TO in this system, as recently found in exactly solvable models \cite{han_resonating_2023, cai_quantum_2024}.

For concreteness we now specialize to spinless fermions. This is relevant to the TMD materials where the spin-exchange scale is very small compared with the charge-exchange scale \cite{mak_semiconductor_2022}. The spins are thus easily polarized in a small magnetic field, or become incoherent at small temperatures between the spin- and charge-exchange scales. With this assumption, we present general results about the possible TOs the spinless fermions can form by extending the LSMOH constraints to the fermionic case where: $\U$, lattice translation, and time-reversal symmetries are respected. We remark that fractionalized states of fermions on a lattice in the absence of time-reversal symmetry (either due to explicit or spontaneous breaking) occur in fractional quantum anomalous Hall and fractional Chern insulator states that have been observed recently \cite{park_observation_2023,zeng_thermodynamic_2023, lu_fractional_2024}. As a warm-up we also obtain some results on these time-reversal-broken fractionalized states. We show that, in general, the ``minimal order,'' which we define below, realized by spinless fermions is larger than that of bosons.

Given that LSMOH requires nontrivial TO, we can ask what the simplest or minimal TO is consistent with the lattice filling and symmetries. We propose using ground-state degeneracy (GSD) on the torus to define\footnote{The notion of minimal TO has appeared before in other contexts; see Ref.~\cite{sodemann_composite_2017}.} minimal TOs rather than quantum dimension, as done in a recent paper \cite{jian_minimal_2024}.
We then show that, for fillings of $\nu = p/q$ per unit cell with $p, q$ relatively coprime integers, and $q$ even, the minimal fermionic TO that does not break either translation or time reversal has a GSD on the torus of at least $4q^2$, larger by a factor of four than the minimal GSD of a bosonic TO at the same filling. This was discussed using dual vortex and fermionic tensor network arguments for the special case of $\nu = 1/2$ in Ref.~\cite{bultinck_filling_2018}, although it was not proven. In contrast, for fillings of $\nu = p/q$ with $q$ an odd integer, the minimal fermionic TO will have a GSD of at least $q^2$, exactly the minimal GSD of a bosonic TO at the same filling. We give a proof of these statements, and physically justify the emergence of such TOs using parton arguments. We then prove that there is only a single minimal TO consistent with: lattice translation, charge conservation, and time reversal at these fillings, so the minimal TO is unique.

While our focus in this paper is on gapped QCLs, where we can establish some rigorous results, it is also interesting to contemplate the physics of gapless QCL phases. If we are interested in \textit{insulating} QCLs, the charge must be fully gapped. Thus the gapless sector, if present at all, must consist entirely of electrically neutral excitations.\footnote{This is yet another difference with QSL states of interest in quantum magnetism. Insulating quantum spin liquids may well have gapless spin excitations.} A  specific example of such a gapless insulating QCL with an emergent Fermi surface of neutral fermions is described in Appendix \ref{app:parton_args}.  These may well also appear as intermediate phases between the Wigner-Mott insulator and a Fermi liquid. Finally we could also consider \textit{metallic} QCLs. The Fermi liquid itself could be thought of as a familiar example; non-Fermi-liquid metallic states that preserve the symmetries and filling constraints will surely qualify as metallic QCLs.

Where might we expect to see fermionic QCLs?   In the better understood bosonic systems, fractionalized insulators typically occur in the ``intermediate'' correlation limit where the kinetic energy leads to ring exchange processes. By analogy, and as motivated above,  two-dimensional moir\'{e} TMD materials evolving between Wigner-Mott insulators and Fermi liquids might be an interesting platform to observe fractionalized QCL insulators.  There has also been a recent suggestion \cite{mao_fractionalization_2023} that twisted bilayer graphene at $1/3$ filling might realize a fractionalized phase. Our results on the allowed TO phases at fractional filling should inform further explorations of this suggestion. Finally, we note that the physics of charge frustration appears in a few traditional solid-state systems as well, e.g., in triangular lattice quarter-filled organic materials \cite{hotta_theories_2012}. Although not as tunable as moir\'{e} materials, QCL phases might occur in some such systems as well.

The outline of the rest of the paper is as follows: In Sec.~\ref{sec:definitions} we give a brief review of the mathematical theory of TO and define different classes of ``minimal'' orders. In Sec.~\ref{sec:fractional_fillings} we find the minimal orders at a fixed fractional filling of a lattice $\nu = 1/q$ with and without the presence of time-reversal symmetry. With time-reversal symmetry we show that there is a unique minimal TO, $\bb{Z}_{2q}$ ($\bb{Z}_q$) gauge theory for $q$ even (odd). Finally in Sec.~\ref{sec:discussion} we discuss areas of further study in three related directions: (1) finding microscopic models that realize gapped fermionic QCLs, (2) describing the phenomenology of such QCLs, including their response to doping, and (3) other scenarios where the concept of minimal order could prove clarifying, as well as theoretical questions raised by the concept of minimal order.

\section{Definition of ``minimal" symmetry enriched topological order}
\label{sec:definitions}

We first give a brief review of the theory of TO to establish some common definitions. For a more comprehensive review, see Refs.~\cite{kitaev_anyons_2006} and \cite{bonderson_non-abelian_2007}. A TO $\sc{C}$ is characterized by its anyon types and the chiral central charge $c_-$. We focus on the anyons, since $c_-$ must be 0 in time reversal invariant systems, which is the main subject of the work. Two anyons $a,b\in \sc{C}$ can be fused, resulting in a direct sum of other anyons $a\times b = \sum_c N^{c}_{ab}c$, where $N^c_{ab}\geq 0$ are integers known as the fusion multiplicity. The fusion rules determine the quantum dimensions $d_a\geq 1$ for the anyons, through the equation $d_ad_b=\sum_c N_{ab}^c d_c$. If $d_a=1$ then $a$'s fusion outcome with any other anyon is unique and it is thus an Abelian anyon. One important Abelian anyon is the unique ``identity'' anyon $1$, which corresponds to a local bosonic excitation. It has the property that $1\times a = a$ for all $a\in \sc{C}$. The set of Abelian anyons $\sc{A}\subseteq \sc{C}$ forms an Abelian group where the group multiplication is defined by fusion and the identity element is given by $1$. If $\sc{A}=\sc{C}$ then we refer to $\sc{C}$ as an Abelian TO. The total quantum dimension $\sc{D}_{\sc{C}}$ of $\sc{C}$ is defined as
\begin{equation}
\sc{D}_{\sc{C}} = \sqrt{\sum_{a\in \sc{C}} d_a^2}.
\end{equation}
We thus see that $\sc{D}_{\sc{C}}^2 \geq |\sc{C}|$, where $|\sc{C}|$ is the number of anyons in $\sc{C}$, with equality if and only if $\sc{C}$ is an Abelian TO.

In addition to the fusion rules for anyons in $\sc{C}$ a TO also needs to specify the exchange and braiding statistics of the anyons.  The exchange statistics of an anyon $a$ is denoted by $e^{i\theta_a}$, where $\theta_a$ is referred to as the topological spin of $a$. The effect of braiding an anyon $a$ around $b$ is captured by the monodromy $M_{ab}$. If either $a$ or $b$ is an Abelian anyon then $M_{ab} = e^{i\theta_{a,b}}$, where $\theta_{a,b}$ is referred to as the braiding phase. Restricted to $\sc{A}\subseteq \sc{C}$ the function $\theta_{a,b}$ is a symmetric bilinear form on $\sc{A}$ i.e., a map from $\sc{A}\times \sc{A}$ to $\bb{R}/2\pi\bb{Z}=\mathbb{S}^1$ that satisfies $\theta_{a,b} = \theta_{b,a}$ and $\theta_{a\times b,c} = \theta_{a,c} + \theta_{b,c}$. Moreover, $\theta_{a,a} = 2\theta_a$ i.e., the braiding phase of an anyon with itself is twice its topological spin.
The identity anyon is a boson which braids trivially with all other anyons, and thus $\theta_1 = 0$ and $\theta_{1,a} = 0$ for all $a\in \sc{C}$. 
In a bosonic TO, $1$ is the only anyon type with this property. A fermionic TO will have an Abelian anyon $c\in \sc{C}$ corresponding to a local fermion i.e., the electron. This anyon obeys $c\times c = 1$, $\theta_c = \pi$, and $\theta_{a,c} = 0$ for all $a\in \sc{C}$. In a fermionic TO, $1$ and $c$ are the only anyons which braid trivially with everything else. We note that physically the electron is a local particle, and as such is sometimes not counted as an anyon in $\sc{C}$. However, keeping $c\in \sc{C}$ will prove to be a useful bookkeeping device for us, so we adopt this convention throughout. With this convention, a trivial fermionic insulator\footnote{Here we refer to fermionic invertible states as trivial because they have no nontrivial anyons.} is described by the theory $\{1,c\}$.

We further note that a bosonic TO $\sc{C}_b$ can be trivially turned into the fermionic TO, $\sc{C}_b\boxtimes\{1,c\}$, where $\boxtimes$ means stacking of two anyon theories. Physically, it means that the TO $\sc{C}_b$ is made of bosonic degrees of freedom in a fermionic system. However, not all fermionic TOs can be written in this form,\footnote{A familiar example of a fermionic TO that is not equivalent to a bosonic TO stacked with $\{1,c\}$ is the Moore-Read Pfaffian state.} although Abelian ones can (see Theorem \ref{thm:decomp} in Appendix \ref{app:abelian_results}).

Consistency of the fusion and braiding rules establishes constraints on the possible TOs. For a comprehensive review, see Appendix E of Ref.~\cite{kitaev_anyons_2006}. If, in addition, the TO is required to respect a symmetry $G$, this will impose further constraints, ruling out some symmetry actions and enrichments of G on the TO \cite{barkeshli_symmetry_2019, bulmash_fermionic_2022, bulmash_anomaly_2022, aasen_characterization_2022}. Those that are consistent with a symmetry action of $G$ are referred to as symmetry-enriched TOs (SETs). 

In addition to a symmetry $G$ a system may have a ``fractional topological response'' \cite{cheng_gauging_2023} that requires a nontrivial TO. The example explored in this paper is fractional filling of a lattice, $\nu$, which requires nontrivial TO by the LSMOH theorem \cite{lieb_two_1961, oshikawa_commensurability_2000, hastings_lieb-schultz-mattis_2004}. Another familiar example is the fractional quantum Hall effect \cite{hallbootstrap}. A natural question is what the ``least complicated,'' or minimal, SETs are given a symmetry $G$ and a fractional topological response $\nu$. Surprisingly, despite the well-developed study of SETs, the question of minimality is a relatively recent one. Answering it first requires one to define what is meant by minimal. A recent paper \cite{jian_minimal_2024} endeavored to do this by defining a minimal SET to be one which has the smallest quantum dimension possible, given that it must respect $G$ and have the appropriate fractional topological response. We propose another definition of minimal. We say that a SET is minimal if it has the smallest GSD on a torus, given that it must respect $G$ and $\nu$. We prefer our definition for two reasons. The first is that a lower bound on GSD provides a more useful tool for numerics such as exact diagonalization, where GSD on a torus is easily assessed.\footnote{Note that in electronic systems the GSD may have size dependence \cite{wen_quantum_2003, rao_theory_2021}; when we discuss GSD of these systems we refer to those with an even number of electrons.} The second reason to prefer our definition is that it is more powerful when proving results about minimal Abelian SETs. We make this more precise below.

\begin{figure}
    \centering
    \includegraphics[width=\columnwidth]{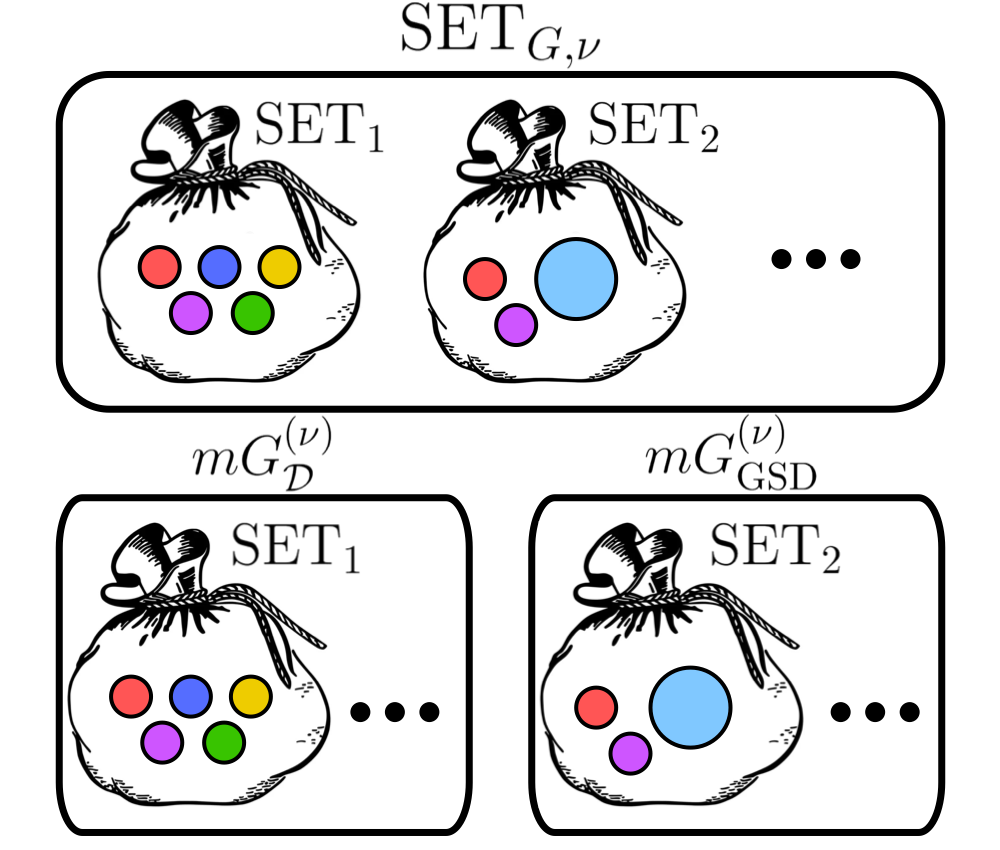}
    \captionsetup{justification=raggedright}
    \caption{A cartoon picture of the minimal order classes. The top black box is the set of SETs consistent with the symmetry $G$ and filling fraction $\nu$, SET$_{G,\nu}$. Contained within SET$_{G,\nu}$, illustrated by bags, are the SETs, SET$_{1}$ and SET$_2$ along with other possibilities. Each of these SETs contains anyons, illustrated by the colorful balls. The radius of each anyon is a proxy for its quantum dimension, with the smaller balls therefore being Abelian and the larger light blue ball being non-Abelian. The bottom-left illustrates the minimal class $mG^{(\nu)}_{\sc{D}}$, containing all SETs in SET$_{G,\nu}$ with the smallest quantum dimension. This will correspond to the total ``weight'' of the balls in the bag, so SET$_2$ is excluded because of its large non-Abelian anyon. The bottom-right illustrates the minimal class $mG^{(\nu)}_{\rm GSD}$, containing all SETs with the smallest number of anyons. The large number of anyons in SET$_1$ thus rules out its inclusion. This illustration makes it clear that $mG^{(\nu)}_{\sc{D}}$ will tend to be biased in favor of Abelian theories, while the reverse is true of $mG^{(\nu)}_{\rm GSD}$.}
	\label{fig:orders_marbles}
\end{figure}

Let SET$_{G,\nu}$ be the set of all SETs respecting $G$ and $\nu$. Then define:
\begin{equation}
mG^{(\nu)}_{\sc{D}} = \{\sc{C}\in {\rm SET}_{G,\nu} \mid \sc{D}_{\sc{C}}\leq \sc{D}_{\sc{C}'} \ \forall \sc{C}'\in {\rm SET}_{G,\nu} \},
\end{equation}
this forms the set of minimal SETs according to the definition of Ref.~\cite{jian_minimal_2024}. Now the GSD of a TO is equal to the number of distinct anyon types \cite{kitaev_anyons_2006}, which is $|\sc{C}|$ for bosonic TOs and $|\sc{C}|/2$ for fermionic TOs, where this factor of two compensates for our including $c\in \sc{C}$.\footnote{The GSD is independent of the periodic or antiperiodic boundary condition. As mentioned before, we assume an even number of electrons.}
Therefore, achieving the smallest GSD will mean achieving the smallest number of anyons. We thus define 
\begin{equation}
mG^{(\nu)}_{\rm GSD} = \{\sc{C}\in {\rm SET}_{G,\nu} \mid |\sc{C}|\leq |\sc{C}'| \ \forall \sc{C}'\in {\rm SET}_{G,\nu} \},
\end{equation}
as the set of SETs with minimal GSD on a torus. We can think of these as \textit{classes} of minimal SETs. For a given $G$ there is \textit{a priori} no reason for these two classes to have any relationship to one another. However, a useful heuristic is that the minimal class $mG^{(\nu)}_{\sc{D}}$ is more likely to contain Abelian TOs than the class $mG^{(\nu)}_{\rm GSD}$. This is illustrated in Fig.~\ref{fig:orders_marbles}.

We make this more precise by stating a result proven in Appendix \ref{app:minimal_classes}. It follows from the inequality $|\sc{C}|\leq \sc{D}_{\sc{C}}^2$, which we noted earlier. If there is an Abelian TO in $mG^{(\nu)}_{\rm GSD}$, then all TOs in $mG^{(\nu)}_{\sc{D}}$ are Abelian and moreover $mG^{(\nu)}_{\sc{D}}\subseteq mG^{(\nu)}_{\rm GSD}$.  On the other hand, there is no reason for there to be any Abelian TOs in $mG^{(\nu)}_{\rm GSD}$ even if one of the TOs in $mG^{(\nu)}_{\sc{D}}$ is Abelian. We give an example of this in Appendix \ref{app:minimal_classes}. It is thus more powerful to establish results about the minimal orders of $mG^{(\nu)}_{\rm GSD}$ being Abelian, than about those of $mG^{(\nu)}_{\sc{D}}$. Likewise, it is more powerful to establish results about the minimal orders of $mG^{(\nu)}_{\sc{D}}$ being non-Abelian.

For our choice of $G$, specified in the next section, we see that \textit{all} TOs in $mG^{(\nu)}_{\rm GSD}$ are Abelian and thus that $mG^{(\nu)}_{\rm GSD} = mG^{(\nu)}_{\sc{D}}$. Thus in this case the two classifications of minimal order turn out to be equivalent. However, it is nonetheless more powerful to work with $mG^{(\nu)}_{\rm GSD}$ since this will imply things about $mG^{(\nu)}_{\sc{D}}$, but not the other way around. We thus henceforth use the term minimal to refer to a SET in $mG^{(\nu)}_{\rm GSD}$, unless otherwise specified. Note that we often discuss minimal TOs rather than minimal SETs, here a minimal TO is the underlying TO of a minimal SET.

\section{Minimal order at fractional fillings}
\label{sec:fractional_fillings}
 
We consider a general system of spinless fermions that respect $\U$ fermion number conservation and are at a fractional filling $\nu = 1/q$ per unit cell; all of our arguments also easily generalize to the case $\nu = p/q$ with $p$ relatively prime to $q$. For our purposes we henceforth take the symmetry group to be given by
\begin{equation}
G = \bb{Z}^2\times [\U_f\rtimes \bb{Z}^T_2], \label{eqn:symm_gp}
\end{equation}
where the first $\bb{Z}^2$ is translation in the $x$ and $y$ directions, the $\U_f$ is $\U$ charge with the restriction that the charge mod $2$ is equal to the fermion parity, and $\bb{Z}_2^T$ is the order-two time-reversal operation which does not change the $\U$ charge. In what follows we will also take the electron to be a Kramers singlet i.e., we assume that $\sc{T}^2 = 1$. This is appropriate in thinking about the possibility of the fractionalized states in TMD moir\'{e} heterostructures, where, as we emphasized, the electron spin exchange is very small compared with other energy scales. Thus the spin may be easily fully polarized in a small magnetic field without affecting the orbital motion. 

As will become clear in the following sections the inclusion of time-reversal symmetry in $G$ proves to be very restrictive on the possible minimal theories. However, it also makes proving results about the minimal order more challenging. As a warm-up we thus begin by supposing that the system breaks time reversal, meaning its symmetry group is now given by:
\begin{equation}
H = \bb{Z}^2 \times \U_f.
\label{eqn:TRS_gp}
\end{equation}
We review some generalities about TOs respecting $H$ and $\nu$ in Sec.~\ref{sec:generalities}. We demonstrate in Sec.~\ref{sec:warm-up} that the minimal TOs in both $mH^{(\nu)}_{\sc{D}}$ and $mH^{(\nu)}_{\rm GSD}$ are Abelian with $2q$ anyons when $q$ is odd, and $4q$ anyons when $q$ is even. The proof is relatively straightforward and illustrative of how we tackle the full $G$.

Having warmed up we dive into a discussion on constraints on the minimal TOs consistent with time reversal as well i.e., the full $G$. In Sec.~\ref{sec:construction} we construct a fermionic Abelian TO which preserves $G$ and has a GSD of $4q^2$ ($q^2$) for $q$ even (odd). 
We then prove a number of results. First, in Sec.~\ref{sec:minimal_Abelian}, we show that any TO at filling $\nu = 1/q$ which respects $G$ and has a GSD equal to or smaller than our construction must be Abelian. Moreover, we show that all such possibilities must have exactly the same GSD as our construction, so our construction is minimal. Then, in Sec.~\ref{sec:uniqueness_of_SET}, we show that our construction is the \textit{unique} minimal TO that obey all the requirements. We defer a full classification of the symmetry enrichments of this minimal TO to future work.

We briefly remark that we have thus demonstrated a hierarchy of lower bounds on torus GSD when the system is gapped.  
\begin{itemize}
    \item If all symmetries are unbroken then we see that we must have a GSD of at least $4q^2$ for even $q$, and $q^2$ for odd $q$.
    \item However, if the time reversal is spontaneously broken but the translation symmetry is preserved, then we must have a GSD of at least $2\times 2q =4q$ for even $q$, $2q$ for odd $q$. The extra factor of two is due to the fact that time reversal is broken spontaneously.
    \item  If our GSD is even smaller than this, then we can conclude that translation symmetry must be spontaneously broken. This will allow us to have a GSD as small as $q$ e.g., in a stripe order. Note that the GSD of $q$ in this case is not topological but arises from the breaking of translation symmetry.
\end{itemize}

\subsection{Generalities about TOs at fractional filling} \label{sec:generalities}
We now discuss a few general results about TOs at fractional filling~\cite{cheng_translational_2016}, which form the basis of our arguments.

In any gapped TO, bosonic or fermionic, with $\U$ symmetry, there exists an anyon $v$ called a vison,\footnote{The name `vison' is used here as a generalization of its original use \cite{senthil_z_2_2000} to denote an Ising vortex in a $\bb{Z}_2$ gauge theory. In the special situation in which the low energy physics is described by a $\bb{Z}_2$ gauge theory with a charge-$1/2$ (under the global $\U$) chargon, and a charge-neutral vison, the latter is indeed nucleated by such a flux insertion. } which is nucleated by flux insertion, whose braiding with other anyons detects their (fractional) $\U$ charge:
\begin{equation}
\label{eq: thetava}
    \theta_{v,b}=2\pi Q_b \pmod{2\pi},
\end{equation}
where $Q_b$ is the charge carried by $b$. Physically, $v$ can be understood as resulting from the adiabatic insertion of a $2\pi$ $\U$ flux. It is important to keep in mind that in a fermionic system with $\U_f$ symmetry, all bosonic (fermionic) local excitations carry even (odd) charges. This means we can take $Q_b$ to be defined modulo two.\footnote{The $\mathrm{U}(1)$ charge respects fusion, so $Q_{z} = Q_x+Q_y \pmod{2}$ for any $z$ in the fusion product of $x$ and $y$.}

In addition, the classic Laughlin argument shows that $Q_v = \sigma_H$ (the Hall conductance in units where electronic charge and Planck's constant are set to 1).  It follows from Eq.~\eqref{eq: thetava} that $\theta_{v,v}=2\pi Q_v=2\pi\sigma_H$. The topological spin of $v$ must therefore be $\theta_v=\pi\sigma_H$ or $\pi\sigma_H+\pi$. If continuous rotation symmetry is assumed then it can be shown that $\theta_v = \pi \sigma_H$ \cite{Goldhaber:1988iw, Greiter:2022iph}. This can also be shown without continuous rotation symmetry \cite{Kapustin:2020bwt}, so we always have $\theta_v = \pi\sigma_H$.

In a translation-invariant state with fractional charge per unit cell $\nu = 1/q$, there should be a background anyon $a$ which has $\U$ charge $Q_a = \nu$. 
The presence of such an anyon can be argued using the  aforementioned vison which detects the background $\U$ charge per unit cell \cite{cheng_translational_2016, bultinck_filling_2018}. However, it can also be understood to arise naturally from the requirement that the TO be translation invariant; the most natural way to do this is to fractionalize the fundamental particle into $q$ anyons with charge $1/q$ and place one in every unit cell. This is depicted in Fig.~\ref{fig:boson}. 

We note that the background anyon $a$ must be Abelian. This was shown more formally in Ref.~\cite{cheng_translational_2016}, but we sketch two complementary arguments here. First suppose that $a$ is non-Abelian, with quantum dimension $d_a>1$. Then the Hilbert-space dimension associated with having non-Abelian anyons on every site will scale as $d_a^N$, where $N$ is the number of sites \cite{kitaev_anyons_2006}. This macroscopic degeneracy must be lifted by some interaction, meaning that this description is at best useful at an intermediate energy scale, and does not correspond to the true infrared (IR) limit. 
Flowing towards the true IR, any further macroscopic degeneracy will continue to be broken until we have fully broken it, at which point we must have a background anyon with $d_a=1$ i.e., $a$ must be Abelian.

Another complementary perspective can show the same conclusion. Operationally, the background anyon is defined by the following thought experiment:
 (1) excite some anyon pair $x$ and $\overline{x}$, (2) adiabatically transport $x$ around a unit cell, (3) and annihilate the pair. This will return the system to its ground state, so it can only lead to a global phase $\varphi_{x}$ for the wave function. We can think of this phase as the braiding of $x$ around the background anyon $a$. Importantly, this phase is additive under fusion: if $z\in x\times y$, then $\varphi_z=\varphi_x+\varphi_y$.  
 Via a result in Ref.~\cite{barkeshli_symmetry_2019} [see Eq. (45)], this is then enough to establish that $\varphi_x=\theta_{a,x}$ for some Abelian anyon $a$, which we identify as the background anyon.

\subsection{Time-reversal broken minimal order}
\label{sec:warm-up}

\begin{figure}
    \centering
    \subfigure[Bosonic TO]{
        \includegraphics[width=0.45\columnwidth]{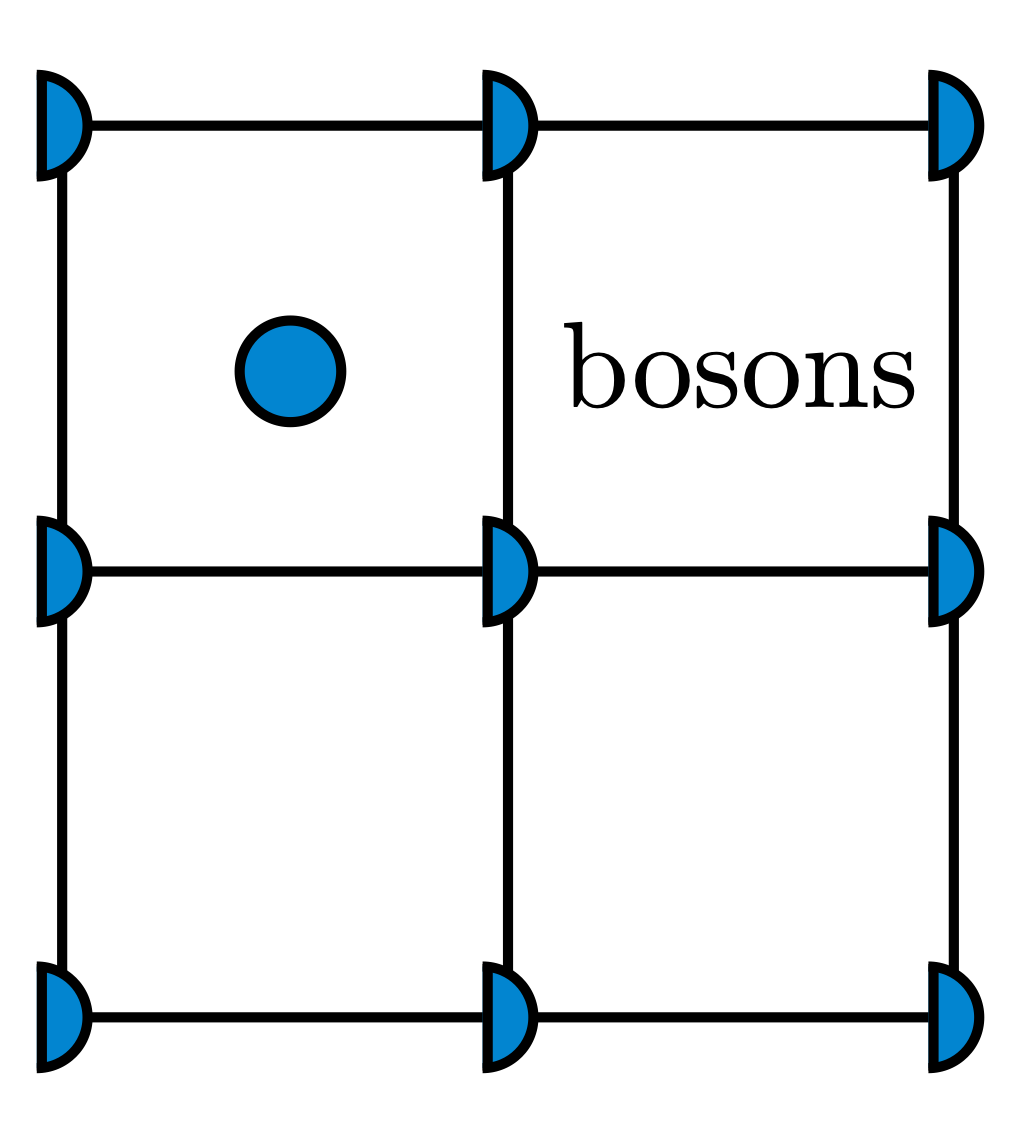}
        \label{fig:boson}
            }
    ~ 
    \subfigure[Fermionic TO]{
        \includegraphics[width=0.45\columnwidth]{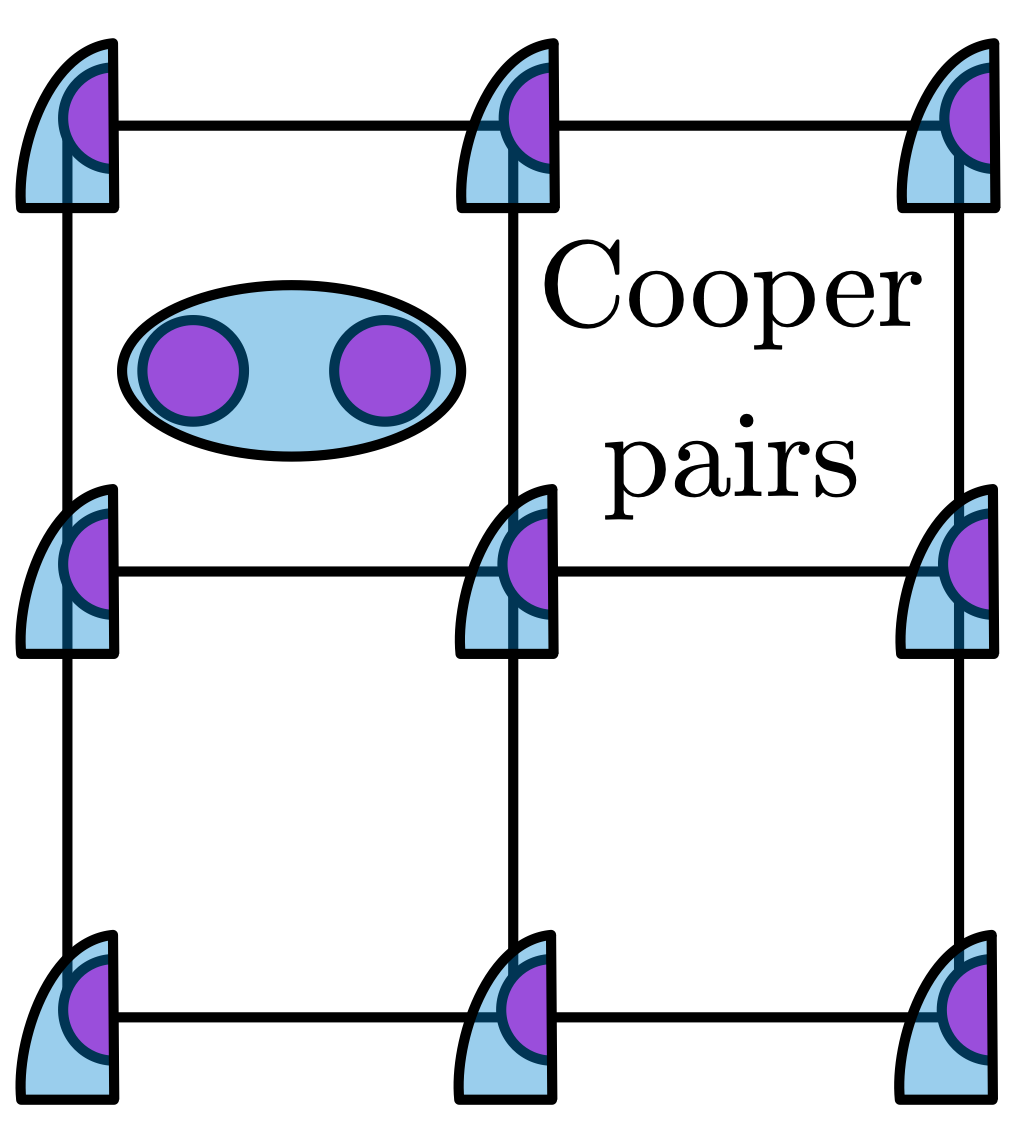}\
        \label{fig:cooper}
            }
    \captionsetup{justification=raggedright}
    \caption{A cartoon picture of the easiest way to construct a gapped state at fractional filling without breaking translation. Panel \subref{fig:boson} considers bosons at $\nu=1/2$ filling per unit cell on the square lattice. The bosons, illustrated by a blue circle, can fractionalize into a background anyon $a$ with charge $1/2$ which will be placed on each lattice site. In panel \subref{fig:cooper} we now consider spinless fermions at $\nu=1/2$. By the arguments in the main text the same fractionalization cannot occur for the fermions. However, if the fermions pair into Cooper pairs at $\nu/2 = 1/4$ filling per unit cell, then they may fractionalize into a quarter of the Cooper pair on each site.}
	\label{fig:TOs}
\end{figure}

For bosons at $1/q$ filling, with $q$ even, $\U_q$ (the TO of the bosonic $1/q$ Laughlin state) is a natural candidate for time-reversal breaking TOs in bosonic QCLs. We now demonstrate that for $q$ even, $\U_q\boxtimes\{1,c\}$ cannot be a fermionic TO at filling $1/q$. First we note that for even $q$ $\U_q$ by itself is a bosonic TO, and local operators in the theory are all bosons. It follows that the local excitation with minimal nonzero charge in $\U_q$ is a charge-$2$ Cooper pair. Since every anyon in this theory has order $q$, the minimal fractional charge in this theory is $2/q$. But this means this theory does not have a background anyon of charge $Q_a=\nu = 1/q$, contradicting the presence of $\U_f$ symmetry and the filling constraint.

We can construct a possible fermionic TO by pairing two fermions into Cooper pairs. The Cooper pairs are bosons at a filling $\nu/2 = 1/2q$ per unit cell. We then make a TO, $\sc{C}$, which is translation invariant by considering $\U_{2q}$, the bosonic Laughlin state of the Cooper pairs. Strictly speaking we have $\U_{2q}\boxtimes \{1,c\}$, where $c$ is the fundamental fermion. This will be a fermionic TO in SET$_{H,\nu}$, and is depicted in Fig.~\ref{fig:cooper}. This theory has a GSD of $2q$. Of course if $q$ is odd the theory $\U_q$ is a fermionic TO in SET$_{H,\nu}$ with a smaller GSD, $|\U_q|/2=q$.

We now prove that these two constructions belong to $mH^{(\nu)}_{\rm GSD}$ for the respective choice of $q$. We do this by showing that the Abelian sector in any fermionic TO in SET$_{H,\nu}$ is at least as large as our constructions. So then if our theory is to be minimal according to $mH^{(\nu)}_{\rm GSD}$ it must be Abelian with exactly the same anyon count as our constructions. This will prove that every TO in $mH^{(\nu)}_{\rm GSD}$ is Abelian and thus that $mH^{(\nu)}_{\rm GSD}=mH^{(\nu)}_{\sc{D}}$, so the two definitions of minimality coincide. Our proof strategy is illustrated in Fig.~\ref{fig:vice}.

In addition to the background anyon $a$ we know that the fundamental fermion is an Abelian anyon with integer charge, so $a\neq c$ and we have two Abelian anyons $a,c\in \sc{A}$. Let us now consider how many different anyons we can generate from the powers of $a$ and $c$. We start with the case of $q$ odd. The order of $a$ must be a multiple of $q$ due to its $\U$ charge being $1/q$. Since we have already constructed a theory with $2q$ anyons, $\U_q$, it must be the case that if our theory is to be minimal the order of $a$ is either $q$ or $2q$. However, the order of $a$ cannot be $q$ because $a^q$ has charge one, which cannot be identified with the vacuum. Thus the order of $a$ is $2q$. Then for our theory to be minimal i.e., have at most $2q$ anyons, it must be the case that $c\in \langle a\rangle$. The only possibility consistent with charge conservation is $c=a^q$. Then we see that our theory is Abelian and $\sc{A} = \langle a\rangle$. Here we use $\langle\cdot\rangle$ to represent an Abelian group in terms of its generators.
It is useful to write $\sc{A}=\sc{A}'\boxtimes\{1,c\}$, where $\sc{A}'=\langle ac\rangle$ has $q$ anyons and is in fact a bosonic TO because $(ac)^q=1$. All such bosonic TOs are completely classified~\cite{Moore:1988qv, bonderson_non-abelian_2007}  and can be labeled by $\theta_a=2\pi n/q$ with $n$ coprime to $q$. Note that $\U_q$ corresponds to $n=2$.

Although we have determined the possible minimum TOs, we can still ask whether there are distinct symmetry enrichments by $\bb{Z}^2\times \U_f$. To that end, we need to specify the vison anyon. It is easy to see that $v=a^{k}$, where $k$ is such that $2nk=1\pmod{q}$. Since $n$ is coprime with $q$ and $q$ is odd, such $k$ always exists and is uniquely fixed. Thus the possible minimal orders in the time reversal breaking case have Hall conductances of $\sigma_H = k/q \pmod{1}$ where $k$ can take all values relatively prime to $q$. An example of such a QCL phase was recently demonstrated in \cite{chen_how_2025}. It is worth noting that we can (trivially) modify the TO by stacking copies of invertible topological states\footnote{In two dimensions, the basic such state is the so-called $E_8$ state with chiral central charge $c = 8$.} of charge-neutral local bosonic excitations, which does not change the anyon content.

Next we tackle $q$ even. In this case we again must have that $a^q\neq 1$. So then $a$ must have order a multiple of $2q$. Suppose now that $a^q=c$, then $a^q$ must braid trivially with $a$. This means that $\theta_{a,a^q}=2q\theta_a = 0$. Since $q$ is even we multiply by the integer $q/2$ to see that $0=q^2\theta_a=\theta_{a^q}$ and thus $a^q$ is a boson, a contradiction with $a^q=c$. So $c\notin \{1,a,\ldots,a^{2q-1}\}$ and we must have that there are at least $4q$ anyons in our theory. For our theory to be minimal it must thus be Abelian with $\sc{A} = \langle a\rangle\boxtimes \{1,c\}$ where $a$ has order exactly $2q$. All such theories are again classified \cite{Moore:1988qv, bonderson_non-abelian_2007}, with $\theta_a=\pi n/2q$ for any $n$ such that $n$ is coprime with $q$. In particular this means $n$ must be odd. We can then choose $v=a^k$, where $k$ is such that $nk=2\pmod{2q}$. This means $k$ must be even, and more specifically that $k=2k'$ with $k'$ relatively prime to $2q$. Note $n=1$ corresponds to the $\U_{2q}$ theory, where we can choose $v=a^2$. Thus when $q$ is even the minimal orders have Hall conductances $\sigma_H = 2k'/q \pmod{1}$, where $k'$ can take all values relatively prime to $q$.

This general proof strategy will prove useful in dealing with the full $G$. We first construct a theory in SET$_{G,\nu}$, then show that all minimal theories must be at least as big as our construction. We again refer to reader to Fig.~\ref{fig:vice} for an illustration of this strategy. Having obtained the anyon count of each minimal theory we then show that in the presence of time-reversal symmetry the minimal TO is in fact unique.

\begin{figure}
    \centering
    \includegraphics[width=\columnwidth]{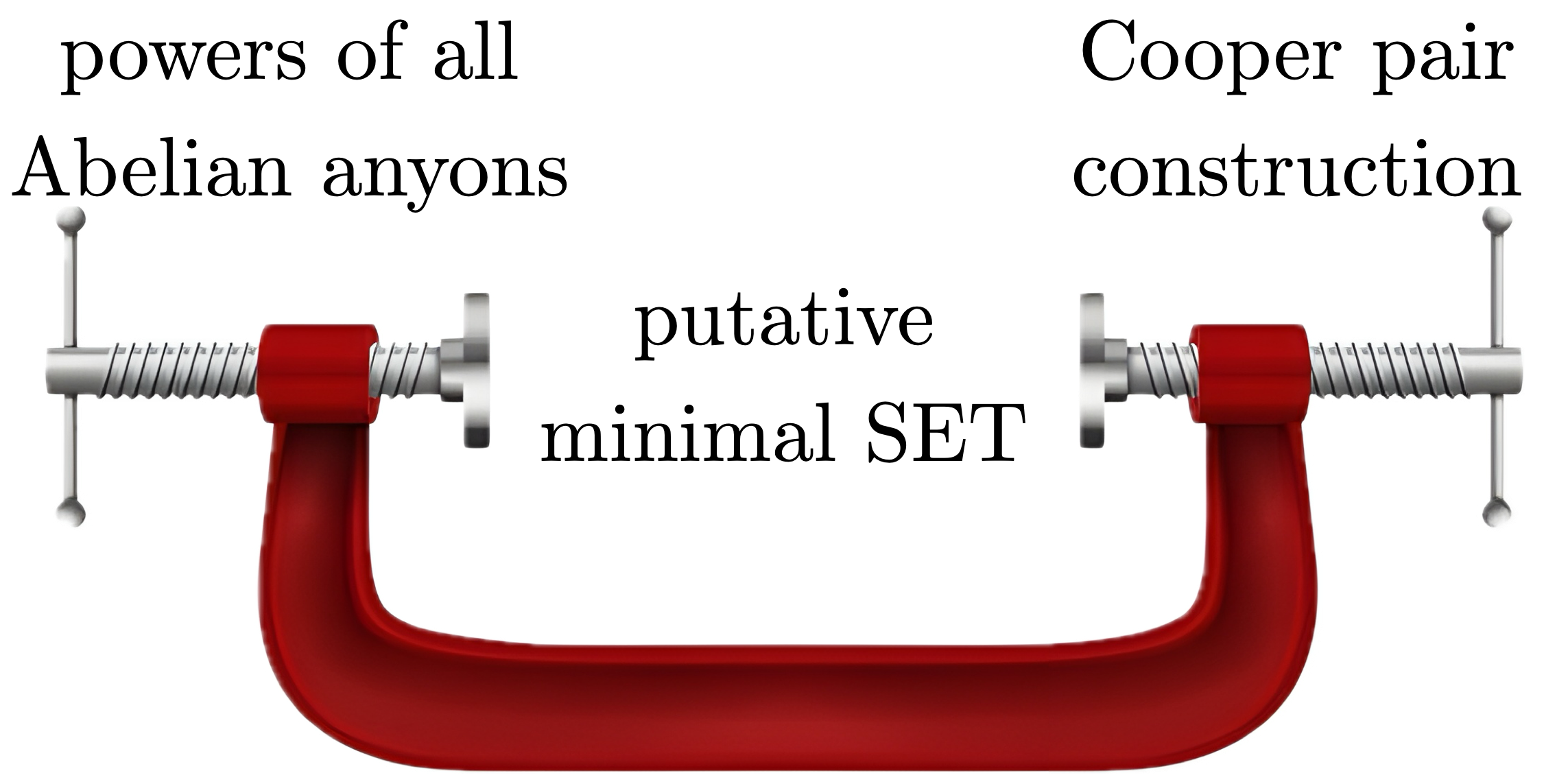}
    \captionsetup{justification=raggedright}
    \caption{Our proof strategy for demonstrating the number of anyons a minimal theory must have. Using Cooper pairing we construct a theory in SET$_{G,\nu}$. This theory's anyon count must upper bound the anyon count of any minimal SET. We further consider all possible powers of the Abelian anyons we know our theory must possess. This will lower-bound the anyon count. Any putative minimal SET will be caught in the vice between these two bounds.}
	\label{fig:vice}
\end{figure}

\subsection{Construction of a fermionic SET of $G$}
\label{sec:construction}

Let us briefly review the construction of a bosonic TO which preserves the symmetries of the full group $G$ in Eq.~\eqref{eqn:symm_gp}, when the system is bosonic. We know from the discussion in the last section that we must possess an Abelian background anyon $a$, with $Q_a = \nu$. A natural bosonic TO which has such a background anyon and preserves the symmetries of $G$ is $\mathbb{Z}_q$ gauge theory with the usual action of time reversal. It will be a consequence of our later arguments that this is in fact \textit{the minimal} bosonic TO that preserves the symmetries of $G$.
However, for even $q$, one can again see that the $\bb{Z}_q$ TO must have excitations with minimal fractional charge $2/q$, and thus cannot be a fermionic TO at $\nu = 1/q$ without breaking any symmetries.

We can again turn to Cooper pairing to construct fermionic TOs, as illustrated in Fig.~\ref{fig:TOs} and the previous section. The Cooper pairs are bosons at a filling $\nu/2 = 1/2q$ per unit cell. We then make a TO, $\sc{C}$, which is time-reversal and translation invariant by considering the $\bb{Z}_{2q}$ gauge theory\footnote{Stacked with $\{1,c\}$.} of the Cooper pairs. This state is a fermionic Abelian TO with a GSD of $|\sc{C}|/2 = 4q^2$.

We can achieve a smaller quantum dimension if $q$ is odd i.e., $q=2k+1$. Consider a band insulator which is made up of a state with one electron per site, and holes at a filling $\nu_h = 1-\nu = 2k/(2k+1)$. We then Cooper pair the holes to make a bosonic insulator at filling $\nu_h/2 = k/(2k+1)$. A symmetry preserving bosonic insulating state can then be made with a $\bb{Z}_{2k+1} = \bb{Z}_{q}$ gauge theory of these paired holes. Thus for $q$ odd we have found a fermionic TO with GSD $|\sc{C}|/2=q^2$.

While the use of Cooper pairs to construct these states may not seem energetically natural if the dominant interaction is repulsive, in Appendix \ref{app:parton_args} we provide an alternate construction of these TOs using a parton construction which may yield a more natural route. 

\subsection{Any minimal TO is Abelian}
\label{sec:minimal_Abelian}

Having constructed a possible fermionic TO via Cooper pairing, we now prove that this is a minimal order consistent with $G$ in Eq.~\eqref{eqn:symm_gp}. We do this by again following the strategy of Fig.~\ref{fig:vice}. We begin by reviewing the basic Abelian anyons we know our theory possesses.

First, we have a background anyon $a$ with charge $Q_a=1/q$. As a consequence, the order $n_a$ must be an integer multiple of $2q$. Here the order $n_a$ is defined as the smallest nonzero integer such that $a^{n_a}=1$.
 
We further know that there exists another Abelian anyon, $v$, nucleated by $2\pi$ flux threading. Since $\theta_{v,a}=2\pi/q$, the order of $v$ must be an integer multiple of $q$. 

Lastly, we know that $\theta_v = \pi \sigma_H \pmod{2\pi}$. Since our theory is time-reversal invariant the Hall conductivity must be zero and thus $\theta_v = 0$ i.e., $v$ is a charge-neutral boson. 

\subsubsection{If $q$ is odd then the minimal order must be Abelian with a GSD of $q^2$} 

We start with $q$ odd as a warm-up. By our earlier arguments we know that the subset of Abelian anyons $\sc{A}\subseteq \sc{C}$ must contain at least three distinct anyons $a,v,c\in \sc{A}$. We consider all possible powers of these anyons in $\sc{A}$; the size of $|\sc{A}|$ is lower-bounded by the number of their unique powers, as in Fig.~\ref{fig:vice}. We see that there are $2q^2$ Abelian anyons generated, proving that our earlier construction was minimal. Note again that we are counting the electron as an anyon.

One approach to showing the number of distinct powers of $a,v,c$ is $2q^2$ is by simply checking this manually. Consider the set $\sc{S}=\{a^kv^l|0\leq k,l\leq q-1\}$. Let us show that all anyons in this set are distinct. Otherwise, it means that there exists some $r,s\in\{0,1,\dots,q-1\}$ with $a^rv^s=1$ other than $r=s=0$. Since $v$ is a boson, $\theta_{a^rv^s,v}=2\pi r/q$, so we must have $r=0$. But then for $r=0$ we have $\theta_{v^s,a}=2\pi s/q$, which forces $s=0$, a contradiction. Our argument also shows that any anyon in this set braids nontrivially with at least one other anyon, so the electron $c\notin \sc{S}$. We thus conclude that the minimal number of anyons in $\sc{C}$ is at least $2q^2$, which is saturated by the $\bb{Z}_q$ toric code.

\subsubsection{If $q$ is even then the minimal order must be Abelian with a GSD of $4q^2$}

We use a very similar argument as above, but we need to be a little more careful since the upper bound on our minimal number of anyons is now $8q^2$, where we are again counting the electron as an anyon for bookkeeping purposes. In the main text we first make the simplifying assumption that translation does not permute anyons unlike, e.g., Wen's plaquette model \cite{wen_quantum_2003}. We show in Appendix \ref{app:non_abelian_results} that under this assumption the time-reversal invariance of our state will imply that $a=\sc{T}a$, and $a$ is thus a boson or fermion. We can then show that if $a=\sc{T}a$ there must be at least $8q^2$ Abelian anyons. In Appendix \ref{app:non_abelian_results} we further allow for the possibility that translation permutes anyons and hence $a$ may not be the same as $\sc{T}a$, but nonetheless show that there must be at least $8q^2$ Abelian anyons.

We begin by noting that our logic for the case of $q$ odd revealed that there are $2q^2$ unique anyons in the set $\sc{S} = \{v^ma^nc^k\mid 0\leq m,n\leq q-1, k=0,1\}\subseteq \sc{C}$. Furthermore, the only anyons in $\sc{S}$ which braid trivially with all other anyons in $\sc{S}$ are $1,c$. Now consider $a^q$. Since $a$ is a boson or fermion it must be the case that $a^q$ braids trivially with all powers of $a$. Moreover, since $a^q$ has integer charge, then it must braid trivially with all powers of $v$ as well. So then $a^q$ braids trivially with all anyons in $\sc{S}$. However, $a^q\neq 1$, because it has odd charge and the vacuum does not in a fermionic theory. Additionally $a^q\neq c$ since $a^q$ is a boson. So then we see $a^q\notin \sc{S}$. 

We can now conclude that there are $4q^2$ anyons contained in $\sc{S}'=\{v^ma^nc^k\mid 0\leq m\leq q-1, 0\leq n\leq 2q-1, k=0,1\}$. Suppose now that $a^{2q}\notin \sc{S}'$. Then there are at least $8q^2$ anyons in the powers of $a,v,c$. For our theory to be minimal we can have at most this many anyons, so this must be all the anyons in the theory. But this is clearly a contradiction, since $a^q\neq 1,c$ but braids trivially with all anyons. Thus we must have $a^{2q}\in \sc{S}'$. Since $a^{2q}$ is a boson that braids trivially with all other anyons in $\sc{S}'$ it must be the case that $a^{2q}=1$ or $a^q$. The latter is not possible since $a^q\neq 1$. So then $a^{2q}=1$.

Next we know that since $a^q\neq 1,c$ but braids trivially with all anyons in $\sc{S}'$ there must exist some $\gamma \in \sc{C}\setminus\sc{S}'$ which braids nontrivially with $a^q$. Furthermore, since $a^{2q}=1$, we must have that:
\begin{equation}
\theta_{a^q,\gamma} = \pi.
\end{equation}
{At this stage, the} anyon $\gamma$ can then either be Abelian or non-Abelian. Suppose that it is Abelian. Since $\gamma$ was not contained in the powers of $a,v,c$, we conclude that there are at least $8q^2$ unique powers of $a,\gamma,c$ and we are done.

Suppose then that $\gamma$ is non-Abelian. In Appendix \ref{app:non_abelian_results} we prove that $\gamma$ can be taken to have the fusion products:
\begin{equation}
\gamma \times \gamma = v+va^qc \text{ and } a^qc\times \gamma = \gamma.
\label{eqn:gamma_fusion_rules}
\end{equation}
We first sketch the argument for these fusion rules, with full proofs in Appendix \ref{app:non_abelian_results}. Since $\gamma$ has $\pi$ braiding with $a^q$ we can always choose $v$ to be in the fusion product of $\gamma$ with itself. But since $\gamma$ is non-Abelian there must be some other anyon in this fusion product. By minimality the only possibilities are powers of $v,a,c$. Then the only other possibility by charge assignment is $va^qc$. This gives us the first fusion rule up to multiplicities of the fusion products, which we show in Appendix \ref{app:non_abelian_results} must be one. The second fusion rule is a consequence of minimality as well; if it were not true there would be more than $8q^2$ anyons. Finally in Appendix \ref{app:non_abelian_results}, we use a result in Ref.~\cite{lapa_anomaly_2019} to show that time-reversal invariance will require that $\gamma$ have $\theta_\gamma = 0$.

Now we condense the vison $v$ to produce a TO with the three anyons $1,\gamma,a^qc$ that has the same fusion rules as the Ising TO, but with a non-Abelian anyon, $\gamma$, with $\theta_\gamma=0$. But all TOs with Ising-like fusion rules have $8\theta_\gamma=\pi$~ \cite{kitaev_anyons_2006}. We thus obtain a contradiction and conclude that $\gamma$ must be Abelian. As noted earlier we are now done.

We can reach the conclusion that $\gamma$ must be Abelian through a different logic that does not directly rely on knowing that $\theta_\gamma = 0$. Consider first the action of time reversal on $v$ and its powers. Clearly $v$ will transform to $v^{q-1}$. However, for even $q$, $v^{\frac{q}{2}}$ will map to itself. We can then ask if it is a Kramers singlet or a Kramers doublet. Both possibilities are allowed and correspond to different SET states. First consider the case where it is a Kramers singlet. Then the state obtained by condensing $v$ will preserve time reversal. However, as we have just seen, with a non-Abelian $\gamma$, this state will have the fusion rules of the Ising TO which is certainly not time-reversal invariant, and we have a contradiction. Next consider the case where $v^{\frac{q}{2}}$ is a Kramers doublet. Now time reversal will have a complicated action on $v$, and the state obtained by condensing $v$ will not be time-reversal invariant. Instead we first argue that for every TO where $v^{\frac{q}{2}}$ is a Kramers doublet, there is a partner TO where it is a Kramers singlet. To see this, we note that $v^{\frac{q}{2}}$ has a $\pi$ braiding phase with $a$. If $a$ is a boson or a Kramers doublet fermion, we form a topological insulator of these particles. As is well known \cite{lu_theory_2012}, this has the effect of toggling the $\pi$ flux seen by this particle between Kramers singlet and Kramers doublet. If instead $a$ is a Kramers singlet fermion, we repeat the same procedure with $ac$ which is a boson that has $\pi$ braiding with $v^{\frac{q}{2}}$. Importantly, this procedure will not alter the quantum dimension of $\gamma$. Thus since we already argued that a phase with non-Abelian $\gamma$ and a Kramers singlet $v^{\frac{q}{2}}$ does not exist, it follows that such a phase also cannot exist with a Kramers doublet $v^{\frac{q}{2}}$.

\subsection{Uniqueness of the minimal TO}
\label{sec:uniqueness_of_SET}

We have established that all minimal TOs consistent with the symmetry group $G$ of Eq.~\eqref{eqn:symm_gp} must be Abelian and have an anyon count of $8q^2 \ (2q^2)$ for $q$ even (odd). As mentioned earlier this means that the definitions of minimality by either quantum dimension or ground-state degeneracy will coincide i.e., $mG^{(\nu)}_{\sc{D}}=mG^{(\nu)}_{\rm GSD}$. We now show that our construction is always the \textit{unique} minimal TO. We also comment on whether there is a unique symmetry enrichment of this TO. Throughout we assume that translation does not permute anyons, so $a$ must be a boson or fermion. In Appendix \ref{app:uniqueness} we present more general proofs without relying on this assumption.

\subsubsection{The case of $q$ odd}

This case is very straightforward. We know that $a$ is a particle with charge $1/q$ and $a^q$ must braid trivially with everything. So then $a$ must be a fermion and $a^q=c$ by charge constraints. We further know that the vison is a charge-zero boson and that all anyons in the minimal theory are powers of these two. Since the braiding of $a$ and $v$ is given by $\theta_{a,v} = 2\pi/q$ the topological spin and braiding of all the anyons are then fully specified. It is simple to check that this will just be $\bb{Z}_q$ gauge theory. 

In fact, this theory will not just be the unique minimal TO, but the unique minimal SET as well. To see this note that the charge assignment is totally fixed by the charges of $a$ and $v$. It is easy to see that the time reversal acts as $\mathcal{T}v=v^{-1}, \mathcal{T}a=a$. Because $a^q=1$ and $q$ is odd, $a$ must be a Kramers singlet. Furthermore, translation on $v$ is projective, and the corresponding phase is fully fixed by the mutual braiding between $a$ and $v$. Finally, we note that translation action on $a$ cannot be projective. For a projective translation we have to place a background $v$ or some power of it at each plaquette center, as was considered in a time reversal breaking context in Ref.~\cite{song_translation-enriched_2022}. However such a background is not time-reversal invariant. Thus the translation action on $a$ (and its powers) is also totally fixed.

\subsubsection{The case of $q$ even}

When $q$ is even we know that $a$ is an order $2q$ boson or fermion with charge $1/q$. Furthermore, as discussed in the previous section there must exist some Abelian anyon $\gamma$ that has $\pi$ braiding with $a^q$ and all anyons in our theory must be made up of powers of $a,\gamma,c$. To fully specify the topological spin of our theory, we only need to know the topological spin of the Abelian $\gamma$. Without loss of generality we can assume that $\gamma^2 = v$, so $\theta_\gamma = 0, \pm \pi/2,  \pi$,\footnote{Note that earlier in Sec. \ref{sec:minimal_Abelian} we found $\theta_\gamma=0$ under the assumption that $\gamma$ was non-Abelian; having ruled out that scenario we now must determine $\theta_\gamma$ when $\gamma$ is Abelian.} and $\theta_{\gamma,a}=\pi/q$. We can further take $\gamma$ to have charge zero; if it did not we could consider $\gamma a^q$ which would have the same braiding with $a$ and would also square to the vison.

We can now use a result from Ref.~\cite{lapa_anomaly_2019} that
\begin{equation}
e^{2\pi i(c_{-}-\sigma_H)/8} = \frac{1}{\sqrt{2}\sc{D}}\sum_{b\in \sc{C}} d_b^2 e^{i(\theta_b + \pi Q_b)}
\end{equation}
for a $\U_f$ symmetry enriched fermionic TO. Since the theory is time-reversal invariant, both $c_-$ and $\sigma_H$ must vanish and the left-hand side will be equal to one. The right-hand side can then be evaluated using the fact that $4\theta_\gamma = 0$ and $\theta_{a,\gamma} = \pi/q$. The end result is
\begin{equation}
1 = \frac{1}{2}\left(1 - e^{i\theta_{a^q}} + e^{i\theta_\gamma} + e^{i(\theta_\gamma + \theta_{a^q})}\right),
\end{equation}
and since $a^q$ must be a boson this tells us that $\theta_\gamma = 0$. It is then simple to check that the braiding of the powers of $a,\gamma,c$ is just that of $\bb{Z}_{2q}$ gauge theory.

However, we note that in this case we have many choices of symmetry enrichment, as already alluded to in the previous section.  Without loss of generality take the $e$ particle in $\bb{Z}_{2q}$ gauge to have charge $1/q$ and the $m$ particle to be charge neutral. Since $a$ is allowed to be a charge $1/q$ boson or fermion there are two possibilities: $a=e$, or $a=em^q$. The vison in this theory will always be given by $v = m^2$. In addition since we saw that the charge-neutral $\gamma$ had to be a boson, then $\gamma = m$. It is also possible that $a$ or $m^q$, or both, may be a Kramers doublet since we have a Kramers singlet fermion.\footnote{In particular for Kramers singlet fermions with $\U_f \rtimes \bb{Z}_2^T$ symmetry, the anomaly structure \cite{wang_erratum_2015} does not prevent this possibility.} Besides these possibilities we will not attempt to classify all possible SETs in this case, but will leave it to future works.

\section{Discussion}
\label{sec:discussion}

Here we explored the possiblity of a gapped quantum charge liquid (QCL) at a fractional filling of the underlying lattice which nevertheless does not break translation or time-reversal symmetries. Such a QCL must display topological order (TO), and have excitations carrying fractional charge due to LSMOH constraints. Thus fractionalization can appear as an alternate to charge ordering at fractional lattice filling. Here we raised the question of the \textit{minimal} topological order that is consistent with the symmetries and filling constraints, and established rigorous results. We showed that a fermionic QCL will generically be more complicated than the bosonic QCL at the same filling. A simple illustration of our arguments is that a system of spinless fermions at $\nu = 1/2$ cannot form a $\bb{Z}_2$ TO, but can form a $\bb{Z}_4$ TO. Thus the change in the statistics of the particles at filling $\nu = 1/2$ must result in a dramatic shift in the TOs realized. Below we discuss a number of questions related to the physics of QCLs that naturally follow from our work.

\subsection{Microscopic models for fermionic QCLs}
It would be interesting to explore microscopic models that realize gapped fermionic QCLs. A natural strategy is to start with  existing models for bosonic QCLs and replace the bosons with fermions. Our results show that, if the model continues to fractionalize, there will be a drastic shift of the ground-state degeneracy on a torus. Concretely, one possible route to take is to study fermionic realizations of the bosonic quantum dimer \cite{moessner_quantum_2008} and Balents-Fisher-Girvin \cite{balents_fractionalization_2002} models. Both models can be mapped to hardcore boson models on the kagome lattice with a fixed cluster charge per kagome hexagon \cite{isakov_topological_2011}. They are at fillings $\nu = 1/2$ and $\nu = 3/2$ of the kagome unit cell, respectively. Thus the minimal bosonic TO they can realize is $\bb{Z}_2$ gauge theory, and both models realize this. Suppose that the hardcore boson is replaced with a spinless fermion. Our proof above demonstrates that the minimal TO that can be realized in such a system is $\bb{Z}_4$ gauge theory. More generally, it will thus be interesting to study fermionic models of charge frustration and search for gapped QCL insulating phases. 

Of particular interest in the context of moir\'{e} TMD materials is to explore the physics of microscopic models of these systems in an intermediate coupling regime where neither the Coulomb interaction nor the bandwidth is overwhelmingly strong. This can possibly be pursued with density-matrix renormalization-group methods \cite{kiely_continuous_2023, zhou_quantum_2024}. Some caution is required in reaching conclusions about experiments because there is still some uncertainty in the details of the microscopic models of these systems due to poorly understood effects of lattice relaxation.

\subsection{Phenomenology of QCLs} 
One could also ask, for instance in the TMD setting, about the physical properties of QCLs, should they occur, and their experimental  manifestation. Choosing a filling with many closely degenerate charge ordered states would promote charge fluctuations, possibly allowing us to realize the intervening QCL phase in this experimental setting. The charge order has been imaged directly in scanning tunneling microscopy experiments deep in the insulator \cite{li_imaging_2021}. If the QCL is present, then as the displacement field used to tune the phase diagram is increased, the charge order will vanish before the transition to the metal. Alternately, since, at rational $\nu$, the charge order  will be described by a discrete order parameter, there will be a finite-temperature phase transition which could be detected experimentally. If the corresponding transition temperature vanishes within the regime where the ground state is incompressible, then an intervening QCL phase must exist, as depicted in Fig.~\ref{fig:phase_diagram}.

\begin{figure}
    \centering
    \includegraphics[width=\columnwidth]{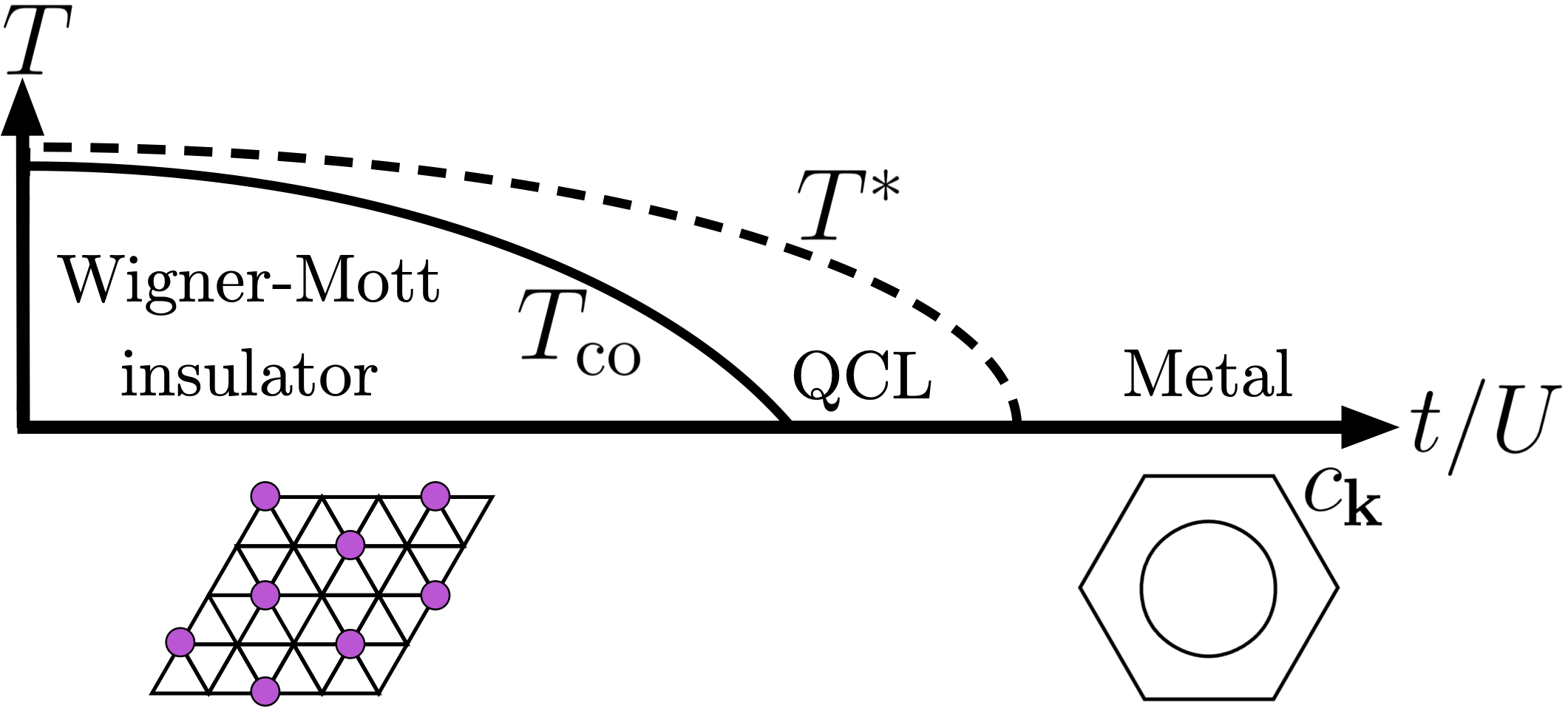}
    \captionsetup{justification=raggedright}
    \caption{A temperature $T$ vs bandwidth $t/U$ schematic phase diagram showing the possibility of an intervening QCL phase. For small bandwidths a Wigner-Mott insulator is stabilized at a fixed fractional $\nu$; here a cartoon of a possible $\sqrt{3}\times \sqrt{3}$ charge order is displayed. As the bandwidth is increased, the transition temperature to this phase, $T_{\mathrm{co}}$, vanishes. However, below a crossover temperature, $T^*$, the system continues to be nearly incompressible. Then, following our discussion, a gapped QCL must be stabilized at zero temperature, even though TO cannot be stabilized at finite temperature in $(2+1)$d. As the bandwidth is increased further the gapped QCL will transition into a compressible metal, shown by $T^*$ going to zero. The metal's Fermi surface is shown below the axis.}
	\label{fig:phase_diagram}
\end{figure}

The phase transitions at a fixed lattice filling out of the gapped QCL, assumed to be in the minimal $\bb{Z}_q$ or $\bb{Z}_{2q}$ TO, are also interesting. The transition between the QCL and the Wigner-Mott insulator is simply driven by condensing the $m$ particle of the gauge theory.\footnote{$m$ can be identified as the $v$ anyon in the odd-$q$ case, and the $\gamma$ anyon in the even-$q$ case in Sec. \ref{sec:uniqueness_of_SET}.} This confines the TO but, as $m$ has projective translation symmetry, will lead to a charge ordered insulator. The transition will be in the universality class known \cite{senthil_deconfined_2023} as $XY^*$ with known critical exponents.\footnote{Although not directly pertinent to the present context, interesting multicritical confinement transitions out of $\bb{Z}_n$ TO, where both $e$ and $m$ attempt to simultaneously condense, have been described recently \cite{shi_analytic_2024}.} The transition from the gapped QCL to the Fermi-liquid metal is mysterious though, and could be a target for future work.  

We could also ask about the effects of doping the QCL. A simple possibility is that the cheapest charged excitation in the insulator is the $e$-particle which will then have a nonzero density at a small doping.  If the $e$-particle is a boson, an allowed possibility in the minimal TO, then a natural doped state is a condensate of $e$, which is a superconductor that breaks the global $\U$ symmetry. The $e$ condensation will completely destroy all the anyons in the theory so that, as is known from other contexts, this state is smoothly connected to a BCS superconductor though it is stabilized by a different mechanism. Note that this superconductor will be fully gapped because it evolves out of a gapped insulator and will have odd parity pairing (because this is the only possibility for spinless fermions).

If, instead the $e$-particle is a fermion, a natural doped state is a non-Fermi-liquid metal where the charge carriers are ``orthogonal'' to the physical electrons, known as an ``orthogonal metal'' \cite{nandkishore_orthogonal_2012}. If the electron filling is changed to $\nu = \frac{p}{q} + \delta$, the doped $e$-particles are at a filling $q\delta$ due to their fractional charge. Thus the $e$-particles will form a Fermi surface with an area that is $q$ times larger than what would be inferred from the density of doped electrons. The increased Fermi momentum may be measured in a local experiment by looking for $2K_F$ signatures (Friedel oscillations) in the charge density around impurities. The fractional charge of the carriers in this orthogonal metal can also be directly evinced through shot noise measurements.

Finally, it is possible that despite the fractionalization in the QCL, the cheapest charged excitation is actually just the electron. Then upon doping the charge will just go in through doping electrons on top of the TO of the QCL. The resulting state will have a conventional Fermi surface of electrons with an area set by the doping density in the usual way. However, the TO of the QCL will remain unchanged, and we will have a fractionalized Fermi liquid \cite{senthil_fractionalized_2003} (FL$^*$) that violates the conventional Luttinger theorem. 

While other possibilities, e.g., a first-order transition to a conventional Fermi liquid, are not ruled out, these observations show that a QCL, if stabilized at the rational filling, can also lead to fascinating physics in a nearby doping range. 

We note that throughout our discussion we have assumed the electrons are spinless, but if the electron spin is not polarized then there are additional phenomenological possibilities. These include the possibility of phases which are both QCLs and QSLs.

\subsection{Theoretical questions on minimal TO} 
On the more theoretical front, one might also be able to apply the proof strategy we used to determine the minimal TO of other symmetry groups. We highlight a few situations where this question is pertinent. Electronic quantum Hall bilayers at a total Landau level filling $\nu = 1/2 + 1/2$ have long been studied, both in experiment and in theory. The standard fate is the development of interlayer coherent order (an exciton condensate) accompanied by an integer quantum Hall effect. References \cite{sodemann_composite_2017, potter_realizing_2017} discussed an alternate symmetric gapped phase, which could be stabilized for some range of microscopic parameters, that has $\bb{Z}_4$ TO. The global symmetry of this system has two $\U$s for each layer, as well as an anti-unitary symmetry that flips particles to holes in both the layers. Reference \cite{sodemann_composite_2017} further conjectured that the $\bb{Z}_4$ TO is minimal (in terms of GSD). A proof of this conjecture is accessible with our techniques~\cite{hallbootstrap}.

The quantum Hall bilayer system has an anomaly involving charge conservation and particle-hole symmetries \cite{sodemann_composite_2017,potter_realizing_2017}. More generally, the $(2+1)d$ surface state of a symmetry protected topological (SPT) phase in $(3+1)d$ has an anomalous symmetry realization and cannot be trivially gapped. In a large class of states, a symmetric gapped state with TO is, however, possible. What then is the minimal TO for a system with a global symmetry $G$ that has an anomaly?  

Another quantum Hall bilayer system where the minimality question arose recently is discussed in the paper \cite{jian_minimal_2024} which considered a system with a quantized fractional spin Hall response. The symmetry group is then a product of two $\U$s for each spin (equivalently valley) along with time reversal that interchanges them \cite{wen_cheshire_2024}. With time reversal, the authors of Ref.~\cite{jian_minimal_2024} were able to show that the minimal order \textit{according to quantum dimension} was also $\bb{Z}_4$ gauge theory at a fractional spin Hall coefficient of one half.\footnote{Although the $\bb{Z}_4$ gauge theory appears in both problems, there is a slight difference in the action of the anti-unitary symmetry in this system and the quantum Hall bilayer system considered in Refs.~\cite{sodemann_composite_2017, potter_realizing_2017}.} However, their proof leaves open the possibility that there might exist a non-Abelian SET with a smaller anyon count but larger quantum dimension. 
We suspect that some variant of the proofs used here can rule out such a possibility. Indeed, the proofs here should be useful to demonstrate the minimal size of the subset of Abelian anyons in any SET which respects an order two anti-unitary symmetry.

Finally, the concept and definition of minimal order raises some other interesting theoretical questions. First, to what extent should we expect minimal order to be realized in a physical system respecting a symmetry group $G$? Ultimately, whatever order is stabilized will depend sensitively on energetics. A recent example is illustrated by the numerical identification of nonminimal non-Abelian quantum spin Hall states in twisted MoTe$_2$ \cite{abouelkomsan_non-abelian_2024, jian_minimal_2024}. In this situation minimal order only seems to be energetically favored if pairing between spin species is sufficiently strong \cite{zhang_vortex_2024, zhang_non-abelian_2024}. However, it might nonetheless be the case that minimal order is generically favored in other situations, e.g., if the interactions are short-ranged in real space. Second, one could ask if there are other sensible classes of minimal order and what their relationship to $mG^{(\nu)}_{\sc{D}}$ and $mG^{(\nu)}_{\rm GSD}$ is. Finally, one could ask if it is possible to pick any TO in SET$_{G,\nu}$ and ``flow'' to a minimal order by successive condensation of anyons which do not break the symmetries $G$. If this is true then minimal orders could be regarded as the ``fixed points'' of all TOs in SET$_{G,\nu}$.

\acknowledgements 
We thank Chao-Ming Jian for a discussion about Ref.~\cite{jian_minimal_2024}, Fiona Burnell for a discussion about the possibility of non-Abelian background anyons, and Arkya Chatterjee, Mike Crommie, Hart Goldman, Ho Tat Lam, Kin Fai Mak, Salvatore Pace, Zhengyan ``Darius'' Shi, Daniel Spiegel, Ashvin Vishwanath, and Liujun Zou for their willingness to discuss the work and suggest references. We also thank Zhihuan Dong, Nandini Trivedi, Ruben Verresen, and Mike Zaletel for helpful discussions of a related upcoming work. S.M. is supported by the Laboratory for Physical Sciences. M.C. was supported by NSF grant DMR-1846109.  T.S. was supported by NSF grant DMR-2206305, and partially through a Simons Investigator Award from the Simons Foundation.  This work was also partly supported by the Simons Collaboration on Ultra-Quantum Matter, which is a grant from the Simons Foundation (Grant No. 651446, T.S.).

\appendix

\section{Results about minimal classes}
\label{app:minimal_classes}

To make this section maximally readable we define the minimal SET classes again.

\begin{definition}
Let SET$_{G,\nu}$ be the set of all SETs that respect a symmetry $G$ and fractional topological response $\nu$. Then define:
\begin{equation}
mG^{(\nu)}_{\sc{D}} = \{\sc{C}\in \text{SET}_{G,\nu} \mid \sc{D}_{\sc{C}}\leq \sc{D}_{\sc{C}'} \ \forall \sc{C}'\in \text{SET}_{G,\nu}\},
\end{equation}
and
\begin{equation}
mG^{(\nu)}_{\rm GSD} = \{\sc{C}\in \text{SET}_{G,\nu} \mid |\sc{C}|\leq |\sc{C}'| \ \forall \sc{C}'\in \text{SET}_{G,\nu}\},
\end{equation}
where $|\sc{C}|$ is the number of anyons in the TO $\sc{C}$.
\end{definition}

\begin{lemma}
Suppose that there exists some SET $\sc{C}\in mG^{(\nu)}_{\rm GSD}$ such that $\sc{A} = \sc{C}$ i.e., $\sc{C}$ is Abelian. Then every SET in $mG^{(\nu)}_{\sc{D}}$ is Abelian. In fact $mG^{(\nu)}_{\sc{D}}$ is precisely the subset of Abelian TOs in $mG^{(\nu)}_{\rm GSD}$.
\label{lem:minimal_class}
\end{lemma}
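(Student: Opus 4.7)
The engine of the proof is the inequality $|\mathcal{C}| \leq \mathcal{D}_{\mathcal{C}}^2$, which was stated in the excerpt and which holds with equality if and only if $\mathcal{C}$ is Abelian. The plan is to combine this single inequality with the defining minimization conditions for the two classes, applied to the Abelian witness $\mathcal{C}_0 \in mG^{(\nu)}_{\rm GSD}$ supplied by the hypothesis. No structural information about SETs is required beyond this, so the argument is essentially a short inequality chase; the only ``obstacle'' is making sure both inclusions are argued cleanly.

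First I would fix an Abelian $\mathcal{C}_0 \in mG^{(\nu)}_{\rm GSD}$, so that $|\mathcal{C}_0| = \mathcal{D}_{\mathcal{C}_0}^2$ and $|\mathcal{C}_0| \leq |\mathcal{C}'|$ for every $\mathcal{C}' \in {\rm SET}_{G,\nu}$. Then for an arbitrary $\mathcal{C}' \in mG^{(\nu)}_{\mathcal{D}}$ I would chain
\[
|\mathcal{C}_0| \;\leq\; |\mathcal{C}'| \;\leq\; \mathcal{D}_{\mathcal{C}'}^2 \;\leq\; \mathcal{D}_{\mathcal{C}_0}^2 \;=\; |\mathcal{C}_0|,
\]
where the first step uses GSD-minimality of $\mathcal{C}_0$, the second uses the key inequality, the third uses $\mathcal{D}$-minimality of $\mathcal{C}'$ together with $\mathcal{C}_0 \in {\rm SET}_{G,\nu}$, and the last is the Abelian equality for $\mathcal{C}_0$. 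All inequalities must therefore be equalities. The equality $|\mathcal{C}'| = \mathcal{D}_{\mathcal{C}'}^2$ forces $\mathcal{C}'$ to be Abelian, and the equality $|\mathcal{C}'| = |\mathcal{C}_0|$ gives $\mathcal{C}' \in mG^{(\nu)}_{\rm GSD}$. This shows $mG^{(\nu)}_{\mathcal{D}}$ is contained in the Abelian part of $mG^{(\nu)}_{\rm GSD}$, and in particular every element of $mG^{(\nu)}_{\mathcal{D}}$ is Abelian.

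For the reverse inclusion, I would take an arbitrary Abelian $\mathcal{C}' \in mG^{(\nu)}_{\rm GSD}$ and show $\mathcal{C}' \in mG^{(\nu)}_{\mathcal{D}}$. GSD-minimality gives $|\mathcal{C}'| = |\mathcal{C}_0|$, and since both are Abelian this upgrades to $\mathcal{D}_{\mathcal{C}'}^2 = |\mathcal{C}'| = |\mathcal{C}_0| = \mathcal{D}_{\mathcal{C}_0}^2$. Now for any competitor $\mathcal{C}'' \in {\rm SET}_{G,\nu}$, the same key inequality combined with GSD-minimality of $\mathcal{C}_0$ yields
\[
\mathcal{D}_{\mathcal{C}''}^2 \;\geq\; |\mathcal{C}''| \;\geq\; |\mathcal{C}_0| \;=\; \mathcal{D}_{\mathcal{C}'}^2,
\]
so $\mathcal{D}_{\mathcal{C}'} \leq \mathcal{D}_{\mathcal{C}''}$, proving $\mathcal{C}' \in mG^{(\nu)}_{\mathcal{D}}$. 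Combining the two directions gives the claimed characterization $mG^{(\nu)}_{\mathcal{D}} = \{\mathcal{C} \in mG^{(\nu)}_{\rm GSD} : \mathcal{C} \text{ Abelian}\}$, from which the statement that every SET in $mG^{(\nu)}_{\mathcal{D}}$ is Abelian is immediate.
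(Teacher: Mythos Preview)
Your proof is correct and follows essentially the same approach as the paper's own proof: both hinge on the inequality $|\mathcal{C}| \leq \mathcal{D}_{\mathcal{C}}^2$ (with equality iff Abelian) and chain it with the two minimality conditions to force equalities, then handle the reverse inclusion by the same inequality applied to an arbitrary competitor. The only cosmetic difference is notation ($\mathcal{C}_0$ versus $\mathcal{C}$ for the Abelian witness).
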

\begin{proof}
Consider $\sc{C}'\in mG^{(\nu)}_{\sc{D}}$, by definition we know that $\sc{D}_{\sc{C}'}^2 \leq \sc{D}_{\sc{C}}^2 = |\sc{C}|$ since $\sc{C}$ is Abelian. Since $|\sc{C}'|\leq \sc{D}_{\sc{C}'}^2$ we see that $|\sc{C}'|\leq \sc{D}_{\sc{C}'}^2\leq |\sc{C}|$. But since $\sc{C}\in mG^{(\nu)}_{\rm GSD}$ we must have $|\sc{C}|\leq |\sc{C}'|$. These inequalities are thus equalities. This means $\sc{D}_{\sc{C}'}^2 = |\sc{C}'|$, so $\sc{C}'$ is an Abelian TO. It also means that $|\sc{C}'|=|\sc{C}|$, so $\sc{C}'\in mG^{(\nu)}_{\rm GSD}$ as well.

On the other hand if $\sc{C}'\in mG^{(\nu)}_{\rm GSD}$ is Abelian, then we see that $\sc{D}_{\sc{C}'}^2 = |\sc{C}'|\leq |\sc{C}''|\leq \sc{D}_{\sc{C}''}^2$ for every $\sc{C}''\in SET_{G,\nu}$. Thus $\sc{C}'\in mG^{(\nu)}_{\sc{D}}$. We therefore conclude that $mG^{(\nu)}_{\sc{D}}$ is precisely the subset of Abelian TOs in $mG^{(\nu)}_{\rm GSD}$.
\end{proof}

\begin{remark}
Note that an obvious corrollary of this is that if $mG^{(\nu)}_{\rm GSD}$ contains only Abelian TOs, then $mG^{(\nu)}_{\rm GSD} = mG^{(\nu)}_{\sc{D}}$.
\end{remark}

\begin{remark}
The opposite direction is not true i.e., it may be the case that $mG^{(\nu)}_{\sc{D}}$ contains an Abelian SET, but every SET in $mG^{(\nu)}_{\rm GSD}$ is non-Abelian. An example of this is fermionic TOs with $\U_f$ symmetry, with the requirement that the Hall condutivity $\sigma_H=1/2$. The set $mG^{(\sigma_H)}_{\sc{D}}$ contains multiple theories including the Moore-Read state, $\U_8$ and others, all with $\sc{D}^2=16$. However, one can show that all theories in $mG^{(\sigma_H)}_{\rm GSD}$ have anyon count 12, saturated by the Moore-Read state.
Moreover, as the next proof will make clear, all of these theories must be non-Abelian. 
\end{remark}

\begin{corollary}
Suppose that there exists some SET $\sc{C}\in mG^{(\nu)}_{\sc{D}}$ such that $\sc{D}_{\sc{C}}^2>|\sc{C}|$ i.e., $\sc{C}$ is non-Abelian. Then every SET in $mG^{(\nu)}_{\rm GSD}$ is non-Abelian.
\end{corollary}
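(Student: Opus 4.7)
The plan is to recognize this corollary as the logical contrapositive of Lemma~\ref{lem:minimal_class}, so the argument reduces to a one-line deduction. I would proceed by contradiction: suppose some $\sc{C}' \in mG^{(\nu)}_{\rm GSD}$ is Abelian. Then the hypothesis of Lemma~\ref{lem:minimal_class} is satisfied, which forces every SET in $mG^{(\nu)}_{\sc{D}}$ to be Abelian. In particular, the fixed $\sc{C} \in mG^{(\nu)}_{\sc{D}}$ would then satisfy $|\sc{C}| = \sc{D}_{\sc{C}}^2$, contradicting the corollary's hypothesis $\sc{D}_{\sc{C}}^2 > |\sc{C}|$. Hence no SET in $mG^{(\nu)}_{\rm GSD}$ can be Abelian.

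As an alternative direct presentation, I would unfold this into the underlying chain of inequalities without black-boxing Lemma~\ref{lem:minimal_class}. For any $\sc{C}' \in mG^{(\nu)}_{\rm GSD}$, GSD minimality of $\sc{C}'$ gives $|\sc{C}'| \le |\sc{C}|$, the hypothesized non-Abelianness of $\sc{C}$ gives $|\sc{C}| < \sc{D}_{\sc{C}}^2$, and quantum-dimension minimality of $\sc{C}$ gives $\sc{D}_{\sc{C}}^2 \le \sc{D}_{\sc{C}'}^2$. Composing these yields the strict inequality $|\sc{C}'| < \sc{D}_{\sc{C}'}^2$, which is precisely the statement that $\sc{C}'$ contains at least one non-Abelian anyon.

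The only real subtlety is bookkeeping: one must propagate the single strict inequality (coming from the non-Abelianness of $\sc{C}$) through two weak minimality inequalities and confirm that strictness survives. Beyond this, no additional input from the theory of SETs or from the group $G$ is required, so I do not anticipate any obstruction in writing out the proof formally; the corollary is essentially a tautological repackaging of Lemma~\ref{lem:minimal_class} combined with the universal bound $|\sc{C}''| \le \sc{D}_{\sc{C}''}^2$ that is saturated exactly on Abelian TOs.
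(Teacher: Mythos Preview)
Your proposal is correct and matches the paper's own proof, which consists of the single line ``This follows from the negation of Lem.~\ref{lem:minimal_class}.'' Your contrapositive argument is exactly this, and your alternative direct chain of inequalities simply unpacks the proof of Lemma~\ref{lem:minimal_class} in place.
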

\begin{proof}
This follows from the negation of Lem.~\ref{lem:minimal_class}.
\end{proof}

\section{Parton construction}
\label{app:parton_args}

The use of Cooper pairs to construct the states in Section \ref{sec:construction} may not seem energetically natural if the dominant interaction is repulsive. However alternate constructions of states with the same SET exist. We consider a standard parton decomposition of the physical spinless fermion annihilation operator $c_r$ at each lattice site $r$
\begin{equation}
 c_r = f_r \Phi_r
\end{equation}
Here $f_r$ and $\Phi_r$ are fermionic and bosonic partons. We take the boson to have the physical electric charge and the fermions to be electrically neutral. As usual these partons are coupled to a dynamical $\U$ gauge field. Both $f$ and $\Phi$ will be at the same mean filling $\nu$ as the electrons. Now consider a state where the bosons have formed a gapped $\bb{Z}_q$ topologically ordered state. A simple option for the fermions is that they form a Fermi surface. The resulting state is a gapless incompressible insulator with a neutral Fermi surface which is closely analogous to the spinon Fermi surface of spinful electronic Mott insulator at half filling. It is readily seen that the low-energy theory of this state is described by a field theory with the structure
\begin{equation}
{\cal L} = {\cal L}[f, a] + \frac{1}{2\pi} (a + A)  db - \frac{q}{2\pi} bdc
\end{equation}
Here ${\cal L}[f,a]$ is the Lagrangian of a Fermi surface coupled to a dynamical $\U$ gauge field\footnote{To be  precise, we should  regard $a$ as a spin$_c$ connection to signify that fields that couple to it with unit charge are fermions. The internal gauge fields $b$ and $c$ are ordinary $\U$ gauge fields while the background gauge field $A$ is a spin$_c$ connection.} $a$. This theory is coupled to a topological sector (described by the internal $\U$ gauge fields $b$ and $c$) associated with the $\bb{Z}_q$ gauge theory formed out of the $\Phi$ partons. 

We can now use this gapless theory as a parent state to construct gapped topological states. The simplest option is to imagine pairing the $f$-fermions. The resulting state is described by a Topological Quantum Field Theory (TQFT) with the action 
\begin{equation}
{\cal L} = -\frac{\hat{b} da}{\pi} + \frac{1}{2\pi} (a + A)  db - \frac{q}{2\pi} bdc \label{eqn:TQFT}
\end{equation}
where $\hat{b}$ is a new dynamical $\U$ gauge field.\footnote{$\hat{b}$ is an ordinary gauge field and not a spin$_c$ connection.} This TQFT actually is exactly the same phase as the $\bb{Z}_{2q}$ gauge theory constructed earlier out of Cooper pairs. We now demonstrate this.

The simplest way to see this is to integrate out the gauge field $a$, leaving a constrain $b=2\hat{b}$. Then we find
\begin{equation}
    \sc{L}=- \frac{2q}{2\pi}\hat{b}dc-\frac{2}{2\pi}Ad\hat{b}.
\end{equation}
This is exactly the Lagrangian for the $\bb{Z}_{2q}$ gauge theory formed out of Cooper pairs with physical electrical charge $2$.  

This parton construction does not give us the $\bb{Z}_q$ gauge theory that we argue is possible at odd $q$. However, we can access that state by a different parton construction. We illustrate this at a filling $\nu = 1/3$.  We (schematically) write the fermion as a product of 3 other fermions:
\begin{equation}
c_r = d_{1r}d_{2r}d_{3r}
\end{equation}
We assign physical charge $1/3$ to each of these 3 fermions. This representation comes with an $\mathrm{SU}(3)$ gauge redundancy. Now consider a `Higgs' condensate where 
\begin{equation}
T_{ab} = \langle d^\dagger_{ar} d_{br} \rangle ~~\neq 0
\end{equation}
such that the $\mathrm{SU}(3)$ is broken down to $\bb{Z}_3$. Then the low-energy theory will be a deconfined $\bb{Z}_3$ gauge theory where the elementary gauge charge is a fermion, and the gauge flux is a boson. The Higgs condensation also implies that the three fermion species $d_a$ (with $a = 1,2,3$) can all be identified with each other, and therefore there is just a single fermion species. These $d$-fermions are then at integer filling and can form a band insulator. Thus we get a gapped deconfined $\bb{Z}_3$ gauge theory with fermionic charges that carry physical electric charge $1/3$.  

The state just described is actually identical to the topologically ordered state constructed out of Cooper pairs of holes earlier. To see this note that we can always bind the local fermion to flip the statistics of the  $\bb{Z}_3$ gauge charge between boson and fermion. Thus if we start with the $\bb{Z}_3$ gauge theory with fermionic charges with physical charge $1/3$, binding the local fermion gives us a theory where the $\bb{Z}_3$ charge is a boson, and carries physical charge $-2/3$. The square of this is another bosonic anyon with physical charge $-4/3$, and the local excitation created by fusing three $\bb{Z}_3$ charges is just a charge $-2$ Cooper pair. Thus we get precisely the $\bb{Z}_3$ TO of the Cooper pair of holes. 

\section{Results about Abelian TOs}
\label{app:abelian_results}

In this section we prove a number of results about Abelian TOs. Most of the proofs rely only on basic facts from group theory i.e., the first isomorphism theorem and Lagrange's theorem. They are as self-contained as possible. We later find these proofs useful even when we consider non-Abelian TOs $\sc{C}$ because we will be able to make arguments to apply them to the subset of Abelian anyons $\sc{A}\subseteq \sc{C}$. Moreover, once we have proven that all minimal TOs are Abelian all of these results will be able to be used in full to constrain the possible Abelian TOs.

The structure of this appendix is as follows. In Appendix \ref{app:sub_condensation} we introduce our group-theoretic approach by discussing condensation in an Abelian TO. In Appendix \ref{app:sub_time} we introduce a single symmetry, time reversal, and give some results about the size of a time reversal invariant Abelian TO. In Appendix \ref{app:sub_charge_time} we introduce a second symmetry, and consider Abelian SETs of $\U_f\rtimes \bb{Z}_2^T$.

\subsection{Results about Abelian TO and anyon condensation}
\label{app:sub_condensation}

Let $\sc{A}$ be a (bosonic or fermionic) parent Abelian TO. We suppose that there is a boson $v\in \sc{A}$ which has order $n_v$. We then want to form a child TO $\sc{A}'$ from $\sc{A}$ by condensing $v$. We give a more precise definition of condensation
and show that $|\sc{A}| = n_v^2|\sc{A}'|$. Much of this section follows ideas discussed in Ref.~\cite{burnell_anyon_2018}, but we reformulate them in terms of group-theoretic language to build fluency. The tools developed here will prove useful once we introduce symmetries. 

We now give a number of definitions. We have implicitly used some of them before, but we restate every necessary fact here for completeness.

\begin{definition}
The braiding phase $\theta \colon \sc{A}\times \sc{A}\rightarrow \bb{S}^1$ is a symmetric bilinear form i.e., it has the properties that 
\begin{equation}
\theta_{ab,d} = \theta_{a,d}+\theta_{b,d} \text{ and } \theta_{a,b} = \theta_{b,a}
\end{equation}
for all $a,b,d\in \sc{A}$. Additionally $\theta_{a,a}=2\theta_a$, where $\theta_a$ is the topological spin of $a\in \sc{A}$.
\end{definition}

\begin{definition}
We call $\sc{A}$ a \textit{bosonic} TO if braiding is nondegenerate on $\sc{A}$ i.e., if $\theta_{a,b} = 0$ for all $b\in \sc{A}$ implies that $a=1$, where $1$ is a boson. We call $\sc{A}$ a \textit{fermionic} TO if $\theta_{a,b} = 0$ for all $b\in \sc{A}$ implies that $a=1,c$ where $c^2=1$ and $\theta_c=\pi$ i.e., if braiding is nondegenerate up to the fundamental fermion.
\end{definition}

\begin{definition}
Let $\sc{A}$ be an Abelian TO. Then take $\sc{S}\subseteq \sc{A}$ to be a subgroup of $\sc{A}$. We define the \textit{complement} of $\sc{S}$ to be given by
\begin{equation}
\sc{S}^\perp = \{a \in \sc{A} \mid \theta_{a,b} = 0 \ \forall b\in \sc{S}\} \subseteq \sc{A}.
\end{equation}
\end{definition}

\begin{remark}
We note that if $a,d\in \sc{S}^\perp$ then $ad\in \sc{S}^\perp$ since $\theta_{ad,b} = \theta_{a,b}+\theta_{d,b} =0$ for all $b\in \sc{S}$. Thus $\sc{S}^\perp$ is a subgroup. In particular we note that if $\sc{A}$ is a fermionic TO, then $\{1,c\}\subseteq \sc{S}^\perp$ will be a normal subgroup of $\sc{S}^\perp$.

We further note that $\sc{S}^\perp$ looks very similar to the complement of a vector space equipped with an inner product, with the braiding phase playing the role of the inner product. However, it can behave differently. For example if we take $\sc{A}$ to be the usual $\bb{Z}_2$ gauge theory and $\sc{S} = \{1,e\} \subseteq \sc{A}$, then we can see that $\sc{S}^\perp = \sc{S}$. It will never be the case that the complement of a nontrivial vector space is equal to the vector space, but this behavior is allowed for the complement as we define it here.
\end{remark}

\begin{theorem}
If $\sc{A}$ is a bosonic Abelian TO then for any subgroup $\sc{S}\subseteq \sc{A}$ we must have that:
\begin{equation}
|\sc{A}| = |\sc{S}||\sc{S}^\perp|.
\end{equation}
\label{thm:character_theory}
\end{theorem}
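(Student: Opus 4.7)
The plan is to view the braiding phase $\theta$ as endowing $\sc{A}$ with the structure of a nondegenerate symmetric bilinear form valued in $\bb{S}^1 = \bb{R}/2\pi\bb{Z}$, and then to exploit Pontryagin duality of finite abelian groups. Concretely, I would define the map
\begin{equation}
\phi \colon \sc{A} \to \hat{\sc{A}} \defeq \mathrm{Hom}(\sc{A}, \bb{S}^1), \qquad \phi(a)(b) = \theta_{a,b},
\end{equation}
which is a group homomorphism because $\theta$ is bilinear. The nondegeneracy condition built into the definition of a bosonic Abelian TO says exactly that $\ker \phi = \{1\}$, so $\phi$ is injective. Since $\sc{A}$ is finite, the standard fact that $|\hat{\sc{A}}| = |\sc{A}|$ then upgrades $\phi$ to an isomorphism $\sc{A} \cong \hat{\sc{A}}$.

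Next I would identify $\sc{S}^\perp$ with a natural subgroup of $\hat{\sc{A}}$. Unpacking the definition, $a \in \sc{S}^\perp$ iff $\phi(a)$ restricts to the trivial character on $\sc{S}$, so $\phi(\sc{S}^\perp) = \mathrm{Ann}(\sc{S}) \defeq \{\chi \in \hat{\sc{A}} \mid \chi|_\sc{S} = 0\}$, and hence $|\sc{S}^\perp| = |\mathrm{Ann}(\sc{S})|$. Now consider the restriction homomorphism
\begin{equation}
r \colon \hat{\sc{A}} \to \hat{\sc{S}}, \qquad r(\chi) = \chi|_\sc{S},
\end{equation}
whose kernel is precisely $\mathrm{Ann}(\sc{S})$. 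The key step is to argue that $r$ is surjective, so that the first isomorphism theorem gives $\hat{\sc{A}}/\mathrm{Ann}(\sc{S}) \cong \hat{\sc{S}}$, i.e.\ $|\mathrm{Ann}(\sc{S})| = |\hat{\sc{A}}|/|\hat{\sc{S}}| = |\sc{A}|/|\sc{S}|$. Combining with $|\sc{S}^\perp| = |\mathrm{Ann}(\sc{S})|$ gives the desired identity $|\sc{A}| = |\sc{S}||\sc{S}^\perp|$.

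The main obstacle is that surjectivity of $r$. The clean way to dispatch it is to note that $\bb{S}^1$ is divisible as an abelian group (every element has $n$th roots), and so $\mathrm{Hom}(-,\bb{S}^1)$ sends injections of abelian groups to surjections; applied to the inclusion $\sc{S} \hookrightarrow \sc{A}$ this gives exactly the surjectivity of $r$. If one prefers a more elementary route, one can instead induct on the order of $\sc{A}/\sc{S}$: pick any $a \in \sc{A}\setminus \sc{S}$ of minimal order $m$ modulo $\sc{S}$, use that $a^m \in \sc{S}$ has a specified character value that admits an $m$th root in $\bb{S}^1$, and extend $\chi$ to the subgroup $\langle \sc{S}, a\rangle$; iterate. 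Either way, the surjectivity is immediate and the counting argument closes.

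Once this lemma is in place, I expect it to be the workhorse for everything downstream: subgroup condensations will quotient by $\langle v\rangle$ and pass to its complement, and the identity $|\sc{A}| = n_v^2 |\sc{A}'|$ promised at the start of Appendix~\ref{app:sub_condensation} will follow by applying the theorem to $\sc{S} = \langle v\rangle$ and interpreting the resulting condensed theory as $\sc{S}^\perp / \sc{S}$.
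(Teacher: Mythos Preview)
Your proof is correct and follows essentially the same route as the paper: both use the isomorphism $\sc{A}\cong\hat{\sc{A}}$ induced by the nondegenerate braiding, then apply the first isomorphism theorem to the restriction map $\hat{\sc{A}}\to\hat{\sc{S}}$ with kernel $\mathrm{Ann}(\sc{S})\cong\sc{S}^\perp$. Your treatment of the surjectivity of the restriction (via divisibility of $\bb{S}^1$ or explicit extension) is in fact more careful than the paper's, which simply asserts the inclusion $\mathrm{Hom}(\sc{S},\bb{S}^1)\subseteq\mathrm{Hom}(\sc{A},\bb{S}^1)$.
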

\begin{proof}
We use a result from character theory to establish this. First we recall that a character $\chi$ of a finite Abelian group $\sc{A}$ is a homomorphism $\chi \colon \sc{A}\rightarrow \bb{S}^1$. The set of characters of $\sc{A}$ is labeled as $\mathrm{Hom}(\sc{A},\bb{S}^1)$, where this can be treated as an Abelian group under pointwise addition i.e., $(\chi_1+ \chi_2)(a) = \chi_1(a)+\chi_2(a)$ for $\chi_1, \chi_2 \in \mathrm{Hom}(\sc{A},\bb{S}^1)$ and $a\in \sc{A}$. A result from character theory tells us that the number of distinct characters of a finite Abelian group $\sc{A}$ is precisely $|\sc{A}|$ \cite{conrad_characters_nodate}. 

We know that $\theta_{a,\cdot} \colon \sc{A}\rightarrow \bb{S}^1$ is a homomorphism for every $a\in \sc{A}$. Furthermore, because $\sc{A}$ is a bosonic TO the braiding phase is nondegenerate. Thus if $a,b\in \sc{A}$ and $a\neq b$ then there exists some $d\in \sc{A}$ such that $\theta_{a,d}\neq \theta_{b,d}$. Then we see that $\theta_{a,\cdot} = \theta_{b,\cdot}$ as an element of $\mathrm{Hom}(\sc{A},\bb{S}^1)$ if, and only if, $a=b$. This means that we have found $|\sc{A}|$ distinct elements of $\mathrm{Hom}(\sc{A},\bb{S}^1)$, and by the logic above we see that for every character $\chi \in \mathrm{Hom}(\sc{A},\bb{S}^1)$ there exists one, and only one, $a\in \sc{A}$ such that $\chi = \theta_{a,\cdot}$.

We now want to consider a function $B\colon \sc{A} \rightarrow \mathrm{Hom}(\sc{S},\bb{S}^1)$ i.e., $B(a) \colon \sc{S} \rightarrow \bb{S}^1$ is a homomorphism for every $a\in \sc{A}$. We define it as $B(a)(b) = \theta_{a,b}$, where $b$ is restricted to lie in the subgroup $\sc{S}$. We note that $B(ad)=B(a)+B(d)$ for $a,d\in \sc{A}$ so $B$ is a homomorphism from $\sc{A}$ to $\mathrm{Hom}(\sc{S}, \bb{S}^1)$. Now suppose that $a \in \mathrm{Ker}(B)$, then $B(a)(b) = 0$ for all $b \in \sc{S}$. But then we see that $\mathrm{Ker}(B)\simeq \sc{S}^\perp$. Now consider $\chi \in \mathrm{Hom}(\sc{S},\bb{S}^1)\subseteq \mathrm{Hom}(\sc{A},\bb{S}^1)$, then by the logic above we know that there is some $a\in \sc{A}$ such that $\chi = B(a)$. Thus $\mathrm{Im}(B) \simeq \mathrm{Hom}(\sc{S},\bb{S}^1)$. The first isomorphism theorem tells us that
\begin{equation}
\sc{A}/\sc{S}^\perp \simeq \mathrm{Hom}(\sc{S},\bb{S}^1).
\end{equation}
From character theory we know that $|\mathrm{Hom}(\sc{S},\bb{S}^1)| = |\sc{S}|$ and the theorem follows immediately.
\end{proof}

\begin{example}
One simple example of this is the usual $\bb{Z}_2$ gauge theory. If $\sc{S} = \{1,e\}\subseteq \sc{A}$ then $\sc{S}^\perp = \sc{S}$. Clearly $|\sc{A}| = 4 = |\sc{S}||\sc{S}^\perp|$.
\end{example}

\begin{example}
Now consider the Abelian bosonic TO $\U_q\boxtimes \U_{-q}$ for $q$ even. If we take $\sc{S}$ to be $\U_q$ then we see that $\sc{S}^\perp$ will be $\U_{-q}$ and we obviously have that $|\sc{S}||\sc{S}^\perp| = q^2 = |\sc{A}|$.
\end{example}

\begin{lemma}
If $\sc{A}$ is a fermionic Abelian TO and $\sc{S}\subseteq \sc{A}$ is a subgroup then we have that:
\begin{equation}
|\sc{A}|=|\sc{S}||\sc{S}^\perp|\times \begin{cases} 1 &\mbox{if } c\notin \sc{S}\\ 1/2 &\mbox{if } c\in \sc{S}\end{cases}.
\end{equation}
\label{lem:fermion_char_theory}
\end{lemma}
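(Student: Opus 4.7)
The approach is to mimic the proof of Theorem~\ref{thm:character_theory}, but account for the fact that in the fermionic setting the braiding form is degenerate, with radical $\{1,c\}$. Define the group homomorphism $B\colon \sc{A}\to \mathrm{Hom}(\sc{S},\bb{S}^1)$ by $B(a)(b)=\theta_{a,b}$. Its kernel is exactly $\sc{S}^\perp$, so by the first isomorphism theorem $|\sc{A}|/|\sc{S}^\perp|=|\mathrm{Im}(B)|$, and the task reduces to computing $|\mathrm{Im}(B)|$.

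Since $\theta_{a,c}=0$ for every $a\in\sc{A}$, every character $B(a)$ is automatically trivial on $c$ whenever $c\in\sc{S}$. Thus $\mathrm{Im}(B)$ is contained in the subgroup
\begin{equation}
\sc{S}^c=\{\chi\in\mathrm{Hom}(\sc{S},\bb{S}^1)\mid \chi(c)=0 \text{ if } c\in\sc{S}\},
\end{equation}
which has order $|\sc{S}|$ when $c\notin\sc{S}$ and $|\sc{S}|/2$ when $c\in\sc{S}$ (since in the latter case $\sc{S}^c=\mathrm{Hom}(\sc{S}/\langle c\rangle,\bb{S}^1)$). The main step is to show that $B$ surjects onto $\sc{S}^c$; granting this, the claimed counts follow immediately from $|\sc{A}|=|\sc{S}^\perp|\cdot|\sc{S}^c|$.

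To establish the surjectivity, I would argue as follows. First, the image of the map $\sc{A}\to\mathrm{Hom}(\sc{A},\bb{S}^1)$ sending $a\mapsto\theta_{a,\cdot}$ has kernel $\sc{A}^\perp=\{1,c\}$ by the fermionic nondegeneracy assumption, hence has order $|\sc{A}|/2$. On the other hand, this image is contained in the subgroup of characters of $\sc{A}$ that vanish on $c$, which is $\mathrm{Hom}(\sc{A}/\{1,c\},\bb{S}^1)$ and also has order $|\sc{A}|/2$. Thus every character of $\sc{A}$ vanishing on $c$ arises as some $\theta_{a,\cdot}$. Now given any $\chi\in\sc{S}^c$, I extend $\chi$ to a character $\tilde\chi$ of $\sc{A}$ by first extending to the subgroup generated by $\sc{S}$ and $c$: if $c\in\sc{S}$ there is nothing to do and $\tilde\chi(c)=\chi(c)=0$ automatically, while if $c\notin\sc{S}$ we set $\tilde\chi(c)=0$ on the internal direct product $\sc{S}\oplus\{1,c\}$ and then extend arbitrarily to all of $\sc{A}$ (extensions from subgroups always exist for finite abelian groups). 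The extension $\tilde\chi$ vanishes on $c$, so by the previous sentence $\tilde\chi=\theta_{a,\cdot}$ for some $a\in\sc{A}$, and $B(a)=\chi$.

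The only subtle point—really the heart of the argument—is keeping track of whether $c\in\sc{S}$ or not, because this is what determines both the size of $\sc{S}^c$ and whether the extension procedure imposes a constraint. Everything else is standard character theory on finite abelian groups combined with the bookkeeping that $c$ is transparent. An alternative route, slightly cleaner if Theorem~\ref{thm:decomp} is taken as given, is to write $\sc{A}=\sc{A}_b\boxtimes\{1,c\}$, project $\sc{S}$ to $\sc{S}_b\subseteq\sc{A}_b$ (noting $|\sc{S}_b|=|\sc{S}|$ or $|\sc{S}|/2$ in the two cases), observe $\sc{S}^\perp=\sc{S}_b^\perp\times\{1,c\}$ because $c$ braids trivially, and apply Theorem~\ref{thm:character_theory} to $\sc{A}_b$; the two cases then reduce to simple arithmetic yielding $|\sc{A}|=|\sc{S}||\sc{S}^\perp|$ or $|\sc{A}|=|\sc{S}||\sc{S}^\perp|/2$ respectively.
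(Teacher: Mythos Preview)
Your main argument is correct but takes a different route from the paper. The paper passes to the quotient $\sc{A}_b=\sc{A}/\{1,c\}$, on which the braiding becomes nondegenerate, and then invokes Theorem~\ref{thm:character_theory} directly, tracking how $|\sc{S}|$, $|\sc{S}^\perp|$, and $|\sc{A}|$ change under the quotient in each of the two cases. You instead stay inside $\sc{A}$ and compute $|\mathrm{Im}(B)|$ by identifying it with the subgroup $\sc{S}^c$ of characters that kill $c$, proving surjectivity via a character-extension argument together with the observation that every character of $\sc{A}$ vanishing on $c$ is realized as some $\theta_{a,\cdot}$. Your approach is more self-contained in that it never leaves the original group, at the cost of needing the (standard) extension lemma for characters of finite abelian groups; the paper's approach is more modular, reducing cleanly to the bosonic case. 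Your ``alternative route'' at the end is essentially the paper's proof, just phrased via the splitting $\sc{A}=\sc{A}_b\boxtimes\{1,c\}$ from Theorem~\ref{thm:decomp} rather than the quotient; note that the paper avoids invoking Theorem~\ref{thm:decomp} here since that result comes later and is borrowed from another reference.
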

\begin{proof}
We prove this by cases. Suppose that $c\notin \sc{S}$. This means that we can freely quotient by $\{1,c\}$ to obtain a bosonic TO $\sc{A}_b$ with $\sc{S}\subseteq \sc{A}_b$. Then using Thm.~\ref{thm:character_theory} we see that $|\sc{A}_b| = |\sc{S}||\sc{S}^\perp_b|$, where $\sc{S}^\perp_b = \{a \in \sc{A}_b \mid \theta_{a,b} = 0 \ \forall b\in \sc{S}\}\subseteq \sc{A}_b$. For every $a\in \sc{S}^\perp_b$ we must have that $ac$ also braids trivially with every element of $\sc{S}$, and thus $\sc{S}^\perp = \sc{S}^\perp_b\times \{1,c\}$. We thus see that $|\sc{A}| = 2|\sc{A}_b|$ and $|\sc{S}^\perp| = 2|\sc{S}^\perp_b|$, so we have $|\sc{A}| = 2|\sc{S}||\sc{S}^\perp_b| = |\sc{S}||\sc{S}^\perp|$.

Now suppose that $c\in \sc{S}$, we again quotient by $\{1,c\}$. Using Thm.~\ref{thm:character_theory} we then see that $|\sc{A}_b|=|\sc{S}_b||\sc{S}_b^\perp|$. By the logic above we have that $|\sc{S}^\perp| = 2|\sc{S}_b^\perp|$, $|\sc{A}|=2|\sc{A}_b|$. Now we also have that $|\sc{S}|=2|\sc{S}_b|$. Thus we see that $|\sc{A}|=|\sc{S}||\sc{S}^\perp|/2$.
\end{proof}

\begin{corollary}
Let $\sc{A}$ be a bosonic Abelian TO with a subgroup $\sc{S}\subseteq \sc{A}$. Then $(\sc{S}^\perp)^\perp = \sc{S}$. If instead $\sc{A}$ is a fermionic Abelian TO then $(\sc{S}^\perp)^\perp = \sc{S}\times \{1,c\}$.
\label{cor:perp_of_perp}
\end{corollary}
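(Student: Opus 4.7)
The plan is to show the easy inclusion by definition and then close the gap by counting, leveraging Thm.~\ref{thm:character_theory} and Lem.~\ref{lem:fermion_char_theory} from the preceding two subsections. Concretely, for any $s\in \sc{S}$ and any $b\in \sc{S}^\perp$ one has $\theta_{s,b}=\theta_{b,s}=0$ by definition of $\sc{S}^\perp$ and symmetry of the braiding, so $\sc{S}\subseteq (\sc{S}^\perp)^\perp$ in both the bosonic and fermionic settings. In the fermionic case the transparent fermion $c$ additionally satisfies $\theta_{c,\cdot}\equiv 0$ on all of $\sc{A}$, so $c\in \sc{S}^\perp$ as well as $c\in (\sc{S}^\perp)^\perp$, and therefore $\sc{S}\times\{1,c\}\subseteq (\sc{S}^\perp)^\perp$.

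For the bosonic statement, I would then apply Thm.~\ref{thm:character_theory} twice: once to the pair $(\sc{S},\sc{S}^\perp)$ giving $|\sc{A}|=|\sc{S}||\sc{S}^\perp|$, and once to the pair $(\sc{S}^\perp,(\sc{S}^\perp)^\perp)$ giving $|\sc{A}|=|\sc{S}^\perp||(\sc{S}^\perp)^\perp|$. Dividing yields $|(\sc{S}^\perp)^\perp|=|\sc{S}|$, which combined with the inclusion $\sc{S}\subseteq (\sc{S}^\perp)^\perp$ forces equality, since a subgroup of finite order equal to the ambient order must coincide with it.

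For the fermionic statement, I would split on whether $c\in \sc{S}$. Irrespective of this, the observation above guarantees $c\in \sc{S}^\perp$, so Lem.~\ref{lem:fermion_char_theory} applied to the pair $(\sc{S}^\perp,(\sc{S}^\perp)^\perp)$ produces $|\sc{A}|=|\sc{S}^\perp||(\sc{S}^\perp)^\perp|/2$. If $c\notin \sc{S}$ the lemma applied to $(\sc{S},\sc{S}^\perp)$ gives $|\sc{A}|=|\sc{S}||\sc{S}^\perp|$, hence $|(\sc{S}^\perp)^\perp|=2|\sc{S}|=|\sc{S}\times\{1,c\}|$; the inclusion $\sc{S}\times\{1,c\}\subseteq (\sc{S}^\perp)^\perp$ then upgrades to equality. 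If instead $c\in \sc{S}$ the same lemma gives $|\sc{A}|=|\sc{S}||\sc{S}^\perp|/2$, so $|(\sc{S}^\perp)^\perp|=|\sc{S}|$; but in this case $\sc{S}\times\{1,c\}=\sc{S}$ since $c$ is already in $\sc{S}$, and the inclusion again forces equality.

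The only thing requiring care is verifying that the hypotheses of Lem.~\ref{lem:fermion_char_theory} apply to the pair $(\sc{S}^\perp,(\sc{S}^\perp)^\perp)$, which amounts to checking that $c$ sits in the first coordinate of the pair in the relevant case; this is automatic from nondegeneracy of the braiding up to $c$ in a fermionic Abelian TO. Beyond this bookkeeping the argument is purely counting against the double inclusion and presents no real obstacle.
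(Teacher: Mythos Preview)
Your proposal is correct and follows essentially the same route as the paper: the inclusion $\sc{S}\subseteq(\sc{S}^\perp)^\perp$ (and $\sc{S}\times\{1,c\}\subseteq(\sc{S}^\perp)^\perp$ in the fermionic case) is combined with two applications of the counting results Thm.~\ref{thm:character_theory} and Lem.~\ref{lem:fermion_char_theory}, splitting the fermionic case on whether $c\in\sc{S}$. The paper's proof is structured identically, and your closing remark about $c\in\sc{S}^\perp$ being automatic is exactly the observation the paper also uses.
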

\begin{proof}
Let $a\in \sc{S}$, then $0 = \theta_{b,a} = \theta_{a,b}$ for every $b\in \sc{S}^\perp$ by definition. So $\sc{S}\subseteq (\sc{S}^\perp)^\perp$ by definition. Thus we must have that $|(\sc{S}^\perp)^\perp| \geq |\sc{S}|$ with equality if, and only if, $(\sc{S}^\perp)^\perp = \sc{S}$.

Now suppose that $\sc{A}$ is a bosonic TO. Then by Thm.~\ref{thm:character_theory} we know that $|\sc{A}| = |\sc{S}||\sc{S}^\perp|$ and $|\sc{A}| = |\sc{S}^\perp||(\sc{S}^\perp)^\perp|$ since $\sc{S}^\perp$ is a subgroup of $\sc{A}$. Equating these and dividing by $|\sc{S}^\perp|$ reveals that $|(\sc{S}^\perp)^\perp| = |\sc{S}|$. From earlier we see this must mean that $(\sc{S}^\perp)^\perp = \sc{S}$.

Suppose now that $\sc{A}$ is a fermionic TO. Then we automatically see that $\sc{S}\times \{1,c\} \subseteq (\sc{S}^\perp)^\perp$. If $c\notin \sc{S}$ then this means that $|(\sc{S}^\perp)^\perp|\geq 2|\sc{S}|$, with equality if, and only if, $(\sc{S}^\perp)^\perp = \sc{S}\times \{1,c\}$. If $c\in \sc{S}$ then we have that $\sc{S}\times \{1,c\} = \sc{S}$ and thus $|(\sc{S}^\perp)^\perp|\geq |\sc{S}|$ with equality if, and only if, $(\sc{S}^\perp)^\perp = \sc{S} = \sc{S}\times \{1,c\}$.

Suppose first that $c\notin \sc{S}$, then from Lem.~\ref{lem:fermion_char_theory} we have that $|\sc{A}| = |\sc{S}||\sc{S}^\perp|$. Now since $\sc{S}^\perp$ is a subgroup of $\sc{A}$ with $c\in \sc{S}^\perp$ then we also know by Lem.~\ref{lem:fermion_char_theory} that $|\sc{A}| = |\sc{S}^\perp||(\sc{S}^\perp)^\perp|/2$. Equating these we see that $|(\sc{S}^\perp)^\perp| = 2|\sc{S}|$ and thus $(\sc{S}^\perp)^\perp = \sc{S}\times \{1,c\}$. On the other hand if $c\in \sc{S}$ then Lem.~\ref{lem:fermion_char_theory} shows that $|\sc{A}|=|\sc{S}||\sc{S}^\perp|/2$ and that $|\sc{A}| = |\sc{S}^\perp||(\sc{S}^\perp)^\perp|/2$. Equating these reveals that $|(\sc{S}^\perp)^\perp| = |\sc{S}|$ and thus $(\sc{S}^\perp)^\perp = \sc{S} = \sc{S}\times \{1,c\}$.
\end{proof}

\begin{definition}
Let $\sc{A}$ be a bosonic Abelian TO with two subgroups $\sc{S}$ and $\sc{R}$. Further suppose that braiding is nondegenerate when restricted to these subgroups and that $\theta_{r,s}=0$ for all $r\in \sc{R}$ and all $s\in \sc{S}$. Then we say that $\sc{A}$ factorizes as a TO and write $\sc{A} = \sc{S}\boxtimes \sc{R}$.
\end{definition}

\begin{theorem}
    Suppose $\sc{A}$ is a bosonic Abelian TO, and $\sc{S}$ is a subgroup. If the braiding restricted to $\sc{S}$ is nondegenerate, then $\sc{A}=\sc{S}\boxtimes \sc{S}^\perp$.
\end{theorem}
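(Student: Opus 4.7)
The plan is to verify each of the three requirements in the definition of factorization for the decomposition $\sc{A} = \sc{S}\boxtimes \sc{S}^\perp$: (i) that $\sc{S}$ and $\sc{S}^\perp$ generate all of $\sc{A}$, (ii) that they are linearly independent as subgroups so the product is direct, and (iii) that braiding is nondegenerate on $\sc{S}^\perp$. The mutual triviality of braiding between $\sc{S}$ and $\sc{S}^\perp$ is automatic from the definition of the complement.

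First I would show that $\sc{S}\cap \sc{S}^\perp = \{1\}$. Any $a$ in this intersection lies in $\sc{S}$ and braids trivially with every element of $\sc{S}$; since braiding restricted to $\sc{S}$ is nondegenerate by hypothesis, $a = 1$. Second, I would apply Theorem \ref{thm:character_theory} to get $|\sc{A}| = |\sc{S}||\sc{S}^\perp|$. Combined with the trivial intersection, the subgroup $\sc{S}\sc{S}^\perp \subseteq \sc{A}$ has order $|\sc{S}||\sc{S}^\perp| = |\sc{A}|$, so $\sc{A} = \sc{S}\sc{S}^\perp$ and the internal product is direct. This establishes the group-theoretic factorization.

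The remaining step is to verify that braiding is nondegenerate on $\sc{S}^\perp$. Suppose $r\in \sc{S}^\perp$ satisfies $\theta_{r,r'} = 0$ for every $r'\in \sc{S}^\perp$. By the previous step any $b\in \sc{A}$ decomposes as $b = sr'$ with $s\in \sc{S}$ and $r'\in \sc{S}^\perp$, and bilinearity gives $\theta_{r,b} = \theta_{r,s} + \theta_{r,r'} = 0$, using $r\in \sc{S}^\perp$ for the first term and the hypothesis for the second. Hence $r$ braids trivially with all of $\sc{A}$, and nondegeneracy of the braiding on the bosonic TO $\sc{A}$ forces $r = 1$.

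I do not expect any genuine obstacle here: the theorem is essentially a corollary of the counting identity in Theorem \ref{thm:character_theory} together with the defining nondegeneracy of $\sc{A}$. The only place one must be slightly careful is in the nondegeneracy check on $\sc{S}^\perp$, where one uses the already-proved direct-product decomposition of $\sc{A}$ to reduce the braiding of $r$ with an arbitrary element of $\sc{A}$ to its braiding with elements of $\sc{S}$ and of $\sc{S}^\perp$ separately.
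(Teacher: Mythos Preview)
Your proposal is correct and follows essentially the same strategy as the paper: both arguments hinge on the counting identity $|\sc{A}| = |\sc{S}||\sc{S}^\perp|$ from Theorem~\ref{thm:character_theory} together with $\sc{S}\cap\sc{S}^\perp=\{1\}$. The only difference is the order of steps and the tool used for nondegeneracy on $\sc{S}^\perp$: the paper proves that first via Corollary~\ref{cor:perp_of_perp} (i.e., $(\sc{S}^\perp)^\perp=\sc{S}$, so anything in $\sc{S}^\perp$ braiding trivially with $\sc{S}^\perp$ lies in $\sc{S}\cap\sc{S}^\perp$), whereas you first establish the direct-product decomposition and then use it to reduce nondegeneracy on $\sc{S}^\perp$ to nondegeneracy on $\sc{A}$. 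Both routes are equivalent and equally short.
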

\begin{proof}
    First, we show that the braiding on $\sc{S}^\perp$ is also nondegenerate. If it is not, then there exists $x\in \sc{S}^\perp, x\neq 1$ such that $x\in (\sc{S}^\perp)^\perp$. However, by Collorary 
 \ref{cor:perp_of_perp} $(\sc{S}^\perp)^\perp=\sc{S}$, thus $x\in \sc{S}$ as well, which contradicts the assumption that $\sc{S}$ has nondegenerate braiding. The argument also shows that $\sc{S}\cap \sc{S}^\perp=\{1\}$.

    Next, consider the Abelian TO $\sc{A}'$ generated by $\sc{S}$ and $\sc{S}^\perp$. It is easy to see that $\sc{A}'=\sc{S}\boxtimes \sc{S}^\perp$, and $|\sc{A}'|=|\sc{S}||\sc{S}^\perp|$. On the other hand, $\sc{A}'$ is a subgroup of $\sc{A}$, but from Theorem \ref{thm:character_theory} we know $|\sc{A}|=|\sc{S}||\sc{S}^\perp|$. We thus conclude that $\sc{A}'=\sc{A}$ and we are done.
\end{proof}

\begin{remark}
    The theorem is a special case of a more general result about factorization of modular tensor categories (MTC), Theorem 3.13 in Ref.~\cite{drinfeld_braided_2010}. Namely, if a MTC $\sc{C}$ has a modular subcategory $\sc{B}$, then the MTC can be factorized as $\sc{C}=\sc{B}\boxtimes \sc{B}^\perp$, where $\sc{B}^\perp$ is the ``complement'' of $\sc{B}$ in $\sc{C}$.
\end{remark}

\begin{definition}
Let $\sc{A}$ be a bosonic or fermionic Abelian TO. Further take $v\in \sc{A}$ to be a boson and let $\langle v\rangle = \{1,v,\ldots, v^{n_v-1}\}\subseteq \sc{A}$. Then define $\sc{A}' = \langle v\rangle^\perp/\langle v\rangle$ as the TO formed from $\sc{A}$ by condensing $v$.
\end{definition}

\begin{remark}
Let us discuss the logic of this definition informally. Since we want to identify $v$ with the vacuum (after all this is what it means to condense $v$ \cite{burnell_anyon_2018}) we need to make sure our TO does not contain anything that braids nontrivially with $v$. Thus we must consider $\langle v\rangle^\perp$, the subgroup of all anyons that braid trivially with $v$. This obviously is not yet a TO since every element of it will braid trivially with every power of $v$. We thus consider the group $\langle v\rangle^\perp/\langle v\rangle$ which will identify $v$ with the vacuum. We formally prove this is a proper TO below.
\end{remark}

\begin{theorem}
If $\sc{A}$ is a bosonic (fermionic) Abelian TO with a boson $v\in \sc{A}$ then the TO formed from $\sc{A}$ by condensing $v$ is a bosonic (fermionic) Abelian TO. Moreover, we have that
\begin{equation}
|\sc{A}| = n_v^2|\sc{A}'|,
\end{equation}
where $n_v$ is the order of $v\in \sc{A}$.
\label{thm:condensation_relation}
\end{theorem}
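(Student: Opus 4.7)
The plan is to verify in sequence that $\sc{A}' = \langle v\rangle^\perp/\langle v\rangle$ is a well-defined Abelian quotient, that the braiding phase and topological spin of $\sc{A}$ descend to $\sc{A}'$, that the resulting braiding form is nondegenerate to the appropriate extent, and finally to read off the cardinality from the character-theoretic lemmas. Well-definedness of the quotient reduces to checking $\langle v\rangle \subseteq \langle v\rangle^\perp$, which is immediate: since $v$ is a boson, $\theta_v = 0$, so by bilinearity $\theta_{v^i,v^j} = 2ij\,\theta_v = 0$ for all $i,j$, and in particular every power of $v$ braids trivially with every other power of $v$.

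Given representatives $a,b\in \langle v\rangle^\perp$, I would define $\theta_{[a],[b]} := \theta_{a,b}$ and $\theta_{[a]} := \theta_a$; these are well-defined on cosets, since replacing $a\mapsto a v^i$ shifts $\theta_{a,b}$ by $i\theta_{v,b} = 0$ (using $b\in \langle v\rangle^\perp$) and shifts $\theta_a$ by $i^2\theta_v + i\theta_{a,v} = 0$. Fusion descends tautologically from $\sc{A}$, so $\sc{A}'$ inherits a genuine Abelian anyon structure. For the nondegeneracy, suppose $[a]\in \sc{A}'$ braids trivially with every $[b]\in \sc{A}'$; then $a\in (\langle v\rangle^\perp)^\perp$, and Corollary \ref{cor:perp_of_perp} identifies this set as $\langle v\rangle$ in the bosonic case and $\langle v\rangle\times \{1,c\}$ in the fermionic case. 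Thus $[a] = [1]$, respectively $[a]\in \{[1],[c]\}$, which identifies $\sc{A}'$ as bosonic (respectively fermionic, assuming the physically meaningful condition $c\notin \langle v\rangle$ for condensing a non-electron boson).

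For the counting, I would apply Theorem \ref{thm:character_theory} in the bosonic case, or Lemma \ref{lem:fermion_char_theory} with $\sc{S} = \langle v\rangle$ in the fermionic case; both yield $|\langle v\rangle^\perp| = |\sc{A}|/n_v$. Dividing by $|\langle v\rangle| = n_v$ gives $|\sc{A}'| = |\sc{A}|/n_v^2$, equivalently $|\sc{A}| = n_v^2|\sc{A}'|$. The main obstacle will be the nondegeneracy step, since that is where the perp-of-perp identity in Corollary \ref{cor:perp_of_perp} really pulls its weight and where the bosonic-versus-fermionic distinction for $\sc{A}'$ actually emerges; once that is in hand, the cardinality formula is a one-line consequence of the character-theoretic machinery already established.
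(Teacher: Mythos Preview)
Your proposal is correct and follows essentially the same route as the paper: both verify $\langle v\rangle\subseteq\langle v\rangle^\perp$ from $\theta_v=0$, invoke Corollary~\ref{cor:perp_of_perp} to establish nondegeneracy of the quotient braiding (distinguishing the bosonic and fermionic cases exactly as you do), and then read off the cardinality via Theorem~\ref{thm:character_theory} or Lemma~\ref{lem:fermion_char_theory}. The only minor difference is that the paper observes $c\notin\langle v\rangle$ as a consequence of $v$ being a boson (no power of a boson is a fermion), whereas you state it as an assumption; you might want to make that one-line deduction explicit.
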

\begin{proof}
First suppose that $\sc{A}$ is a bosonic Abelian TO. Since $v$ is a boson it braids trivially with itself, and thus will with all of its powers. So we have that $\langle v\rangle$ is a subgroup of $\langle v\rangle^\perp\subseteq \sc{A}$. Thus $\sc{A}' = \langle v\rangle^\perp/\langle v\rangle$ is well defined. Now we use Cor.~\ref{cor:perp_of_perp} to see that $(\langle v\rangle^\perp)^\perp = \langle v\rangle$. Thus if $a\in \sc{A}$ braids trivially with every element of $\langle v\rangle^\perp$ it must lie in $\langle v\rangle$. Since in $\sc{A}'$ every element of $\langle v\rangle$ is identified with the vacuum, we must then have that braiding is nondegenerate on $\sc{A}'$. Thus $\sc{A}'$ is a bosonic TO.

Since we have that $\sc{A}' = \langle v\rangle^\perp/\langle v\rangle$ we know that $|\langle v\rangle^\perp | = |\langle v\rangle||\sc{A}'| = n_v|\sc{A}'|$. From Thm.~\ref{thm:character_theory} we must have that $|\sc{A}| = |\langle v\rangle||\langle v\rangle^\perp | = n_v|\langle v\rangle^\perp | = n_v^2 |\sc{A}'|$. We therefore have the proof.

Suppose now that $\sc{A}$ is a fermionic Abelian TO. We first note that since $v$ is a boson no power of $v$ can be a fermion and hence $c\notin \langle v\rangle$. For the same reasons as above we have that $\sc{A}' = \langle v\rangle^\perp/\langle v\rangle$ is well-defined. By Cor.~\ref{cor:perp_of_perp} we see that $(\langle v\rangle^\perp)^\perp = \langle v\rangle \times \{1,c\}$. Then we see that braiding is nondegenerate on $\sc{A}'$ up to the fundamental fermion $c$. Thus $\sc{A}'$ is a fermionic TO. By Lem.~\ref{lem:fermion_char_theory} we see that $|\sc{A}| = |\langle v\rangle||\langle v\rangle^\perp| = n_v^2|\sc{A}'|$.
\end{proof}

\begin{example}
A simple example of this is given by $\bb{Z}_n$ gauge theory. There we have a boson $e\in \sc{A}$ with order $n$. If we condense $e$ then the resulting theory is trivial i.e., $\sc{A}' = \{1\}$. We then have that $|\sc{A}|=n^2|\sc{A}'|$.
\end{example}

\subsection{Time reversal invariant Abelian TOs}
\label{app:sub_time}

A particularly powerful algebraic result about time reversal invariant Abelian bosonic TOs was given in Ref.~\cite{lee_study_2018}. We state this result as a theorem, although we must first give a definition of time-reversal invariant Abelian TOs. At the end of this section we extend the results of Ref.~\cite{lee_study_2018} to include fermionic Abelian TOs as well as strengthen the bounds on anyon count for Abelian bosonic TOs.

\begin{definition}
Let $\sc{A}$ be an Abelian TO. We call $\sc{A}$ a time-reversal invariant Abelian TO if it has an automorphism $\sc{T} \colon \sc{A} \rightarrow \sc{A}$ with the following properties. First, $\sc{T}^2$ is the identity on $\sc{A}$. Also, if $a\in \sc{A}$ then $\theta_{\sc{T}a} = -\theta_a$ so time reversal takes the topological spin to its negative. Finally, if $\sc{A}$ is a fermionic TO then $\sc{T}c = c$ i.e., time reversal sends the fundamental fermion to itself.
\end{definition}

\begin{remark}
In this section we have not specified whether $c$ is a Kramers doublet under $\sc{T}$, although the anomaly structure of our theory will depend on this \cite{lapa_anomaly_2019}. We further note that other studies of the action of time reversal on Abelian TOs have allowed $\sc{T}^2 = C$, where $C$ is charge conjugation \cite{delmastro_symmetries_2021}. However, this action of time reversal is not consistent with $\U$ charge symmetry. Thus we restrict $\sc{T}^2$ to be the identity on $\sc{A}$.
\end{remark}

\begin{theorem}
Suppose that $\sc{A}_b$ is a time-reversal invariant Abelian bosonic TO. Then $|\sc{A}_b| = l^2$ for some integer $l$.
\label{thm:size_is_square}
\end{theorem}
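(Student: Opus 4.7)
The plan is to exhibit a Lagrangian subgroup $L \subseteq \sc{A}_b$ (an isotropic subgroup satisfying $L = L^\perp$). Once produced, Thm.~\ref{thm:character_theory} immediately yields $|\sc{A}_b| = |L||L^\perp| = |L|^2$. First I would reduce to prime-primary components: write $\sc{A}_b = \bigoplus_p \sc{A}_p$ as the direct sum of its Sylow subgroups. For distinct primes $p, q$, any $a \in \sc{A}_p$ and $b \in \sc{A}_q$ satisfy $\theta_{a,b} \in \tfrac{1}{p^k}\bb{Z}/\bb{Z} \cap \tfrac{1}{q^l}\bb{Z}/\bb{Z} = 0$, so each $\sc{A}_p$ is itself a nondegenerate braided Abelian TO. Since $\sc{T}$ preserves element orders, it restricts to a time-reversal automorphism on each $\sc{A}_p$, and the problem reduces to producing a Lagrangian subgroup of each $\sc{A}_p$ separately; $|\sc{A}_b|$ is then a product of squares and hence a square.

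For odd $p$, I would invoke the eigenspace decomposition. Because $2$ is invertible on $p$-primary groups, every $a \in \sc{A}_p$ splits uniquely as $a = a_+ a_-$ with $a_\pm = (a \cdot (\sc{T}a)^{\pm 1})^{1/2}$, giving $\sc{A}_p = \sc{F}_p \oplus \sc{F}^-_p$ where $\sc{F}_p = \ker(\sc{T}-1)$ and $\sc{F}^-_p = \ker(\sc{T}+1)$. The key identity is $\theta_{\sc{T}a,\sc{T}b} = -\theta_{a,b}$ (derived from $\theta_{\sc{T}a} = -\theta_a$ together with $\theta_{ab} = \theta_a + \theta_b + \theta_{a,b}$); applied within $\sc{F}_p$ or within $\sc{F}^-_p$ it forces $2\theta_{a,b} = 0$, which for $p$ odd means $\theta_{a,b} = 0$. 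Both subgroups are therefore isotropic. Nondegeneracy of the braiding on $\sc{A}_p = \sc{F}_p \oplus \sc{F}^-_p$ combined with this isotropy forces the cross-pairing $\sc{F}_p \times \sc{F}^-_p \to \bb{S}^1$ to be nondegenerate, so $|\sc{F}_p| = |\sc{F}^-_p|$; comparison with $|\sc{A}_p| = |\sc{F}_p||\sc{F}_p^\perp|$ from Thm.~\ref{thm:character_theory} then shows $\sc{F}_p = \sc{F}_p^\perp$ is Lagrangian.

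The main obstacle is $p = 2$, where the eigenspace splitting fails because $2$ is not invertible. My plan is to proceed by induction on $|\sc{A}_2|$: produce a nontrivial $\sc{T}$-invariant boson $v \in \sc{A}_2$ with $v \in \langle v \rangle^\perp$ and condense it via Thm.~\ref{thm:condensation_relation} to obtain a smaller TRI Abelian bosonic TO $\sc{A}'_2$ with $|\sc{A}_2| = n_v^2 |\sc{A}'_2|$; inductively $|\sc{A}'_2|$ is a square, hence so is $|\sc{A}_2|$. To produce $v$, I would first note that any order-$2$ element $a$ yields a nontrivial fixed element, either $a$ itself (if $\sc{T}a = a$) or $a\sc{T}a$ (otherwise), so $\sc{F}_2 \neq \{1\}$. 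Fixed elements have $\theta \in \{0,\pi\}$: if $\sc{F}_2$ contains a nontrivial boson, take it as $v$; if it contains only identity and fermions, the fusion $ff'$ of two mutually trivially braiding fermions yields a boson since $\theta_{ff'} = 2\pi + \theta_{f,f'} = \theta_{f,f'} = 0$. Ruling out the remaining case where all fermions in $\sc{F}_2$ braid nontrivially with each other is the crux; I would address it using the Gauss sum identity $\sum_a e^{i\theta_a} = \sqrt{|\sc{A}_b|}$ (which follows from $c_- = 0$ in a TRI theory) to constrain the fermionic content and derive a contradiction. As a fallback, one can appeal to the classification of nondegenerate quadratic forms on finite abelian $2$-groups (Wall), which guarantees that any form admitting an involutive anti-isometry is hyperbolic and therefore possesses a Lagrangian subgroup directly.
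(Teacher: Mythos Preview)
Your odd-prime reduction via the eigenspace splitting $\sc{A}_p = \sc{F}_p \oplus \sc{F}_p^-$ is correct and clean, and it is morally the same mechanism as the paper's maps $f(a)=a\cdot\sc{T}a$ and $g(a)=a\cdot\sc{T}\bar a$: for odd $p$ one has $\mathrm{Ker}(g)=\mathrm{Im}(f)$, so the obstruction group $\sc{C}=\mathrm{Ker}(g)/\mathrm{Im}(f)$ that the paper introduces is trivial there.

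The gap is at $p=2$. Your inductive step hinges on producing a nontrivial $\sc{T}$-fixed \emph{boson}, but none need exist. Take the toric code $\{1,e,m,em\}$ with the swap $\sc{T}\colon e\leftrightarrow m$ as time reversal (legitimate since $\theta_e=\theta_m=0$ and $\theta_{em}=\pi=-\pi$). Here $\sc{F}_2=\{1,em\}$ consists of the identity and a single fermion; there is nothing to condense. Your proposed rescue via the Gauss sum does not bite: $\sum_a e^{i\theta_a}=1+1+1-1=2=\sqrt{4}$ is perfectly consistent, so no contradiction arises. Invoking Wall's classification as a fallback is an external citation rather than a completion of your own argument.

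The paper sidesteps this by working globally, without prime decomposition or induction. It proves $\mathrm{Ker}(g)=\mathrm{Im}(f)^\perp$, giving $|\sc{A}_b|=|\mathrm{Im}(f)|^2\,|\sc{C}|$, and then shows that $\sc{C}\cong\bb{Z}_2^n$ carries a nondegenerate braiding with all self-braidings zero (since $\sc{T}a=a$ on $\mathrm{Ker}(g)$ forces $2\theta_a=0$); peeling off hyperbolic pairs then gives $|\sc{C}|=4^m$. In the toric code example $\mathrm{Im}(f)=\{1,em\}=\mathrm{Ker}(g)$, so $\sc{C}=\{1\}$ and the square comes entirely from $|\mathrm{Im}(f)|^2=4$. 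The group $\sc{C}$ is exactly the obstruction to your eigenspace splitting at $p=2$, and the paper's direct argument that $|\sc{C}|$ is a square is the ingredient your induction is missing.
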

\begin{proof}
This proof follows immediately from (3.11) of Ref.~\cite{lee_study_2018}, which showed that $|\sc{A}_b|=l^2$ for some integer $l$.

We will review the derivation of $|\sc{A}_b|=l^2$ for clarity, since only part of the discussion in Ref.~\cite{lee_study_2018} is necessary for our purposes. Define:
\begin{align}
f\colon \sc{A}_b &\rightarrow \sc{A}_b\\
 a &\mapsto a\times \sc{T}a, \nonumber
\end{align}
 Further define:
\begin{align}
g\colon \sc{A}_b &\rightarrow \sc{A}_b\\
 a &\mapsto a\times\sc{T}\overline{a}, \nonumber
\end{align}
where $\overline{a} = a^{-1}$ is the dual of $a$. Suppose that $a \in \mathrm{Im}(f)$ i.e., $a=b\times\sc{T}b$ for some $b\in \sc{A}_b$. Then we see that $\sc{T}a = a$ since $\sc{T}$ squares to the identity. Thus we see that $a\in \mathrm{Ker}(g)$ and hence that $\mathrm{Im}(f)$ is a subgroup of $\mathrm{Ker}(g)$. We can thus define:
\begin{equation}
\sc{C} = \mathrm{Ker}(g)/\mathrm{Im}(f).
\end{equation}
Now suppose $a \in \mathrm{Ker}(g)$, then $\sc{T}a=a$ by definition. Then we see that $a^2 = a\times\sc{T}a$ and thus $a^2\in \mathrm{Im}(f)$. Thus every nontrivial element of $\sc{C}$ must have order two and hence $\sc{C} = \bb{Z}_2^n$ for some integer $n$.

We now use the fact that $\theta_{\sc{T}a,\sc{T}b} = -\theta_{a,b}$ for all $a,b\in \sc{A}_b$. Note that this means that $\theta_{a,\sc{T}b} = -\theta_{\sc{T}a,b} = \theta_{\sc{T}\overline{a},b}$. If we add $\theta_{a,b}$ to both sides then we can use its bilinearity to see that $\theta_{a,b\times\sc{T}b} = \theta_{a\times\sc{T}\overline{a},b}$ i.e., that $\theta_{a,f(b)}=\theta_{g(a),b}$. Then consider $a\in \mathrm{Ker}(g)$ we see that $0 = \theta_{1,b} = \theta_{g(a),b} = \theta_{a,f(b)}$ for every $b\in \sc{A}_b$. Similarly we see that for $b\in \mathrm{Ker}(f)$ we have $0 = \theta_{g(a),b}$ for all $a\in \sc{A}_b$. We thus have
\begin{equation}
\mathrm{Ker}(g)\subseteq \mathrm{Im}(f)^\perp \text{ and } \mathrm{Ker}(f)\subseteq \mathrm{Im}(g)^\perp.
\end{equation}
We then see that:
\begin{align}
|\mathrm{Ker}(g)| &\leq |\mathrm{Im}(f)^\perp| \text{ by above}\\
&= \frac{|\sc{A}_b|}{|\mathrm{Im}(f)|} \text{by Thm.~\ref{thm:character_theory}}\\
|\mathrm{Ker}(f)| &\leq \frac{|\sc{A}_b|}{|\mathrm{Im}(g)|} \text{ similarly.}
\end{align}
Now we know by the first isomorphism theorem that $|\sc{A}_b|=|\mathrm{Ker}(g)||\mathrm{Im}(g)|=|\mathrm{Ker}(f)||\mathrm{Im}(f)|$. Substituting these into the above relations reveals that the above inequalities are equalities and thus we have that
\begin{equation}
\mathrm{Ker}(g) = \mathrm{Im}(f)^\perp \text{ and } \mathrm{Ker}(f) = \mathrm{Im}(g)^\perp.
\end{equation}
Then we see that $\sc{C} = \mathrm{Im}(f)^\perp/\mathrm{Im}(f)$. By Thm.~\ref{thm:character_theory} we know that $|\sc{A}_b| = |\mathrm{Im}(f)||\mathrm{Im}(f)^\perp| = |\mathrm{Im}(f)|^2|\sc{C}|$. Thus to prove that $|\sc{A}_b|$ is a square, all we need to show is that $|\sc{C}|$ is.

By Cor.~\ref{cor:perp_of_perp} we know that $(\mathrm{Im}(f)^\perp)^\perp = \mathrm{Im}(f)$. Thus the only elements $a\in \sc{A}_b$ that braid trivially with all elements of $\mathrm{Im}(f)^\perp$ are those in $\mathrm{Im}(f)$. This means that since $\sc{C} = \mathrm{Im}(f)^\perp/\mathrm{Im}(f)$ we must have that braiding is nondegenerate on $\sc{C}$.

We next want to argue that $\theta_{a,a}=0$ for every element in $a\in \sc{C}$. First consider $a\in \mathrm{Ker}(g)$, where $\sc{T}a=a$. Then we have that:
\begin{equation}
\theta_a = \theta_{\sc{T}a} = -\theta_a \pmod{2\pi} \label{eqn:top_spin_trs},
\end{equation}
and thus $0 = 2\theta_a = \theta_{a,a}$ modulo $2\pi$. We thus have that $\theta_{a,a}=0$ for every $a\in \sc{C}=\mathrm{Ker}(g)/\mathrm{Im}(f)$. Now we note that $2\theta_{a,b}=\theta_{a^2,b}=0$ since $a\in \sc{C}$ has order two.\footnote{Note that here we are working in $\sc{C}$ so $a^2 =f(a)\sim 1$.} Thus $\theta_{a,b}=0,\pi$ for every $a,b\in \sc{C}$. 

Now take some $a\in \sc{C}$, then $\theta_{a,a} = 0$ as demonstrated above. Since braiding is nondegenerate on $\sc{C}$ there must be some $b\in \sc{C}$ such that $\theta_{a,b} = \pi$. Then take $\sc{S} = \{1,a,b,ab\}\subseteq \sc{C}$. Then by Thm.~\ref{thm:character_theory} we see that $|\sc{C}|=|\sc{S}||\sc{S}^\perp|=4|\sc{S}^\perp|$. We further know by Corr.~\ref{cor:perp_of_perp} that $(\sc{S}^\perp)^\perp = \sc{S}$, so the only elements which braid trivially with everything in $\sc{S}^\perp$ lie in $\sc{S}$. But every nontrivial element of $\sc{S}$ braids nontrivially with at least one other element of $\sc{S}$ and thus $\sc{S}\cap \sc{S}^\perp = \{1\}$. We conclude that braiding is nondegenerate when restricted to $\sc{S}^\perp$. Thus $\sc{S}^\perp\subseteq \sc{C}$ is itself a bosonic TO. We can then find some $\sc{S}'=\{1,a',b',a'b'\}\subseteq \sc{S}^\perp$ where $\theta_{a',b'}=\pi$ and again apply Thm.~\ref{thm:character_theory} to see that $|\sc{S}^\perp| = 4|\sc{S}^{\prime \perp}|$, where $\sc{S}^{\prime \perp}\subseteq \sc{S}^\perp$ is another bosonic TO. Continuing to apply this logic until we have exhausted the finite set $\sc{C}$ reveals that $|\sc{C}| = 4^m = 2^{2m}$ for some $m$. Thus $|\sc{C}|$ is a square and we see $|\sc{A}_b| = |\mathrm{Im}(f)|^2|\sc{C}|$ must also be a square.
\end{proof}

\begin{lemma}
Suppose that $\sc{A}$ is a time-reversal invariant fermionic Abelian TO. Then we either have that $|\sc{A}|=l^2$ or $2l^2$. Additionally, if there is no anyon $a\in \sc{A}$ such that $\sc{T}a=ac$ then $|\sc{A}|=2l^2$ for some integer $l$.
\label{lem:size_is_square}
\end{lemma}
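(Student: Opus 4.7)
The plan is to descend to the quotient $\bar{\sc{A}}:=\sc{A}/\{1,c\}$ and imitate the proof of Thm.~\ref{thm:size_is_square}. Because $c$ braids trivially with every anyon and $\sc{T}c=c$, both the braiding $\theta$ and the involution $\sc{T}$ descend cleanly to $\bar{\sc{A}}$. The induced pairing on $\bar{\sc{A}}$ is non-degenerate — since $\sc{A}$ is fermionic, the radical of $\theta$ on $\sc{A}$ is exactly $\{1,c\}$ — and the induced involution $\bar{\sc{T}}$ still satisfies $\bar{\sc{T}}^2=\mathrm{id}$ and $\theta(\bar{\sc{T}}x,\bar{\sc{T}}y)=-\theta(x,y)$. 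Consequently Thm.~\ref{thm:character_theory} and Cor.~\ref{cor:perp_of_perp} apply to $\bar{\sc{A}}$ verbatim.

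I would then set $\bar{f}([a])=[a]\bar{\sc{T}}[a]$ and $\bar{g}([a])=[a]\bar{\sc{T}}[a]^{-1}$ and repeat the counting argument of Thm.~\ref{thm:size_is_square}: the identity $\theta(x,\bar{f}(y))=\theta(\bar{g}(x),y)$ together with $|\mathrm{Im}(\bar{f})^\perp|=|\bar{\sc{A}}|/|\mathrm{Im}(\bar{f})|$ and the analogous size for $\bar{g}$ forces $\mathrm{Ker}(\bar{g})=\mathrm{Im}(\bar{f})^\perp$. Setting $\sc{C}:=\mathrm{Ker}(\bar{g})/\mathrm{Im}(\bar{f})$, every element squares to $\bar{f}([a])\in\mathrm{Im}(\bar{f})$, so $\sc{C}$ is an $\bb{F}_2$-vector space; the pairing $\theta$ descends to a $\{0,\pi\}$-valued symmetric bilinear form on $\sc{C}$ which is non-degenerate by Cor.~\ref{cor:perp_of_perp}. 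Counting then gives $|\bar{\sc{A}}|=|\mathrm{Im}(\bar{f})|^2\cdot|\sc{C}|$, so the problem reduces to analyzing $|\sc{C}|=2^n$.

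For the additional hypothesis of Part~(2), any $\bar{\sc{T}}$-fixed class $[a]\in\sc{C}$ lifts to an $a\in\sc{A}$ with $\sc{T}a\in\{a,ac\}$; the assumption rules out $\sc{T}a=ac$, so $\sc{T}a=a$, giving $\theta_a\in\{0,\pi\}$ and $\theta_{[a],[a]}=2\theta_a=0$. Thus the form on $\sc{C}$ is alternating, and a non-degenerate alternating form on $\bb{F}_2^n$ forces $n$ even, so $|\sc{C}|=4^m$. Then $|\bar{\sc{A}}|$ is a perfect square and $|\sc{A}|=2|\bar{\sc{A}}|=2l^2$, as claimed.

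For the first, general statement $n$ may be of either parity. If $n$ is even then $|\sc{C}|=4^m$, $|\bar{\sc{A}}|$ is a perfect square, and $|\sc{A}|=2l^2$; if $n$ is odd then $|\sc{C}|=2\cdot 4^{(n-1)/2}$, $|\bar{\sc{A}}|=2m^2$, and $|\sc{A}|=2|\bar{\sc{A}}|=(2m)^2=l^2$. Either way $|\sc{A}|\in\{l^2,2l^2\}$. The main subtlety to handle carefully is that $\bar{\sc{A}}$ is not literally a bosonic TO — the topological spins $\theta_a$ on $\sc{A}$ descend to $\bar{\sc{A}}$ only modulo $\pi$ — so Thm.~\ref{thm:size_is_square} cannot be invoked as a black box. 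The resolution is that its proof only needs the bilinear pairing $\theta$, which \emph{does} descend cleanly, together with the identity $\theta_{x,x}=0$ on $\bar{\sc{T}}$-fixed elements; the latter is exactly what Part~(2)'s hypothesis supplies, while its failure in Part~(1) is absorbed by tracking the parity of $n$.
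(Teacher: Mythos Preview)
Your proposal is correct and follows essentially the same route as the paper: pass to $\bar{\sc{A}}=\sc{A}/\{1,c\}$, rerun the $f,g$ argument of Thm.~\ref{thm:size_is_square} using only braiding (which descends) to get $|\bar{\sc{A}}|=|\mathrm{Im}(\bar f)|^2\cdot 2^n$, and then under the extra hypothesis use $\sc{T}a=a$ to force $\theta_{[a],[a]}=0$ on $\sc{C}$, so $n$ is even. The only cosmetic difference is that you conclude ``$n$ even'' via the standard fact that a nondegenerate alternating form on $\bb{F}_2^n$ requires $n$ even, whereas the paper explicitly peels off $\bb{Z}_2\times\bb{Z}_2$ blocks; these are the same argument.
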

\begin{proof}
If $\sc{A}$ is a time-reversal invariant fermionic Abelian TO then we write $\sc{A}_b = \sc{A}/\{1,c\}$. Braiding will become nondegenerate on this TO. This is thus a bosonic TO with a $\pi$-ambiguity in the topological spin, since we have identified the fundamental fermion $c$ with the vacuum. However, there is no ambiguity of braiding.

We can now repeat the initial arguments of Thm.~\ref{thm:size_is_square} to see that $|\sc{A}_b| = |\mathrm{Im}(f)|^2|\sc{C}|$, where $\sc{C} = \mathrm{Ker}(g)/\mathrm{Im}(f)$. We continue to have that every element of $\sc{C}$ is order two, so $\sc{C}=\bb{Z}_2^n$ for some $n$. These results did not rely on topological spin, only on braiding, so they continue to hold in $\sc{A}_b$. Then we know that $|\sc{A}|=2|\sc{A}_b|=|\mathrm{Im}(f)|^22^{n+1}$. If $n$ is odd then $|\sc{A}|=l^2$ for some $l$, while if $n$ is even then $|\sc{A}|=2l^2$ for some $l$.

We need to work a little harder to prove the second result since Eq.~\eqref{eqn:top_spin_trs} \textit{did} rely on topological spin. We now revisit the argument surrounding this equation. Suppose that $a \in \mathrm{Ker}(g)$ i.e., that $\sc{T}a = a$ (modulo $c$). Then we have that
\begin{equation}
\theta_a = \theta_{\sc{T}a} = - \theta_a \text{ or }  -\theta_a + \pi \pmod{2\pi},
\end{equation}
since we have a $\pi$ ambiguity in the topological spin. Then we see that $\theta_{a,a}=2\theta_a = 0,\pi \pmod{2\pi}$. Clearly $\theta_{a,a} = \pi$ if, and only if, $\sc{T}a=ac$. Suppose then that there is no $a\in \sc{A}$ such that $\sc{T}a=ac$, then $\theta_{a,a} = 0$ for every $a\in \sc{C}$. We thus thus repeat the same argument as in Thm.~\ref{thm:size_is_square} to see that $|\sc{C}|=2^{2m}$ for some $m$. We then conclude that $|\sc{A}_b| = |\mathrm{Im}(f)|^2|\sc{C}|= l^2$ for some $l$ and hence that $|\sc{A}|=2l^2$.

\end{proof}

\begin{example}
Consider the semion-fermion fermionic Abelian TO, $\sc{A}=\U_2\boxtimes\{1,c\}$. If the semion in this theory is referred to as $a$ then a consistent action of time reversal is $\sc{T}a = ac$. Thus by the above theorem the size can be a square, and indeed $|\sc{A}|=4$.
\end{example}

\begin{remark}
We note that the semion-fermion fermionic Abelian TO is well known to be anomalous \cite{fidkowski_non-abelian_2013, wang_interacting_2014}. Even more clearly, if we impose $\U_f\rtimes \bb{Z}_2^T$ symmetry then we know that $Q_a = Q_{\sc{T}a} = Q_a+Q_c=Q_a+1\pmod{2}$, which is a contradiction. Thus we conclude that every fermionic Abelian TO with symmetry group $\U_f\rtimes \bb{Z}_2^T$ must have size $|\sc{A}|=2l^2$ for some $l$. We nonetheless see that the other possibility will be useful when we generalize some results to the non-Abelian case.
\end{remark}

\begin{lemma}
Let $\sc{A}$ be a time-reversal invariant Abelian bosonic TO. Suppose that it has an anyon $a\in \sc{A}$ which has order $n$ i.e., $n$ is the smallest integer such that $a^n = 1$. Then we have that
\begin{equation}
|\sc{A}| = (kn)^2.
\end{equation}
for some positive integer $k$. In particular this means $ |\sc{A}|\geq n^2$.
\label{lem:bound_b}
\end{lemma}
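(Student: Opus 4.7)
The plan is to prove the equivalent divisibility $n^2\mid |\sc{A}|$, which combined with $|\sc{A}|=l^2$ from Theorem~\ref{thm:size_is_square} yields $n\mid l$ and hence $|\sc{A}|=(kn)^2$ with $k=l/n$. I would proceed by a primary decomposition $\sc{A}=\bigoplus_p \sc{A}_p$ and a $\sc{T}$-eigenspace analysis on each factor. Because the braiding phase between anyons of coprime orders must vanish, $\theta$ restricts to a non-degenerate pairing on each $\sc{A}_p$, and since $\sc{T}$ preserves orders it leaves each $\sc{A}_p$ invariant; thus each $\sc{A}_p$ is itself a TRI bosonic Abelian TO, and Theorem~\ref{thm:size_is_square} gives $|\sc{A}_p|=l_p^2$. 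Writing $a=\sum_p a_p$ with $a_p\in \sc{A}_p$ of order $p^{b_p}$ (so $n=\prod_p p^{b_p}$), the problem reduces to showing $p^{b_p}\mid l_p$ for each prime $p\mid n$.

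For odd primes $p$, I would use the $\sc{T}$-eigenspace decomposition $\sc{A}_p=\sc{A}_p^+\oplus \sc{A}_p^-$, which is well-defined because $\tfrac12$ exists in $\bb{Z}/p^r\bb{Z}$: every $x\in\sc{A}_p$ splits as $\tfrac12(x+\sc{T}x)+\tfrac12(x-\sc{T}x)$, the two summands lying respectively in the $+1$ and $-1$ eigenspaces. The anti-TRI identity $\theta(\sc{T}x,\sc{T}y)=-\theta(x,y)$ applied to $x,y$ in the \emph{same} eigenspace yields $2\theta(x,y)=0$, and since $\theta$-values in $\sc{A}_p$ have odd prime-power order this forces $\theta(x,y)=0$ on each eigenspace. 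Non-degeneracy of $\theta$ on $\sc{A}_p$ then makes $\theta$ a perfect pairing between $\sc{A}_p^+$ and $\sc{A}_p^-$, giving $|\sc{A}_p^+|=|\sc{A}_p^-|=l_p$. Because the exponent of any finite Abelian group divides its order, the exponent of $\sc{A}_p$ (which is the larger of the two eigenspace exponents) divides $l_p$; in particular $p^{b_p}\mid l_p$, as needed.

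The main obstacle is the case $p=2$, where the above eigenspace decomposition breaks since $\tfrac12\notin\bb{Z}/2^r\bb{Z}$. My approach here is to work instead with the induced non-degenerate antisymmetric pairing $\psi(x,y)\defeq \theta(x,\sc{T}y)$ on $\sc{A}_2$ (one checks $\psi(y,x)=-\psi(x,y)$ by using $\sc{T}^2=1$ together with the TRI relation on $\theta$), and to invoke the classification of non-degenerate linking forms on finite $2$-groups. This yields a Lagrangian decomposition $\sc{A}_2\cong L\oplus L^\vee$ with $|\sc{A}_2|=|L|^2=l_2^2$, so the order of $a_2$ divides the exponent of $L$, which in turn divides $|L|=l_2$. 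The possible failure of $\psi$ to be alternating on the $2$-torsion, tracked by the homomorphism $\sc{A}_2\to\bb{Z}_2$, $x\mapsto 2\psi(x,x)$, is absorbed into a quadratic refinement of the decomposition without changing $|L|$. Combining with the odd-prime case gives $n=\prod_p p^{b_p}\mid \prod_p l_p=l$, so that $|\sc{A}|=(kn)^2$ with $k=l/n$ as claimed.
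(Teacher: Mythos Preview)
Your approach is genuinely different from the paper's and considerably more structural. The paper's proof is almost completely elementary: it observes that $\theta_{a,\sc{T}a}\in\{0,\pi\}$ and splits into two cases. If $\theta_{a,\sc{T}a}=0$, then $\langle\sc{T}a\rangle\subseteq\langle a\rangle^\perp$, so by Lagrange's theorem and Theorem~\ref{thm:character_theory} one gets $|\sc{A}|=|\langle a\rangle|\cdot|\langle a\rangle^\perp|=mn^2$; combining with $|\sc{A}|=l^2$ from Theorem~\ref{thm:size_is_square} forces $m$ to be a perfect square. If $\theta_{a,\sc{T}a}=\pi$, one first shows $n$ must be even, then uses $\langle\sc{T}a^2\rangle\subseteq\langle a\rangle^\perp$ (of size $n/2$) to get $|\sc{A}|=mn^2/2$; combining with $|\sc{A}|=l^2$ and $n$ even again yields $|\sc{A}|=(kn)^2$. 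No primary decomposition, no eigenspaces, no Lagrangians.

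Your odd-$p$ argument is clean and correct, with the minor addendum that the perfect pairing forces $\sc{A}_p^+\cong(\sc{A}_p^-)^\vee\cong\sc{A}_p^-$, so both eigenspace exponents divide $l_p$, not just one of them. The real gap is at $p=2$. First a slip: the homomorphism tracking non-alternation should be $x\mapsto\psi(x,x)$, not $x\mapsto 2\psi(x,x)$ (the latter is identically zero). More seriously, you invoke ``the classification of non-degenerate linking forms on finite $2$-groups'' to produce a Lagrangian, but linking forms are conventionally \emph{symmetric} while your $\psi$ is antisymmetric, so this is not the right toolbox. And non-alternation does occur: for the $\bb{Z}_2$ toric code with $\sc{T}$ acting as $e\leftrightarrow m$, one finds $\psi(e,e)=\theta(e,m)=\pi$, and in that example no isotropic complement to the unique Lagrangian $\{1,em\}$ exists. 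One \emph{can} still prove that an isotropic $L$ with $|L|^2=|\sc{A}_2|$ exists whenever $|\sc{A}_2|$ is a square, by analyzing the radical of $\psi$ restricted to the kernel of $x\mapsto\psi(x,x)$, but your sentence about the defect being ``absorbed into a quadratic refinement'' does not supply this argument. The paper's two-case analysis on $\theta_{a,\sc{T}a}$ sidesteps the issue entirely and is much shorter.
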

\begin{proof}
We first establish something simple. Suppose that $1 = (\sc{T}a)^m = \sc{T}a^m$, then we can use the fact that $\sc{T}^2$ is the identity and $\sc{T}1 = 1$ to see that $a^m = 1$. Thus $m$ must be a multiple of $n$. We further note that $(\sc{T}a)^{n} = \sc{T}a^{n} = \sc{T}1 = 1$. We may then conclude that the order of $\sc{T}a$ is exactly $n$ i.e., equal to the order of $a$.

Next we note that $\theta_{a,\sc{T}a} = 0,\pi$ since $\theta_{b,d} = -\theta_{\sc{T}b,\sc{T}d}$ for $b,d\in \sc{A}$.

Suppose that $0 = \theta_{a,\sc{T}a}$. This will mean that $\theta_{a^n, \sc{T}a^m} = nm\theta_{a,\sc{T}a} = 0$, so every power of $a$ will braid trivially with every power of $\sc{T}a$. Then if $\sc{S} = \{1,a,\ldots, a^{n-1}\}\subseteq \sc{A}$ is the subgroup of powers of $a$, we see that $\sc{R} = \{1,\sc{T}a,\ldots, \sc{T}a^{n-1}\}\subseteq \sc{S}^\perp$. Moreover, since $\sc{R}$ is a subgroup of $\sc{S}^\perp$ we know that $|\sc{S}^\perp|$ must be a multiple of $|\sc{R}|$ by Lagrange's theorem i.e., $|\sc{S}^\perp| = mn$ for some nonzero integer $m$. By Thm.~\ref{thm:character_theory} we then have that:
\begin{equation}
|\sc{A}| = |\sc{S}||\sc{S}^\perp| = mn^2.
\end{equation}
By Thm.~\ref{thm:size_is_square} we must have that $|\sc{A}| =l^2$ for some $l\in \bb{Z}$. Thus we must have that $l^2 = mn^2$. Then $\sqrt{m} = l/n$, but the only integers whose square root is a rational number are the squares, so $m=k^2$ and thus $|\sc{A}| = (nk)^2$. 

Now suppose that $\theta_{a,\sc{T}a}=\pi$. Then we must have that $\theta_{a,\sc{T}a^2} = 2\theta_{a,\sc{T}a} = 0$, so $a$ braids trivially with $\sc{T}a^2$. Next suppose, by contradiction, that $n$ is odd, then $n+1$ is even and $(n+1)/2$ is an integer. Then we can see that $0 = [(n+1)/2]\theta_{a,\sc{T}a^2} = \theta_{a,\sc{T}a^{n+1}} = \theta_{a,\sc{T}a}$ since $\sc{T}a^n=1$. This contradicts $\theta_{a,\sc{T}a} = \pi$, however. Thus if $\theta_{a,\sc{T}a}=\pi$ then $n$ is even. Then the set $\sc{R} = \{1,\sc{T}a^2, \sc{T}a^4,\ldots, \sc{T}a^{n-2}\}$ is a subgroup of $\sc{A}$ with $|\sc{R}|=n/2$. Further if we take $\sc{S} = \{1,a,\ldots, a^{n-1}\}\subseteq \sc{A}$ then we can see that $\sc{R}\subseteq \sc{S}^\perp$. Since $\sc{R}$ is a subgroup of the group $\sc{S}^\perp$ this tells us that $|\sc{S}^\perp|$ must be a multiple of $n/2$ by Lagrange's theorem i.e., $|\sc{S}^\perp| = mn/2$ for some nonzero integer $m$. By Thm.~\ref{thm:character_theory} we have that $|\sc{A}| = mn^2/2$. Then by Thm.~\ref{thm:size_is_square} there must be some integer $l$ such that $l^2 = mn^2/2$. Since $n$ is even we can write $n=2n'$ for some integer $n'$. Then we see that $l^2 = 2mn^{\prime 2}$ and thus $\sqrt{2m} = l/n'$. Since the only integers whose square root is a rational number are the squares we know that $2m=k^{\prime 2}$ for some integer $k'$. Then we see that $k^{\prime 2}$ is even and thus $k'$ is even. Thus $k' = 2k$ for some integer $k$ and $m=2k^2$. We thus have $|\sc{A}| = mn^2/2 = (nk)^2$.
\end{proof}

\begin{example}
Consider $\mathbb{Z}_n$ gauge theory. We know that it is time-reversal invariant under the standard action of time reversal. Further we know that there exists $e\in \sc{A}$ which has order $n$. We also know that $|\sc{A}|=n^2$. Thus there is at least one time reversal invariant theory which saturates the lower bound given by Lem.~\ref{lem:bound_b}.
\end{example}

\begin{lemma}
Let $\sc{A}$ be a time-reversal invariant fermionic Abelian TO. Suppose it has an anyon $a\in \sc{A}$ with order $n$ such that $a^k\neq c$ for any $k$. Then we have that
\begin{equation}
|\sc{A}| = (kn)^2 \ \mathrm{or} \ 2(nk)^2.
\end{equation}
for some positive integer $k$. If there is no anyon $a\in \sc{A}$ such that $\sc{T}a = ac$ then $|\sc{A}|=2(nk)^2$.
\label{lem:bound_f}
\end{lemma}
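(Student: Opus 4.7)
The plan is to reduce to the bosonic case by quotienting out the fundamental fermion and applying Lem.~\ref{lem:bound_b}. I would form the bosonic Abelian TO $\sc{A}_b = \sc{A}/\{1,c\}$, which is well defined because $c$ is the unique nontrivial anyon that braids trivially with everything. Since $\sc{T}c=c$ by assumption, the time-reversal action descends to an involution on $\sc{A}_b$ that negates the topological spin up to the standard $\pi$ ambiguity inherited from the quotient. The hypothesis $a^k\neq c$ for all $k$ is exactly what guarantees that the image $\bar{a}$ of $a$ in $\sc{A}_b$ still has order $n$, rather than order $n/2$.

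Next, I would run the strategy of Lem.~\ref{lem:bound_b} verbatim on $\sc{A}_b$ using only braiding data, which is unambiguous after the quotient. The phase $\theta_{\bar{a},\sc{T}\bar{a}}$ must be $0$ or $\pi$. If it is $0$, then $\sc{R}=\{1,\sc{T}\bar{a},\dots,\sc{T}\bar{a}^{n-1}\}$ sits in $\sc{S}^\perp$ for $\sc{S}=\langle \bar{a}\rangle$, so by Thm.~\ref{thm:character_theory} applied inside $\sc{A}_b$ we get $n^2\mid |\sc{A}_b|$. If it is $\pi$, the same argument that appears in Lem.~\ref{lem:bound_b} (multiplying through $\theta_{\bar{a},\sc{T}\bar{a}^2}$) forces $n$ to be even and provides a subgroup of order $n/2$ inside $\sc{S}^\perp$, yielding $n^2/2\mid |\sc{A}_b|$, and hence $n^2\mid 2|\sc{A}_b|=|\sc{A}|$.

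With $n^2\mid |\sc{A}|$ established, I invoke Lem.~\ref{lem:size_is_square}, which restricts $|\sc{A}|$ to be either $l^2$ or $2l^2$ for some integer $l$. In the first case, $n\mid l$ follows because $l^2$ is divisible by $n^2$, so $l=kn$ and $|\sc{A}|=(kn)^2$. In the second case, one must separate the two subcases of the previous paragraph: when $\theta_{\bar{a},\sc{T}\bar{a}}=0$, the divisibility $n^2\mid |\sc{A}_b|=l^2$ yields $n\mid l$, and when $\theta_{\bar{a},\sc{T}\bar{a}}=\pi$ one writes $n=2n'$, uses $l^2=2m n'^2$, deduces $n'\mid l$, and then an elementary parity argument upgrades this to $l=k n$. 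Either way $|\sc{A}|=2(kn)^2$. For the final claim, the hypothesis that no $\sc{T}a'=a'c$ holds in $\sc{A}$ places us squarely in the second alternative of Lem.~\ref{lem:size_is_square}, forcing $|\sc{A}|=2l^2$, and so $|\sc{A}|=2(kn)^2$.

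The main obstacle is the bookkeeping in the $\theta_{\bar{a},\sc{T}\bar{a}}=\pi$ subcase: one has to verify that $n$ must be even and then reconcile the fractional divisibility $n^2/2\mid |\sc{A}_b|$ with the square-versus-twice-a-square dichotomy coming from Lem.~\ref{lem:size_is_square}, extracting integer factorizations without overcounting. A minor but important sanity check is that the $\pi$ ambiguity in topological spin on $\sc{A}_b$ does not contaminate the transferred argument: the proof of Lem.~\ref{lem:bound_b} only uses topological spin through $\theta_{a,a}=2\theta_a$ and through $\theta_{b,d}=-\theta_{\sc{T}b,\sc{T}d}$, both of which are honest braiding statements in the ambient fermionic theory $\sc{A}$, so no ambiguity leaks in.
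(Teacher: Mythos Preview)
Your proposal is correct and follows essentially the same approach as the paper's own proof: pass to $\sc{A}_b=\sc{A}/\{1,c\}$, split on whether $\theta_{a,\sc{T}a}$ is $0$ or $\pi$ to obtain a divisibility statement of the form $|\sc{A}_b|=mn^2$ or $|\sc{A}_b|=mn^2/2$, and then combine with Lem.~\ref{lem:size_is_square} to pin down $|\sc{A}|$ as $(kn)^2$ or $2(kn)^2$. Your observation that only braiding data (not topological spin) is needed in the transferred argument is exactly the point the paper makes when it remarks that topological spin is defined only modulo $\pi$ on the quotient while braiding is unchanged.
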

\begin{proof}
First we note that since $a^k\neq c$ for any $k$ then $\sc{S} = \{1,a,\ldots, a^{n-1}\}$ forms a subgroup of $\sc{A}_b=\sc{A}/\{1,c\}$ of size $n$. We work in this quotient for the rest of the proof. Recall that topological spin is then only defined modulo $\pi$, but braiding is unchanged. Now we must have that:
\begin{equation}
\theta_{a,\sc{T}a} = -\theta_{\sc{T}a,a}=-\theta_{a,\sc{T}a} \pmod{2\pi},
\end{equation}
so we conclude that $\theta_{a,\sc{T}a} = 0,\pi$. 

Suppose first that $\theta_{a,\sc{T}a}=0$, then by the same logic as above we see that $\sc{R} = \{1,\sc{T}a,\ldots, \sc{T}a^{n-1}\}\subseteq \sc{S}^\perp$. Now since $\sc{R}$ is a subgroup of $\sc{S}^\perp$ we must have that $n$ divides $|\sc{S}^\perp|$, by Lagrange's theorem. So then $|\sc{S}^\perp| = mn$ for some integer $m$. By Thm.~\ref{thm:character_theory} we see that $|\sc{A}_b| = mn^2$ and hence $|\sc{A}|=2mn^2$. But then by Lem.~\ref{lem:size_is_square} there must be some integer $l$ such that $l^2$ or $2l^2 = 2mn^2$, and hence $m=k^2,2k^{\prime2}$ for some integer $k$. Thus we see $|\sc{A}|=(nk)^2$ or $2(nk)^2$ where $k'=2k$.

Suppose instead that $\theta_{a,\sc{T}a}=\pi$. Then by the same logic as above we see that $\sc{R} = \{1,\sc{T}a^2,\ldots, \sc{T}a^{n-2}\}\subseteq \sc{S}^\perp$ and $n$ is even. Thus $n/2$ must divide $|\sc{S}^\perp|$ so we can write $|\sc{S}^\perp| = mn/2$ for some $m$. By Thm.~\ref{thm:character_theory} we see that $|\sc{A}_b| = mn^2/2$ and thus $|\sc{A}|=mn^2$. Then by Lem.~\ref{lem:size_is_square} we see that $|\sc{A}|=l^2$ or $2l^2$ for some $l$. We conclude that $m=k^2$ or $2k^2$ and thus that $|\sc{A}|=(nk)^2$ or $2(nk)^2$.

If there is no $a\in \sc{A}$ such that $\sc{T}a=ac$ then we know from Lem.~\ref{lem:size_is_square} that $|\sc{A}|=2l^2$ and we thus conclude that $|\sc{A}|=2(nk)^2$.
\end{proof}

\begin{remark}
Note that there would have been an easier proof of these statements had $a\in \sc{A}$ been a boson such that $\sc{T}a = a^k$. First since $\sc{T}a$ and $a$ must have the same order, we note that $k$ must be relatively prime to $n$. Then we see that $\sc{T}\langle a \rangle = \langle a\rangle$. Further suppose that $b\in \sc{T}\langle a\rangle^\perp$, then $b = \sc{T}d$ for some $d$ such that $\theta_{d,a}=0$. We see this means $0 = \theta_{d,a^k} = -\theta_{\sc{T}d,\sc{T}a^k} = -\theta_{b,a}$ and hence $b\in \langle a\rangle^\perp$. Since $\sc{T}$ is invertible we see that $\sc{T}\langle a\rangle^\perp = \langle a\rangle^\perp$. Thus $\sc{T}$ will restrict to an automorphism on the TO formed by condensing $a$, $\sc{A}' = \langle a\rangle^\perp/\langle a\rangle$. 

From Thm.~\ref{thm:condensation_relation} we know that $|\sc{A}| = n^2|\sc{A}'|$. Since $\sc{A}'$ is a time-reversal invariant Abelian TO then we can use the earlier results to finish the proof.

The case where $a\in \sc{A}$ is a boson which is mapped to one of its powers under time reversal may seem a little restrictive. However, we saw in the main text that this can often be accomplished in a time reversal invariant theory if $a$ is taken to be the vison.
\end{remark}

\subsection{$\U_f\rtimes \bb{Z}_2^T$ conserving fermionic Abelian TOs}
\label{app:sub_charge_time}

We review some powerful results about $\U_f\rtimes \bb{Z}_2^T$ conserving fermionic Abelian TOs that were given in Ref.~\cite{lapa_anomaly_2019} and conclude by extending them slightly. First we state a very useful theorem about fermionic Abelian TOs.

\begin{theorem}
Every fermionic Abelian TO $\sc{A}$ can be written as $\sc{A} = \sc{A}_b\boxtimes \{1,c\}$ where $\sc{A}_b$ is an Abelian bosonic TO. In particular $\sc{A}_b$ is closed under fusion and $c\notin \sc{A}_b$.
\label{thm:decomp}
\end{theorem}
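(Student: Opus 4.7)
The plan is to reduce the claim to a purely group-theoretic splitting, after establishing one topological input: that the fermion $c$ is not a ``square'' in $\sc{A}$. Suppose for contradiction that $b\times b=c$ for some $b\in\sc{A}$. Bilinearity of the braiding phase and $\theta_{a,a}=2\theta_a$ give
\[
\theta_{b^2,b}=2\theta_{b,b}=4\theta_b,
\]
while the ribbon identity $\theta_{xy}=\theta_x+\theta_y+\theta_{x,y}$ for Abelian anyons (see Ref.~\cite{kitaev_anyons_2006}) gives $\theta_{b^2}=2\theta_b+\theta_{b,b}=4\theta_b$ as well. Transparency of $c$ forces $\theta_{b^2,b}=\theta_{c,b}=0$, hence $4\theta_b\equiv 0\pmod{2\pi}$; but $\theta_{b^2}=\theta_c=\pi$ forces $4\theta_b\equiv\pi\pmod{2\pi}$. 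The two are incompatible, so no such $b$ exists.

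I would then invoke the standard fact that, in any finite Abelian group, a cyclic subgroup $\langle c\rangle$ of prime order $p$ is a direct summand iff $c\notin p\sc{A}$; the nontrivial direction is obtained by choosing a character of $\sc{A}$ with values in $\bb{Z}_p$ that is nontrivial on the (nonzero) class $[c]\in\sc{A}/p\sc{A}$. Applied with $p=2$, the previous step yields an Abelian subgroup $\sc{A}_b\subseteq\sc{A}$ with $c\notin\sc{A}_b$ and $\sc{A}=\sc{A}_b\oplus\langle c\rangle$ as groups. To promote this splitting to a factorization of topological orders I would verify that the braiding inherited by $\sc{A}_b$ is nondegenerate: if $a\in\sc{A}_b$ has $\theta_{a,x}=0$ for all $x\in\sc{A}_b$, then since $\theta_{a,c}=0$ automatically, $a$ braids trivially with all of $\sc{A}$; the fermionic nondegeneracy of $\sc{A}$ then forces $a\in\{1,c\}$, and $c\notin\sc{A}_b$ leaves $a=1$. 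Because $\sc{A}_b$ and $\{1,c\}$ also braid trivially with one another, this promotes to $\sc{A}=\sc{A}_b\boxtimes\{1,c\}$ as TOs, with the topological spins $\theta_{ac^k}=\theta_a+k\pi$ matching the stacking.

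The main obstacle is the first step. Any attempt to derive a contradiction from $b^2=c$ using only $\theta_{a,a}=2\theta_a$ and bilinearity (without the ribbon identity) fixes $\theta_{b^2}$ only modulo $\pi$ via $\theta_{b^2,b^2}=4\theta_{b,b}=8\theta_b$, which cannot distinguish $\theta_c=\pi$ from $0$; the ribbon identity is needed in an essential way to eliminate this sign ambiguity and rule out an order-$4$ square root of $c$. The subsequent group-theoretic splitting and the nondegeneracy check are routine consequences.
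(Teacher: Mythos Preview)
Your argument is correct and complete. The paper itself does not prove this theorem but simply cites Corollary~A.19 of Ref.~\cite{drinfeld_braided_2010}, so your route is genuinely different: it is an elementary, self-contained proof tailored to the Abelian case. Your first step (no square root of $c$) is in fact the $n=2$ case of Lemma~\ref{lemma:nopower}, which the paper proves later by the same ribbon-identity computation $\theta_{b^n}=n^2\theta_b$; you have correctly identified that this is the only step requiring topological input beyond bilinearity of the braiding. The remaining two steps---splitting off $\langle c\rangle$ as a direct summand via a character on $\sc{A}/2\sc{A}$, and verifying nondegeneracy of the braiding on the complement using the fermionic nondegeneracy of $\sc{A}$---are exactly right and match the paper's Definition of a bosonic Abelian TO. Compared with invoking the general factorization theorem for braided fusion categories, your argument trades generality for transparency: it makes explicit that the obstruction to splitting is purely the question of whether $c$ is a square, and it avoids any categorical machinery.
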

\begin{proof}
The proof of this is given in Corollary A.19 of Ref.~\cite{drinfeld_braided_2010} and discussed in Ref.~\cite{lapa_anomaly_2019}.
\end{proof}

\begin{remark}
Note that this theorem is totally general, and does not rely on any underlying symmetries.
\end{remark}

\begin{example}
This can be demonstrated with a simple example. Consider $\U_q$, where $q$ is odd. We can express this as $\sc{A} = \{1,a,\ldots, a^{2q-1}\}$, where $\theta_a = \pi/q$. We see that $\theta_{a^q,a^n} = 2nq\theta_a = 2n\pi =0\pmod{2\pi}$ and $\theta_{a^q} = q^2\theta_a = q\pi = \pi \pmod{2\pi}$ since $q$ is odd. So $a^q=c$ is a transparent fermion. Thus we can write $\sc{A} = \{1,a^2,a^4,\ldots, a^{2q-2}\} \times \{1,c\}$. The even powers of $a$ are closed under fusion and braiding is nondegenerate on this set.
\end{example}

\begin{lemma}
Let $\sc{A} = \sc{A}_b\boxtimes \{1,c\}$ be a fermionic Abelian TO which respects $\U_f$. Then there exists a unique $\gamma \in \sc{A}_b$ such that:
\begin{equation}
\theta_{\gamma, b} = \pi Q_b \ \mathrm{for \ all} \ b\in\sc{A}_b.
\end{equation}
Moreover, $\gamma^2 = v$ or $\gamma^2 = vc$, where $v$ is the vison.
\label{lem:gamma_exist}
\end{lemma}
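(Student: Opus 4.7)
My plan is to deduce the lemma as a direct application of the character theory developed in Appendix \ref{app:sub_condensation}, specifically the proof of Theorem \ref{thm:character_theory} which shows that every character $\sc{A}_b\to\bb{S}^1$ is uniquely of the form $\theta_{x,\cdot}$ for some $x\in\sc{A}_b$ when $\sc{A}_b$ is a bosonic Abelian TO (where braiding is nondegenerate).

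First I would define the map $\chi\colon \sc{A}_b\to \bb{S}^1$ by $\chi(b)=\pi Q_b$, where $Q_b$ denotes the $\U$ charge. In the fermionic system with $\U_f$ symmetry, $Q_b$ is defined modulo $2$ (as stated around Eq.~\eqref{eq: thetava} and the accompanying footnote), so $\pi Q_b$ is well defined modulo $2\pi$. Because charge is additive under fusion, $\chi$ is a group homomorphism $\sc{A}_b\to \bb{S}^1$, i.e.\ a character of $\sc{A}_b$. Since $\sc{A}_b$ is a bosonic Abelian TO, the braiding on it is nondegenerate, so the argument in the proof of Theorem \ref{thm:character_theory} tells us the map $x\mapsto \theta_{x,\cdot}$ gives a bijection between $\sc{A}_b$ and its character group. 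Applying this bijection to $\chi$ yields a unique $\gamma\in\sc{A}_b$ with $\theta_{\gamma,b}=\pi Q_b$ for all $b\in \sc{A}_b$. This gives existence and uniqueness in a single stroke.

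For the statement about $\gamma^2$, I would compute $\theta_{\gamma^2,b}=2\theta_{\gamma,b}=2\pi Q_b\pmod{2\pi}$ for every $b\in \sc{A}_b$. By Eq.~\eqref{eq: thetava} this is exactly the defining braiding relation of the vison $v$, which lives in $\sc{A}=\sc{A}_b\boxtimes\{1,c\}$. Writing $v=v_b c^\epsilon$ with $v_b\in\sc{A}_b$ and $\epsilon\in\{0,1\}$, and using that $c$ braids trivially with all of $\sc{A}_b$, we get $\theta_{v,b}=\theta_{v_b,b}$ for all $b\in\sc{A}_b$. Hence $\gamma^2$ and $v_b$ induce the same character on $\sc{A}_b$, and the uniqueness clause of the bijection above forces $\gamma^2=v_b$. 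Translating back, this means $\gamma^2=v$ when $\epsilon=0$ and $\gamma^2=vc$ when $\epsilon=1$, which is the claimed dichotomy.

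The only genuine subtlety, and the step I would be most careful about, is verifying that $\chi(b)=\pi Q_b$ really is a well-defined $\bb{S}^1$-valued homomorphism on $\sc{A}_b$; once this is set up, both the existence/uniqueness of $\gamma$ and the identification of $\gamma^2$ follow immediately from character-theoretic bijection together with the defining property of the vison, without any further fusion-rule combinatorics or appeal to $\bb{Z}_2^T$.
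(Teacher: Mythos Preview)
Your proposal is correct and follows essentially the same route as the paper's proof: both verify that $b\mapsto \pi Q_b$ is a character on $\sc{A}_b$, invoke nondegeneracy of braiding to produce the unique $\gamma$, and then compare $\theta_{\gamma^2,\cdot}$ with the vison's defining braiding to conclude $\gamma^2\in\{v,vc\}$. The only cosmetic difference is that where the paper cites the $S$-matrix nondegeneracy result from Ref.~\cite{barkeshli_symmetry_2019}, you instead invoke the character-theory bijection already established in the proof of Theorem~\ref{thm:character_theory}, which is a nice self-contained choice.
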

\begin{proof}
We repeat the arguments of Ref.~\cite{lapa_anomaly_2019}. Since $\U$ charge is defined modulo two in this fermionic theory, then we must have that $\pi Q_a + \pi Q_b = \pi Q_d\pmod{2\pi}$ if $ab=d$. If we restrict to the bosonic TO $\sc{A}_b$, then we can use arguments about the nondegeneracy of the $S$-matrix \cite{barkeshli_symmetry_2019} [see Eq.~(45)] to see that there must be a unique $\gamma \in \sc{A}_b$ such that $\theta_{\gamma,b} = \pi Q_b$ for every $b\in \sc{A}_b$.

Then since every element of $b\in \sc{A}_b\times c$ is such that $b=dc$ for some $d\in \sc{A}_b$ and $c$ braids trivially with $\gamma$ we must have that:
\begin{equation}
\theta_{\gamma,b} = \pi Q_b + \begin{cases} 0&\mbox{if } b\in \sc{A}_b\\ \pi &\mbox{if } b\in\sc{A}_b\times c\end{cases}.
\end{equation}
We then have that $\theta_{\gamma^2,b} = 2\pi Q_b$ for all $b\in \sc{A}$. Thus $\gamma^2$ has identical braiding to the vison, and since braiding is nondegenerate up to fusion with $c$ we must have that $\gamma^2 = v$ or $vc$.
\end{proof}

\begin{lemma}
Let $\sc{A} = \sc{A}_b\boxtimes \{1,c\}$ be a fermionic Abelian TO which respects $\U_f\rtimes \bb{Z}_2^T$. Then the $\gamma$ discussed in Lem.~\ref{lem:gamma_exist} also has the properties that $4\theta_\gamma = 0$. This will imply that $\gamma^2 = v$, so $v\in \sc{A}_b$. Moreover, $Q_\gamma$ will be an integer, so the charge of $v$ will be even.
\label{lem:gam_almost_boson}
\end{lemma}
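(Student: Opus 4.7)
The plan is to study how time reversal acts on the anyon $\gamma$ characterized by Lemma \ref{lem:gamma_exist}, and to use the constraint $\theta_{\sc{T}\gamma}=-\theta_\gamma$ to pin down $\theta_\gamma$ modulo $\pi/2$.

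First, I would identify $\sc{T}\gamma$ in terms of $\gamma^{-1}$ up to an ``anomaly'' anyon in $\sc{A}_b$. Because $\sc{T}c=c$, the decomposition $\sc{A}=\sc{A}_b\boxtimes\{1,c\}$ produces for each $b\in\sc{A}_b$ a unique expression $\sc{T}b=b_0 c^{k(b)}$ with $b_0\in\sc{A}_b$, and the map $k\colon \sc{A}_b\to \bb{Z}_2$ is a group homomorphism since the decomposition respects fusion. Nondegeneracy of braiding on $\sc{A}_b$ (Theorem \ref{thm:character_theory}) produces a unique $\alpha\in\sc{A}_b$ with $\theta_{\alpha,b}=\pi k(b)$. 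Combining the defining property $\theta_{\gamma,b}=\pi Q_b$ with $\sc{T}$-invariance of charge and the sign flip of braiding under $\sc{T}$ then yields $\sc{T}\gamma=\gamma^{-1}\alpha c^{k_0}$ for some $k_0\in\{0,1\}$.

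Next I would extract the key properties of $\alpha$. Imposing $\sc{T}^2\gamma=\gamma$ forces $\sc{T}\alpha=\alpha$. The relation $\theta_{\alpha,b}=\pi k(b)\in\{0,\pi\}$ combined with bilinearity and nondegeneracy on $\sc{A}_b$ gives $\alpha^2=1$, hence $Q_\alpha\in\bb{Z}$ (from $2Q_\alpha\equiv 0\pmod 2$) and, using $\sc{T}\alpha=\alpha$, also $2\theta_\alpha=0$. Now applying $\theta_{\sc{T}\gamma}=-\theta_\gamma$ and expanding $\theta_{\gamma^{-1}\alpha c^{k_0}}$ via the Abelian fusion rule $\theta_{ab}=\theta_a+\theta_b+\theta_{a,b}$, together with $\theta_{\gamma,\alpha}=\pi Q_\alpha$ from the defining property and the transparency of $c$, produces
\begin{equation}
2\theta_\gamma \equiv -\theta_\alpha+\pi Q_\alpha-k_0\pi \pmod{2\pi}.
\end{equation}
Doubling both sides makes every term on the right vanish modulo $2\pi$ (since $2\theta_\alpha=0$, $Q_\alpha\in\bb{Z}$, and $2k_0\pi\equiv 0$), giving $4\theta_\gamma=0$.

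The remaining two claims then follow quickly. From $\theta_{\gamma^2}=2\theta_\gamma+\theta_{\gamma,\gamma}=4\theta_\gamma=0$ and the dichotomy $\gamma^2\in\{v,vc\}$ of Lemma \ref{lem:gamma_exist}, together with $\theta_v=\pi\sigma_H=0$ (time reversal forces $\sigma_H=0$) and $\theta_{vc}=\pi$, I conclude $\gamma^2=v$ and in particular $v\in\sc{A}_b$. Finally, the defining property applied to $b=\gamma$ yields $2\theta_\gamma=\theta_{\gamma,\gamma}=\pi Q_\gamma$; doubling gives $0=4\theta_\gamma=2\pi Q_\gamma$, so $Q_\gamma\in\bb{Z}$ and the charge of $v=\gamma^2$ is even. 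The main obstacle is the bookkeeping around $\alpha$: one might naively hope to show $\sc{T}\gamma=\gamma^{-1}$ outright, which would incorrectly strengthen the conclusion to $2\theta_\gamma=0$. The anyon $\alpha$ measures precisely the failure of $\sc{T}$ to preserve the specific bosonic lift $\sc{A}_b$, and handling it carefully is exactly what yields the correct relation $4\theta_\gamma=0$ rather than $2\theta_\gamma=0$.
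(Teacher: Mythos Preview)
Your proof is correct. The paper takes a closely related but more compact route: rather than parametrizing $\sc{T}\gamma=\gamma^{-1}\alpha c^{k_0}$ and working out the properties of $\alpha$, it works directly with the composite anyon $\gamma\sc{T}\gamma$ (which in your notation is $\alpha c^{k_0}$). It shows that $\theta_{\gamma\sc{T}\gamma,b}$ equals $0$ or $\pi$ according to whether $\sc{T}b$ lies in $\sc{A}_b$ or in $\sc{A}_b c$, then doubles at $b=\gamma$ to obtain $0=2\theta_{\gamma\sc{T}\gamma,\gamma}=4\theta_\gamma+2\theta_{\sc{T}\gamma,\gamma}$; the cross term vanishes because $\theta_{\sc{T}\gamma,\gamma}=-\theta_{\gamma,\sc{T}\gamma}$ (from $\sc{T}$-antisymmetry plus symmetry of braiding), giving $4\theta_\gamma=0$. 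Your approach instead makes the auxiliary anyon $\alpha$ explicit, establishes $\alpha^2=1$, $\sc{T}\alpha=\alpha$, $Q_\alpha\in\bb{Z}$, and then feeds these into the spin constraint $\theta_{\sc{T}\gamma}=-\theta_\gamma$; this is longer but arguably more transparent about why one only gets $4\theta_\gamma=0$ rather than $2\theta_\gamma=0$. The remaining deductions ($\gamma^2=v$ via $\theta_v=\pi\sigma_H=0$, and $Q_\gamma\in\bb{Z}$ from $2\pi Q_\gamma=4\theta_\gamma$) are argued the same way in both.
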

\begin{proof}
We want to consider the braiding of $\gamma \sc{T}\gamma$ with some anyon $b\in \sc{A}_b$. We see that:
\begin{align}
\theta_{\gamma \sc{T}\gamma, b} &= \theta_{\gamma,b} + \theta_{\sc{T}\gamma, b}\\
&= \pi Q_b - \theta_{\gamma, \sc{T}b}\\
&= \begin{cases} 0 &\mbox{if } \sc{T}b\in \sc{A}_b\\ \pi &\mbox{if } \sc{T}b\in \sc{A}_b\times c\end{cases}
\end{align}
Since $\gamma \in \sc{A}_b$ this will mean that
\begin{align}
0 &= 2\theta_{\gamma \sc{T}\gamma, \gamma}\\
&= 4\theta_\gamma + 2\theta_{\sc{T}\gamma, \gamma}.
\end{align}
We know from time reversal that $\theta_{\sc{T}\gamma,\gamma} = -\theta_{\gamma, \sc{T}\gamma}$, so $2\theta_{\sc{T}\gamma,\gamma} = 0$. From the above we thus conclude that $4\theta_{\gamma} = 0$. Then we conclude $\gamma^2$ is a boson. 

Now from Lem.~\ref{lem:gamma_exist} we know that $\gamma^2 = v$ or $vc$. Since $\gamma^2$ is a boson, we must have $\gamma^2 = v$.

We finally have that $2\pi Q_\gamma = \theta_{\gamma,v}=\theta_{\gamma,\gamma^2}=4\theta_\gamma = 0$. So then $\gamma$ has integer charge. Since $v=\gamma^2$, $v$ must have even charge.
\end{proof}

\begin{example}
Consider $\bb{Z}_2$ gauge theory stacked with the fundamental fermion. Every anyon in this theory will square to the vacuum. Since the vacuum has even charge, this means every particle will have integer charge. In particular, the vison must braid trivially with every particle, so $v=1$. Given the natural charge assignment of the theory, we have $\gamma = m$. We see that $\gamma$ has integer charge and $\gamma^2 = v$ as it must.
\end{example}

\section{Results about non-Abelian TOs}
\label{app:non_abelian_results}

We begin in Appendix \ref{app:non_sub_G} by arguing a result about braiding with translation invariant anyons in a SET $\sc{C}$ which is assumed to be symmetric under the full symmetry group $G$ of Eq.~\eqref{eqn:symm_gp}. We see in the special case that translation does not permute anyons $a=\sc{T}a$, where $a$ is the background anyon. This result will also later prove useful when we consider the uniqueness of $\bb{Z}_{2q}$ gauge theory even if translation is allowed to permute anyons.

We then switch gears to prove some results bounding the numbers of anyons in a non-Abelian theory. In Appendix \ref{app:non_sub_trs} we first use the results of Appendix \ref{app:abelian_results} to prove a few things about the size of the subgroup of Abelian anyons $\sc{A}\subseteq \sc{C}$ when $\sc{C}$ is a time reversal invariant TO. Since $|\sc{C}|\geq |\sc{A}|$ with equality only when $\sc{C}$ is an Abelian TO this gives us a powerful tool to bound the size of $|\sc{C}|$. Finally in Appendix \ref{app:non_sub_cont} we use the results of Appendix \ref{app:non_sub_trs} to demonstrate that every minimal order which satisfies the full symmetry group $G$ of Eq.~\eqref{eqn:symm_gp} at filling $\nu = 1/q$ with $q$ even must contain exactly $8q^2$ anyons.

\subsection{SETs of $G$}
\label{app:non_sub_G}

In this section we offer an argument for the following result. Suppose that $\sc{C}$ is a SET of the full group $G$ in Eq.~\eqref{eqn:symm_gp}. Then the background anyon and its time-reversal partner must braid identically with all translation invariant anyons, i.e.
\begin{equation}
\theta_{a,d} = \theta_{\sc{T}a,d} \ \forall d\in \sc{C} \ \mathrm{such \ that} \ T_xd=d=T_yd.
\label{eqn:identical_braiding}
\end{equation}
Before giving our argument we note an obvious corollary of it. If translation does not permute anyons then this result means that $a = \sc{T}a$ or $a = c\times\sc{T}a$ since the two braid identically with every anyon. Only the former has the correct charge assignment modulo two, however, so $a = \sc{T}a$. This means that $\theta_a = \theta_{\sc{T}a} = -\theta_a$ so $\theta_a=0,\pi$ and $a$ is a boson or fermion.

The argument for Eq.~\eqref{eqn:identical_braiding} is very straightforward, and relies on the fact that our state must be invariant under time reversal. Suppose that we think of our state as possessing a background anyon $a$ at the center of every plaquette. Then under time reversal the state will be transformed to a state with $\sc{T}a$ at the center of every plaquette. Since we have not broken time-reversal symmetry we should not be able to detect the difference between these two states. As the way we detect differences between anyons is via braiding, na\"{i}vely we should expect that this means $a$ and $\sc{T}a$ must braid identically with all anyons. However, there is a subtlety to consider if translation is allowed to permute anyons.

Let us then be careful about defining the braiding of an anyon with the background anyon. First we create a pair of anyons $d$ and $\overline{d}$. Now we want to take $d$ around a plaquette, thus braiding it with $a$, but leave $\overline{d}$ in place. This will require first defining a string operator which is able to translate only the anyon $d$ by one lattice site in either direction. Let us specialize to the $x$ direction and suppose such an operator exists. The existence of such a local operator will imply that $d$ and $d$ translated by one lattice site, call this $T_xd$, must be in the same superselection sector. But, this is precisely the definition of anyon type \cite{kitaev_anyons_2006}, so the existence of such an operator implies that $d=T_xd$. The same will be true in the $y$ direction. Thus we see that our ability to take $d$ around a plaquette and leave $\overline{d}$ in place is exactly equivalent to the statement that $T_xd=d=T_yd$.

\begin{figure}
    \centering
    \subfigure[Original state]{
        \includegraphics[width=0.45\columnwidth]{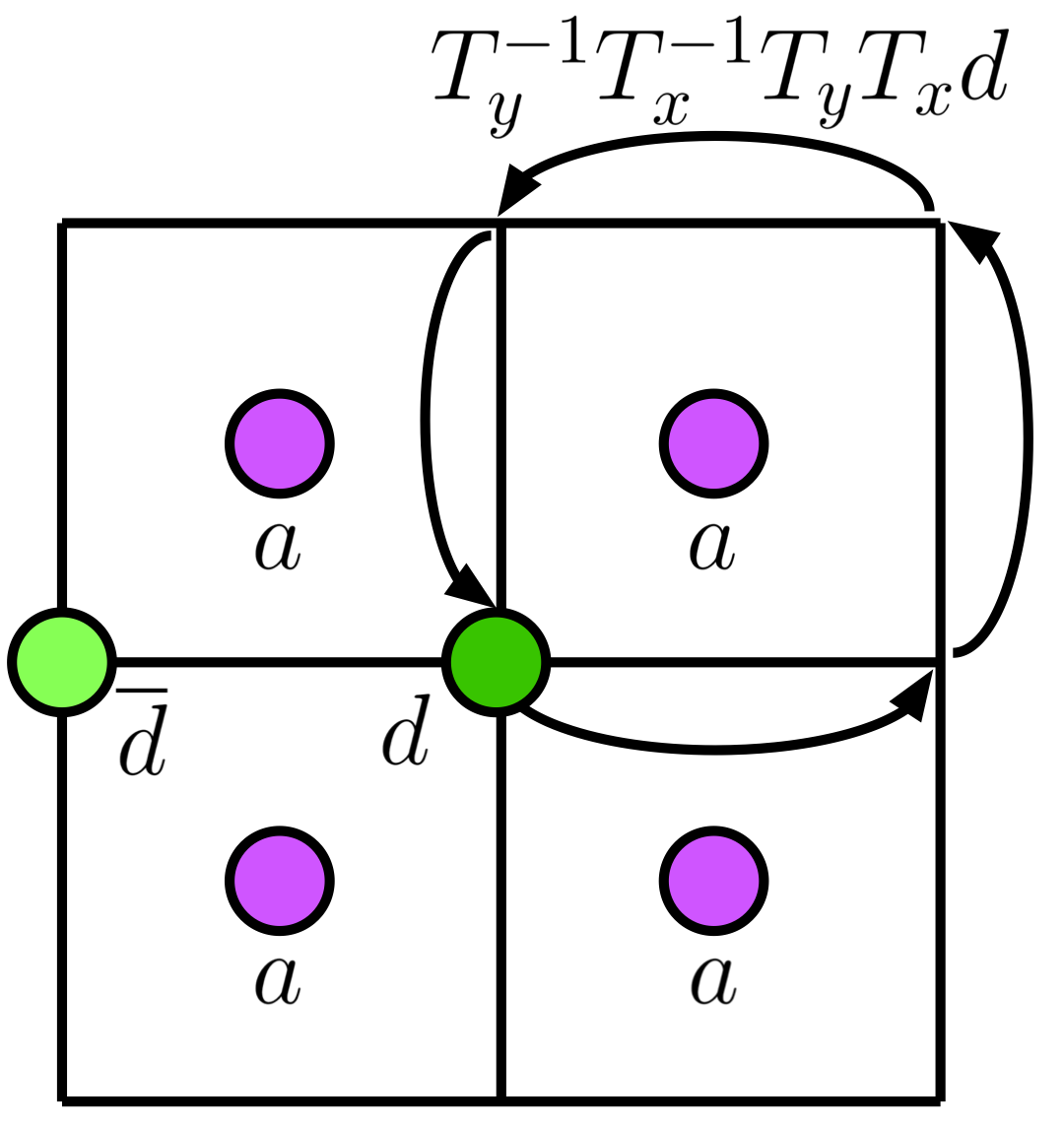}
        \label{fig:background_a}
            }
    ~ 
    \subfigure[Time reversed]{
        \includegraphics[width=0.45\columnwidth]{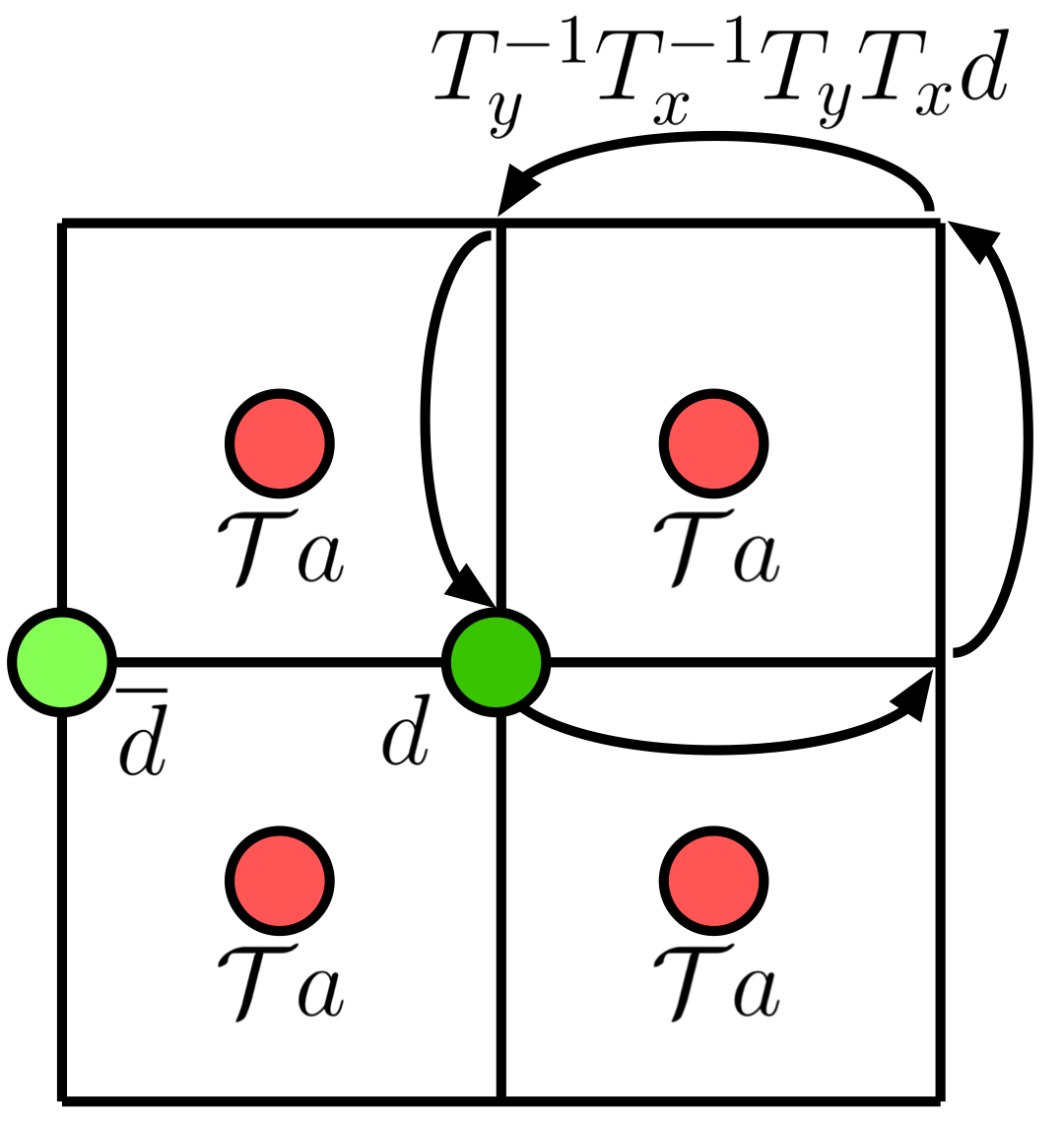}\
        \label{fig:background_Ta}
            }
    \captionsetup{justification=raggedright}
    \caption{An illustration of the requirement that $a$ and $\sc{T}a$ must braid identically with translation invariant anyons. We consider first creating $\overline{d}$, shown as a light green ball, and $d$, shown as a dark green ball. If $d$ is translation invariant then it is possible to construct a string operator $T_y^{-1}T_x^{-1}T_yT_xd$ which will take it around the plaquette as shown. In the original state, panel \subref{fig:background_a}, there is a background anyon $a$ in each plaquette so this process will cause the wave function to acquire a phase $\theta_{a,d}$. In the time reversed state, panel \subref{fig:background_Ta}, this will lead to a phase $\theta_{\sc{T}a,d}$. Since time reversal is unbroken these phases must be identical.}
	\label{fig:background}
\end{figure}

For an anyon $d$ which is left invariant by translation this process is possible, and we can find a string operator which takes $d$ around a plaquette. This will lead to the wavefunction acquiring a phase of $\theta_{a,d}$. Doing the same thing in the time reversed state will lead to a phase of $\theta_{\sc{T}a,d}$. Since we have not broken time-reversal symmetry, these two phases should be identical and we have the result. This is illustrated in Fig.~\ref{fig:background}.

\subsection{Bounds on time reversal invariant TOs}
\label{app:non_sub_trs}

In this section we will focus on proving results about a generic time reversal invariant TO; we assume no other symmetries.

\begin{remark}
We note that when restricted to $\sc{A}\subseteq \sc{C}$ the braiding phase $\theta\colon \sc{A}\times \sc{A}\rightarrow\bb{S}^1$ is still a symmetric group homomorphism. However, we no longer know that braiding is nondegenerate, even up to the fundamental fermion. Indeed it may be the case that there is some $a\in \sc{A}$ which braids trivially with all other anyons in $\sc{A}$, but nontrivially with some non-Abelian anyon in $\sc{C}\setminus \sc{A}$. Having noted that we make the following definition.
\end{remark}

\begin{definition}
Define $\sc{N}\subseteq \sc{A}$ as:
\begin{equation}
\sc{N} = \{a\in \sc{A} \mid \theta_{a,b} = 0 \ \forall b\in \sc{A}\}.
\end{equation}
\end{definition}

\begin{lemma}
The set $\sc{N}\subseteq \sc{A}$ forms a subgroup of $\sc{A}$. Furthermore, all elements of $\sc{N}$ are bosons or fermions. If we consider $\sc{A}/\sc{N}$ as a group, then braiding is nondegenerate on this group and $|\sc{A}|=|\sc{A}/\sc{N}||\sc{N}|$.
\end{lemma}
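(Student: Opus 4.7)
The plan is to verify each of the four assertions in order, using only bilinearity of the braiding phase $\theta$, the definition of $\sc{N}$, and elementary group theory. Structurally this mirrors the arguments used earlier in Appendix \ref{app:abelian_results}, with $\sc{N}$ playing the role of the radical of the bilinear form $\theta$ restricted to $\sc{A}$.

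First, I would check that $\sc{N}$ is a subgroup of $\sc{A}$. Bilinearity gives $\theta_{ab,d} = \theta_{a,d}+\theta_{b,d}=0$ for all $d\in\sc{A}$ whenever $a,b\in\sc{N}$, so $\sc{N}$ is closed under fusion, and $\theta_{a^{-1},d} = -\theta_{a,d}=0$ gives closure under inverses; the identity is in $\sc{N}$ trivially. Next, to show every $a\in\sc{N}$ is a boson or fermion, I would set $d=a$ in the defining condition to get $0=\theta_{a,a}=2\theta_a \pmod{2\pi}$, forcing $\theta_a\in\{0,\pi\}$.

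The third and most substantive step is to show that $\theta$ descends to a well-defined nondegenerate form on $\sc{A}/\sc{N}$. For well-definedness I would check that replacing $a$ by $an$ with $n\in\sc{N}$ leaves $\theta_{a,b}$ unchanged (since $\theta_{n,b}=0$ for all $b\in\sc{A}$), and similarly in the second slot; this also shows bilinearity on the quotient. For nondegeneracy, I would argue contrapositively: if $[a]\in \sc{A}/\sc{N}$ braids trivially with every coset, then $\theta_{a,b}=0$ for every $b\in\sc{A}$, so by definition $a\in\sc{N}$ and $[a]=[1]$. Finally, the size relation $|\sc{A}|=|\sc{A}/\sc{N}||\sc{N}|$ is just Lagrange's theorem applied to the finite Abelian group $\sc{A}$ and its subgroup $\sc{N}$.

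There is no real obstacle here; the lemma is essentially group-theoretic bookkeeping. The only point that requires a moment's care is that in a fermionic theory the electron $c$ automatically lies in $\sc{N}$, since $c$ braids trivially with every anyon in $\sc{C}$ and hence with every anyon in $\sc{A}\subseteq\sc{C}$. Consequently, quotienting by $\sc{N}$ simultaneously kills the $c$-ambiguity in the braiding, so the resulting form on $\sc{A}/\sc{N}$ is genuinely nondegenerate rather than nondegenerate only up to fusion with $c$, as was the case for $\sc{A}$ itself.
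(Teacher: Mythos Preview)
Your proof is correct and follows essentially the same route as the paper's own argument: bilinearity for the subgroup check, $\theta_{a,a}=2\theta_a$ for the boson/fermion claim, the definition of $\sc{N}$ as the radical for nondegeneracy on the quotient, and Lagrange's theorem for the counting. You are slightly more explicit than the paper in checking well-definedness of the descended form and in noting that $c\in\sc{N}$ automatically, but the structure is identical.
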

\begin{proof}
Suppose that $a,b\in \sc{N}$ then $\theta_{a,d} = 0 = \theta_{b,d}$ for all $d\in \sc{A}$. We then have $0 = \theta_{a,d} + \theta_{b,d} = \theta_{ab,d}$ so $ab\in \sc{N}$. Thus $\sc{N}$ is a subgroup. 

Now suppose $a\in \sc{N}\subseteq \sc{A}$, then we must have that $0 = \theta_{a,a}=2\theta_a$. Thus $a$ must be a boson or fermion.

Since $\sc{N}$ is precisely the subgroup of $\sc{A}$ that braids trivially with every element in $\sc{A}$ we see that braiding must be nondegenerate on $\sc{A}/\sc{N}$. 

The relation $|\sc{A}|=|\sc{A}/\sc{N}||\sc{N}|$ follows immediately from Lagrange's theorem.
\end{proof}

\begin{remark}
We note that if any $a\in \sc{N}$ is a fermion then topological spin in $\sc{A}/\sc{N}$ will only be defined mod $\pi$. This follows because for $b\in \sc{A}$ we see that $\theta_{ab} = \theta_a + \theta_b +\theta_{a,b} = \theta_b+\pi$. This is clear even if $\sc{C} = \sc{A}$ is a fermionic Abelian TO, where $\sc{N} = \{1,c\}$. Then $\sc{A}_b = \sc{A}/\{1,c\}$ is a bosonic TO where the topological spin is only defined mod $\pi$.
\end{remark}

\begin{lemma}
If $\sc{C}$ is a time reversal invariant TO then $\sc{T}$ restricts to an automorphism on $\sc{A}\subseteq \sc{C}$.
\end{lemma}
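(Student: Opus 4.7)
The plan is to unpack what ``time-reversal invariant'' means at the level of the modular tensor category and then check the three conditions — stability of $\sc{A}$, homomorphism, and bijectivity — in that order.

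First I would recall that the time-reversal action $\sc{T}$ is by hypothesis an anti-unitary automorphism of the full anyon theory $\sc{C}$. In particular, it permutes anyon types and preserves the fusion rules (up to conjugation), sending $\theta_a \mapsto -\theta_a$ as stipulated in the definition earlier in the excerpt. Any such categorical automorphism is forced to preserve quantum dimensions, since $d_a$ is determined intrinsically by the fusion multiplicities via $d_ad_b = \sum_c N^c_{ab}d_c$, and is the unique Perron--Frobenius eigenvalue of the fusion matrix. Thus $d_{\sc{T}a} = d_a$ for every $a\in \sc{C}$.

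Next, I would use this to show $\sc{T}(\sc{A})\subseteq \sc{A}$. If $a\in \sc{A}$ then $d_a = 1$, and by the previous step $d_{\sc{T}a} = 1$, so $\sc{T}a$ is again an Abelian anyon. Because $\sc{T}$ preserves fusion and $\sc{A}$ is closed under fusion (being an Abelian subgroup of $\sc{C}$), the restriction $\sc{T}|_{\sc{A}}$ is a group homomorphism: $\sc{T}(a\times b) = \sc{T}a \times \sc{T}b$ and $\sc{T}(1) = 1$.

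Finally, for bijectivity, I would invoke the hypothesis $\sc{T}^2 = \mathrm{id}$ on $\sc{C}$. Restricted to $\sc{A}$, this immediately yields that $\sc{T}|_{\sc{A}}$ is its own two-sided inverse, and hence a bijection. Combined with the homomorphism property, this shows that $\sc{T}|_{\sc{A}}$ is a group automorphism of $\sc{A}$, which is what we wanted.

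The only substantive step is the invariance of quantum dimensions under a categorical automorphism; everything else is bookkeeping with the definitions already laid out. If one wished to avoid appealing to the Perron--Frobenius characterization, one could instead note that $\sc{T}$ preserves the $S$-matrix entries $|S_{ab}|$ up to the permutation, and use $d_a = S_{1a}/S_{11}$ (together with $\sc{T}1 = 1$) to conclude the same. Either route is routine; I do not anticipate a real obstacle here beyond stating the correct invocation.
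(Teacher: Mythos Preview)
Your proof is correct and follows essentially the same route as the paper's: both argue that $\sc{T}$ preserves quantum dimensions (hence $\sc{T}(\sc{A})\subseteq\sc{A}$), that $\sc{T}$ respects fusion (hence is a group homomorphism on $\sc{A}$), and that $\sc{T}^2=\mathrm{id}$ gives bijectivity. The only difference is cosmetic: you justify $d_{\sc{T}a}=d_a$ via the Perron--Frobenius characterization of $d_a$, whereas the paper simply cites the symmetry-action formalism of Barkeshli et al.
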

\begin{proof}
We assume that $\sc{T}\colon \sc{C}\rightarrow \sc{C}$ is such that $\sc{T}^2$ is the identity on $\sc{C}$ and that $\theta_{\sc{T}\psi} = -\theta_\psi$ for every $\psi \in \sc{C}$.

Now suppose that $a\in \sc{A}$, then $d_a = 1$. Since $\sc{T}$ is a symmetry of $\sc{C}$ we must have that $d_{\sc{T}a} = d_a=1$ \cite{barkeshli_symmetry_2019}. Thus we see that $\sc{T}(\sc{A})\subseteq \sc{A}$. Since we take $\sc{T}^2$ to be the identity we must then have that $\sc{T}(\sc{A}) = \sc{A}$. Thus $\sc{T}$ restricts to a bijection on $\sc{A}$. Furthermore, since $\sc{T}$ is a symmetry of $\sc{C}$ we must also have that the fusion data of $\sc{C}$ is left invariant by action of $\sc{T}$ \cite{barkeshli_symmetry_2019}. Thus we must also have that $\sc{T}(ab) = (\sc{T}a)(\sc{T}b)$ for $a,b\in \sc{A}$. Since $\sc{T}\colon \sc{A}\rightarrow \sc{A}$ is a bijective homomorphism it must restrict to an automorphism on $\sc{A}$.
\end{proof}

\begin{theorem}
Let $\sc{C}$ be a time reversal invariant TO. Suppose further that there is an Abelian anyon $a\in \sc{A}\subseteq \sc{C}$ such that  $a\in \sc{A}/\sc{N}$ has order $n$. Then $|\sc{A}/\sc{N}| = (kn)^2$ or $(kn)^2/2$ for some integer $k$. If there are no anyons $a \in \sc{A}$ such that $\sc{T}a = af$ for some fermion $f\in \sc{N}$ then we have that $|\sc{A}/\sc{N}|=(kn)^2$.
\label{thm:bound_C}
\end{theorem}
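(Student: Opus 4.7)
The plan is to lift the techniques of Lemma \ref{lem:bound_b} (bosonic case) and Lemma \ref{lem:bound_f} (fermionic case) to the quotient group $\sc{A}/\sc{N}$, which carries a nondegenerate braiding form by construction and inherits the time-reversal action. The main subtlety, and the reason the theorem bifurcates into the $(kn)^2$ and $(kn)^2/2$ branches, is whether $\sc{N}$ contains a fermion.

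First I would check that $\sc{T}$ descends to an order-two automorphism on $\sc{A}/\sc{N}$. Since $\sc{T}$ is an automorphism of $\sc{A}$ and satisfies $\theta_{\sc{T}a,\sc{T}b}=-\theta_{a,b}$, it preserves the transparent subgroup $\sc{N}$, so the induced action on the quotient is well defined and squares to the identity. Topological spin descends unambiguously if every element of $\sc{N}$ is a boson; if some $f\in\sc{N}$ is a fermion, then $\theta_{xf}=\theta_x+\theta_f+\theta_{x,f}=\theta_x+\pi$, so topological spin on the quotient is defined only modulo $\pi$ — exactly the setup of $\sc{A}_b$ inside a fermionic $\sc{A}$ in Lemma \ref{lem:size_is_square}.

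Second, I would establish the square-or-half-square structure of $|\sc{A}/\sc{N}|$ by replaying the $f,g$-pairing argument of Theorem \ref{thm:size_is_square} and Lemma \ref{lem:size_is_square} directly on $\sc{A}/\sc{N}$. The maps $x\mapsto x\cdot\sc{T}x$ and $x\mapsto x\cdot\sc{T}\overline{x}$ and the identity $\mathrm{Ker}(g)=\mathrm{Im}(f)^\perp$ depend only on the $\bb{Z}_2$-action and on braiding, so the conclusion $|\sc{A}/\sc{N}|=|\mathrm{Im}(f)|^2\,|\sc{C}'|$ with $\sc{C}'\cong\bb{Z}_2^m$ carries over verbatim. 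In the no-fermion-in-$\sc{N}$ case the step $\theta_a=-\theta_a$ forces $\theta_{a,a}=0$ for $a\in\sc{C}'$ and yields $|\sc{A}/\sc{N}|=l^2$. In the fermionic-in-$\sc{N}$ case the $\pi$-ambiguity only forces $\theta_{a,a}\in\{0,\pi\}$, and the same parity analysis as in Lemma \ref{lem:size_is_square} gives $|\sc{A}/\sc{N}|\in\{l^2,\,l^2/2\}$; the $l^2/2$ branch is triggered precisely when some class $[a]$ satisfies $\theta_{[a],[a]}=\pi$, which lifts to $\sc{T}a=af$ for a fermion $f\in\sc{N}$.

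Third, I would run the powers-of-$a$ arithmetic of Lemmas \ref{lem:bound_b} and \ref{lem:bound_f} inside $\sc{A}/\sc{N}$. Antisymmetry under $\sc{T}$ gives $\theta_{a,\sc{T}a}\in\{0,\pi\}$; in either case one exhibits a subgroup $\sc{R}\subseteq\langle a\rangle^\perp$ of order $n$ (powers of $\sc{T}a$) or $n/2$ (even powers of $\sc{T}a$ when $n$ is even), so Theorem \ref{thm:character_theory} applied in $\sc{A}/\sc{N}$ forces $|\sc{A}/\sc{N}|=mn^2$ or $mn^2/2$ for some positive integer $m$. Matching this with the $l^2$-or-$l^2/2$ constraint from Step 2 and invoking the standard rationality argument (the square root of an integer is rational only if the integer is a perfect square) collapses all subcases to $|\sc{A}/\sc{N}|\in\{(kn)^2,(kn)^2/2\}$. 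The auxiliary hypothesis ``no $a\in\sc{A}$ satisfies $\sc{T}a=af$ for a fermion $f\in\sc{N}$'' eliminates the $l^2/2$ branch in Step 2 and hence the $(kn)^2/2$ outcome, giving $|\sc{A}/\sc{N}|=(kn)^2$. The main obstacle will be the bookkeeping in the fermionic-in-$\sc{N}$ case: the $\pi$-ambiguity of the topological spin on the quotient weakens several equalities to mod-$\pi$ statements, and one must verify that the order-$n$ anyon $a$ genuinely has order $n$ (not a divisor of $n$) in $\sc{A}/\sc{N}$ when applying the Lagrange-style counting. Once these parity considerations are organized the two branches glue together into the unified statement.
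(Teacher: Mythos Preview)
Your proposal is correct and follows essentially the same route as the paper: show that $\sc{T}$ preserves $\sc{N}$ and hence descends to an automorphism of $\sc{A}/\sc{N}$, then rerun the arguments of Theorem~\ref{thm:size_is_square}, Lemma~\ref{lem:size_is_square}, and Lemmas~\ref{lem:bound_b}--\ref{lem:bound_f} on the quotient, with the $\pi$-ambiguity in topological spin arising exactly when $\sc{N}$ contains a fermion. One small note: your closing worry about verifying that $a$ has order $n$ rather than a divisor in $\sc{A}/\sc{N}$ is unnecessary, since this is part of the hypothesis of the theorem.
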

\begin{proof}
Suppose that $a\in \sc{N}$, then we must have that $0 = \theta_{a,\sc{T}b}$ for every $b\in \sc{A}$, since $\sc{T}b\in \sc{A}$. But then we see that $0 = -\theta_{\sc{T}a,b}$ for every $b\in \sc{A}$, so $\sc{T}a\in \sc{N}$. Thus we conclude that $\sc{T}$ also restricts to an automorphism on $\sc{N}$. Hence it must also restrict to an automorphism on $\sc{A}/\sc{N}$. 

We further know that $\sc{A}/\sc{N}$ is an Abelian group where braiding is nondegenerate. Then we can run the same arguments of Appendix \ref{app:abelian_results} to show that if there is some element of $\sc{A}/\sc{N}$ that has order $n$ then $|\sc{A}/\sc{N}|=(kn)^2/2$ or $(nk)^2$ for some $k$. Further if there are no anyons $a\in \sc{A}$ such that $\sc{T}a = af$ for some fermion $f\in \sc{N}$ then we know that every time-reversal invariant anyon in $\sc{A}/\sc{N}$ must have $0=2\theta_a= \theta_{a,a}$. Thus the arguments of Lem.~\ref{lem:size_is_square} reveal that $|\sc{A}/\sc{N}|$ must be a square and hence $|\sc{A}/\sc{N}|=(nk)^2$.
\end{proof}

\begin{remark}
Note that Lem.~\ref{lem:bound_f} actually follows from Thm.~\ref{thm:bound_C}. If $\sc{A}$ is a fermionic Abelian TO then $\sc{N} = \{1,c\}$. Further if $a\in \sc{A}$ has order $n$ and is such that $a^k \neq c$ for any $k$, we conclude that $a\in \sc{A}/\sc{N}$ has order $n$. Then we see that Thm.~\ref{thm:bound_C} will tell us that $|\sc{A}|= |\sc{N}|(kn)^2=2(kn)^2$ or $|\sc{A}|=|\sc{N}|(kn)^2/2=(kn)^2$ for some integer $k$.
\end{remark}

\subsection{Proof by contradiction when $q$ even}
\label{app:non_sub_cont}

In this appendix we show that $|\sc{A}|=8q^2$ for any putative minimal order when $q$ is even. In the main text we assumed that $\sc{T}a=a$ in order to show this; here we do not assume this.

Suppose that $\sc{C}\in mG^{(\nu)}_{\rm GSD}$ is the minimal TO for the group $G$ in Eq.~\eqref{eqn:symm_gp}. Then let $\sc{A}\subseteq \sc{C}$ be the group of Abelian anyons and $\sc{N}\subseteq \sc{A}$ be the subgroup of $\sc{A}$ containing all anyons that braid trivially with every anyon in $\sc{A}$.
Let $c$ be the fundamental fermion.

Since we have already constructed a TO with $8q^2$ anyons (including the fundamental fermion), we must have $|\sc{A}|\leq |\sc{C}|\leq 8q^2$.

\begin{lemma}
 If $c$ is the electron, then there does not exist $b\in \sc{A}$ such that $b^n=c$ for even $n$.   
 \label{lemma:nopower}
\end{lemma}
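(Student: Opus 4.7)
The plan is a short proof by contradiction that exploits the two defining properties of the electron $c$: it is a local excitation, so it braids trivially with every anyon, yet it has nontrivial topological spin $\theta_c = \pi$. Pulling both properties back through the hypothetical relation $b^n = c$ will produce a conflict whenever $n$ is even.

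First I would assume, for contradiction, that there exists $b \in \sc{A}$ with $b^n = c$ for some even $n$. Using that $c$ braids trivially with $b$, together with bilinearity of the braiding phase on $\sc{A}$ and $\theta_{b,b} = 2\theta_b$, I would derive
\begin{equation}
0 \equiv \theta_{c,b} = \theta_{b^n, b} = n\,\theta_{b,b} = 2n\theta_b \pmod{2\pi},
\end{equation}
so that $n\theta_b = k\pi$ for some integer $k$. Next I would invoke the standard identity $\theta_{b^n} = n^2 \theta_b$ for Abelian anyons (an easy induction from $\theta_{xy} = \theta_x + \theta_y + \theta_{x,y}$, using $\theta_{b,b} = 2\theta_b$). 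Substituting,
\begin{equation}
\pi \equiv \theta_c = \theta_{b^n} = n^2 \theta_b = n\cdot(n\theta_b) = nk\pi \pmod{2\pi}.
\end{equation}
The punchline is that $n$ even forces $nk$ to be an even integer, so $nk\pi \equiv 0 \pmod{2\pi}$, contradicting $\theta_c = \pi \not\equiv 0 \pmod{2\pi}$.

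The argument has no real obstacle, being purely algebraic. The only points of care are to use bilinearity of the braiding phase strictly within $\sc{A}$ (where $b$ lies by hypothesis, so that $\theta_{b^n,b} = n\theta_{b,b}$ is valid) and to track the parity cleanly: the trivial braiding of $c$ forces $n\theta_b$ to be an integer multiple of $\pi$, and parity of $n$ then promotes $n^2 \theta_b$ to an integer multiple of $2\pi$, which is incompatible with $\theta_c = \pi$. Note that the hypothesis that $n$ is even is essential; for $n$ odd, $nk$ can be odd and no contradiction arises, consistent with the existence of $b \in \sc{A}$ with $b^q = c$ in the constructions of Section~\ref{sec:minimal_Abelian} at odd $q$.
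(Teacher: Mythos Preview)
Your proof is correct and follows essentially the same line as the paper's: both use $\theta_{c,b}=0$ to deduce $2n\theta_b\equiv 0$, then exploit the evenness of $n$ to conclude $\theta_{b^n}=n^2\theta_b\equiv 0$, contradicting $\theta_c=\pi$. The only cosmetic difference is that the paper multiplies $2n\theta_b\equiv 0$ directly by the integer $n/2$, whereas you phrase the same step via an auxiliary integer $k$ with $n\theta_b=k\pi$.
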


\begin{proof}
If $b^n=c$ then it must braid trivially with $b$, so $0=\theta_{b,c}=n\theta_{b,b}=2n\theta_b$. We can then multiply by the integer $n/2$ to see that $0 = n^2\theta_b = \theta_{b^n}$. But this is a contradiction since $\theta_c=\pi$.
\end{proof}

\begin{lemma}
One of the following two is true:
\begin{itemize}
    \item $|\sc{A}/\sc{N}|=4q^2, \sc{N}=\{1,c\}$ and $\sc{C}=\sc{A}$.
    \item $|\sc{A}/\sc{N}|=q^2$ and
\begin{equation}
\sc{N} = \{1,a^q,a^qc,c\}.
\end{equation}
In addition, $a^q$ is a boson with order exactly two in $\sc{A}$, and $v$ has order exactly $q$ in $\sc{A}$.
\end{itemize}
\end{lemma}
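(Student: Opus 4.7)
The proof will proceed by case analysis on whether the background anyon power $a^q$ lies in $\sc{N}$. The two alternatives of the lemma correspond to $a^q\notin\sc{N}$ (giving $\sc{C}=\sc{A}$ Abelian with $|\sc{A}/\sc{N}|=4q^2$) and $a^q\in\sc{N}$ (giving $\sc{N}=\{1,a^q,a^qc,c\}$ and $|\sc{A}/\sc{N}|=q^2$).

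First I collect three preliminary facts. (i) Following the argument of Sec.~\ref{sec:minimal_Abelian}, the images $\bar a,\bar v\in\sc{A}/\sc{N}$ generate $q^2$ distinct elements because any relation $\bar a^m\bar v^n=1$ with $0\le m,n<q$ forces $q\mid m$ and $q\mid n$ by braiding with $\bar v$ and $\bar a$ respectively; hence $|\sc{A}/\sc{N}|\ge q^2$. (ii) Using that $a$ is a boson or fermion (App.~\ref{app:non_sub_G}), $\theta_{a^q,a}=2q\theta_a\equiv 0\pmod{2\pi}$; and since $a^q$ has integer charge, $\theta_{a^q,v}=q\cdot 2\pi/q\equiv 0$, so $a^q$ is transparent to $\langle a,v\rangle$. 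The constraint $2q\theta_a\in 2\pi\bb{Z}$ means $\theta_a\in(\pi/q)\bb{Z}$, and since $q$ is even, $\theta_{a^q}=q^2\theta_a\in q\pi\bb{Z}$ is a multiple of $2\pi$, so $a^q$ is a boson. (iii) By the minimality argument reproduced from Sec.~\ref{sec:minimal_Abelian}, $a^{2q}=1$, so the order of $\bar a$ in $\sc{A}/\sc{N}$ is either $q$ (when $a^q\in\sc{N}$) or exactly $2q$ (when $a^q\notin\sc{N}$).

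In the case $a^q\in\sc{N}$, the boson $a^q$ is distinct from the fermion $c$, so $\{1,a^q,a^qc,c\}\cong\bb{Z}_2\times\bb{Z}_2\subseteq\sc{N}$, giving $|\sc{N}|\ge 4$ and $4\mid|\sc{N}|$. Applying Thm.~\ref{thm:bound_C} with $n=q$ yields $|\sc{A}/\sc{N}|=(kq)^2$ or $(kq)^2/2$. The half-square branch requires an anyon $a'\in\sc{A}$ with $\sc{T}a'=a'f$ for some fermion $f\in\sc{N}$; since $\sc{T}$ preserves $\U$-charge, this forces $Q_f=0$, but every fermion in $\{1,a^q,a^qc,c\}$ carries odd charge $1$ or $2$, a contradiction. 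Combined with $|\sc{A}|\le 8q^2$ and $|\sc{N}|\ge 4$, this forces $k=1$ and $|\sc{A}/\sc{N}|=q^2$, so $|\sc{N}|\le 8$. Lagrange then gives $|\sc{N}|\in\{4,8\}$; ruling out $|\sc{N}|=8$ is immediate because it would make $|\sc{A}|=8q^2=|\sc{C}|$, so $\sc{C}=\sc{A}$ would be fermionic Abelian with $\sc{N}=\{1,c\}$, contradicting $|\sc{N}|=8$. Hence $|\sc{N}|=4$ and $\sc{N}=\{1,a^q,a^qc,c\}$ exactly. The order claims follow by identifying the bosons $a^{2q}$ and $v^q$ (both in $\sc{N}$, with charges $2$ and $0$) with the unique charge-$2$ and charge-$0$ boson in $\sc{N}$, namely the identity.

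In the case $a^q\notin\sc{N}$, Thm.~\ref{thm:bound_C} with $n=2q$ gives $|\sc{A}/\sc{N}|=4q^2k^2$ or $2q^2k^2$. The half-square branch once again needs a charge-$0$ fermion in $\sc{N}$. The only fermion in $\sc{N}=\{1,c\}$ is $c$ of charge $1$; any enlargement of $\sc{N}$ by additional charge-$0$ fermions would, combined with $|\sc{A}|=|\sc{N}|\cdot|\sc{A}/\sc{N}|\ge 2q^2|\sc{N}|$, push $|\sc{A}|>8q^2$ and violate minimality. Ruling out the half-square branch then forces $|\sc{A}/\sc{N}|=4q^2$; combined with $|\sc{A}|\le 8q^2$ and $|\sc{N}|\ge 2$, we conclude $|\sc{N}|=2$ and $\sc{N}=\{1,c\}$, so $|\sc{A}|=8q^2=|\sc{C}|$ and $\sc{C}=\sc{A}$. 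The main obstacle in this case is the detailed exclusion of stray transparent charge-$0$ fermions in $\sc{N}$, which requires pushing the minimality bound together with the structural lemmas of App.~\ref{app:abelian_results}.
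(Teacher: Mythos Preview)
Your overall case split on whether $a^q\in\sc{N}$ is sensible and close to the paper's organization, but there is a genuine error in the $a^q\in\sc{N}$ branch. To rule out the half-square option of Thm.~\ref{thm:bound_C} you argue that any fermion $f\in\sc{N}$ with $\sc{T}b=bf$ must have $Q_f\equiv 0\pmod 2$, and then claim no such fermion exists in $\{1,a^q,a^qc,c\}$. But $a^qc$ \emph{is} a fermion with $Q_{a^qc}=Q_{a^q}+Q_c=1+1\equiv 0\pmod 2$, so your exclusion fails. The paper never tries to rule out the half-square in this branch; instead it bounds $|\sc{A}/\sc{N}|\leq 8q^2/|\sc{N}|\leq 2q^2$ directly from $|\sc{N}|\geq 4$, and then excludes the boundary value $2q^2$ because that would force $\sc{C}=\sc{A}$ with a transparent anyon $a^q\neq 1,c$, contradicting the definition of a fermionic TO. Your argument can be repaired by this route, but as written it does not close.

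A second, more structural issue is that your preliminary facts (ii) and (iii) import the conclusions $\theta_a\in\{0,\pi\}$ and $a^{2q}=1$ from Sec.~\ref{sec:minimal_Abelian}, where they were proved under the simplifying hypothesis that translation does not permute anyons (so $\sc{T}a=a$). The lemma you are proving lives in App.~\ref{app:non_abelian_results}, whose entire purpose is to drop that hypothesis. The paper accordingly does \emph{not} assume $a^{2q}=1$ at the outset: it lets $a$ have order $2qr$ in $\sc{A}$ and uses minimality to force $r=1$; and it derives $\theta_{a^q}=0$ from $a^q\in\sc{N}$ via $\theta_{a^q}=q^2\theta_a=(q/2)\theta_{a^q,a}=0$, without ever assuming $a$ itself is a boson or fermion. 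If you want a proof valid in the general setting, you must replace (ii) and (iii) with these arguments. In the $a^q\notin\sc{N}$ branch your exclusion of stray charge-$0$ fermions is on the right track, but note that you only get $|\sc{A}|\geq 8q^2$, not strict inequality; the equality case must again be dispatched by observing that $\sc{C}=\sc{A}$ would then carry a transparent anyon outside $\{1,c\}$.
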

\begin{proof}

First, it is evident that the order of $v$ in $\sc{A}/\sc{N}$ is a multiple of $q$ because $\theta_{v,a}=2\pi/q$. Suppose the order of $v$ in $\sc{A}/\sc{N}$ is $k_vq$ with $k_v>1$. Then $|\sc{A}/\sc{N}|$ is at least $k_v^2q^2$, for the following reason: without loss of generality, we can assume there is an Abelian anyon $x$ with $\theta_{x,v}=2\pi/k_vq$. Consider the set $\sc{S}=\{x^k v^l| 0\leq k,l\leq k_vq-1\}$. For any $l$, $x^kv^l$ braids nontrivially with $v$ as long as $k\neq 0$. If $k=0$, then $v^l$ braids nontrivially with $x$ if $l\neq 0$. Thus the braiding on $\sc{S}$ is nondegenerate, and $\sc{S}$ must be fully contained in $\sc{A}/\sc{N}$.
Because $|\sc{N}|$ is at least 2 (it contains $\{1,c\}$), we find $|\sc{A}|\geq 2k_v^2q^2>8q^2$ if $k_v>2$. Thus from the minimality assumption, we conclude $k_v=1,2$. If $k_v=2$, then the minimality assumption requires $|\sc{A}|=8q^2$ and $\sc{A}=\sc{S}\boxtimes\{1,c\}$, which is entirely Abelian. This is the first case in the statement. Thus in the following we set $k_v=1$.

First we consider the case $\sc{N}=\{1,c\}$. We then recall by Thm.~\ref{thm:bound_C} that if there are no anyons $b\in \sc{A}$ such that $\sc{T}b=bf$ for some fermion $f\in \sc{N}$ then $|\sc{A}/\sc{N}|$ is $k^2q^2$ for some $k\in \bb{Z}$. But the only fermion in $\sc{N}$ is $c$ and charge conservation forbids $\sc{T}b=bc$ for any $b$. Thus we have $|\sc{A}/\sc{N}|=k^2q^2$. If $k>2$, then
\begin{equation}
    |\sc{A}|=|\sc{A}/\sc{N}|\cdot |\sc{N}|>4q^2\times 2=8q^2,
\end{equation}
contradicting the minimality assumption. We conclude that $k=1$ or $2$ are the only possibilities. If $k=1$, then $\sc{S}=\sc{A}/\sc{N}$, which means $a^q\in \sc{N}=\{1,c\}$. Since $Q_{a^q}=1$, $a^q\neq 1$. Lemma~\ref{lemma:nopower} shows that $a^q\neq c$. Then $\sc{N}\neq \{1,c\}$ contradicting our initial assumption.
Thus we conclude that if $\sc{N}=\{1,c\}$ we must have $|\sc{A}/\sc{N}|=4q^2$, and minimality require $\sc{C}=\sc{A}$.

Next we assume $\sc{N}\neq \{1,c\}$, which implies $|\sc{N}|\geq 4$. If $a^q\notin \sc{N}$, then $|\sc{A}/\sc{N}|\geq 2q^2$ (since $\sc{A}/\sc{N}$ already contains $\sc{S}$). Together $|\sc{A}|\geq 2q^2\times 4=8q^2$. However, in this case there has to be at least one (non-Abelian) anyon in $\sc{C}$ to braid nontrivially with $\sc{N}/\{1,c\}$, so $|\sc{C}|$ must be strictly greater than $8q^2$, violating the minimality assumption. We thus conclude that $a^q\in \sc{N}$.

Now we show that in this case $|\sc{A}/\sc{N}|=q^2$. Because $|\sc{N}|\geq 4$ and $|\sc{A}|\leq 8q^2$, it follows that $|\sc{A}/\sc{N}|\leq 2q^2$. The equality is impossible, since then $|\sc{A}|=8q^2$ so $\sc{C}=\sc{A}$ but we would have a transparent anyon $a^q\neq 1,c$. So according to Thm. ~\ref{thm:bound_C} the only possible value of $|\sc{A}/\sc{N}|$ is $q^2$ ($|\sc{A}/\sc{N}|=\frac12 q^2$ is impossible because $|\sc{A}/\sc{N}|\geq |\sc{S}|=q^2$).

In the following we focus on the case $|\sc{A}/\sc{N}|=q^2$, and prove the rest of the claim about the orders of $a$ and $v$ in $\sc{A}$. 

Because $a^q$ has charge $1\pmod{2}$,  the order of $a^q\in \sc{N}\subseteq \sc{A}$ must be a multiple of two i.e., $n_a=2qr$ for some integer $r$.
Furthermore, we find
\begin{equation}
    \theta_{a^q}=q^2 \theta_a=\frac{q}{2}\cdot 2q\theta_a=\frac{q}{2}\theta_{a^q,a}=0,
\end{equation}
since $a^q\in \sc{N}$ it must braid trivially with itself. 
So $a^q$ is a boson. This means that no power of $a^q$ will be equal to $c$. Thus $\{1,a^q,\ldots,a^{q(2r-1)}\}\times \{1,c\}\subseteq \sc{N}$ is a subgroup of $\sc{N}$ with size $4r$. 
Then $|\sc{A}|=|\sc{A}/\sc{N}||\sc{N}|=4rq^2$. If $r>2$, then this cannot be a minimal TO since $\sc{A}$ has more anyons than our construction. Suppose then that $|\sc{N}|=8$, then we see that $|\sc{A}|=8q^2$. For this to be a minimal TO we must then have that it is an Abelian TO and $\sc{A}=\sc{C}$. But then $a^q\in \sc{A}$ will braid trivially with every other anyon, but $a^q\neq 1,c$. This is a contradiction.  So we conclude that $|\sc{N}|=4$. This tells us that $r=1$ so $n_a=2q$.

Next, we know that $v$ has order exactly $q$ in $\sc{A}/\sc{N}$. This means that $v^q\in \sc{N}$. Since $v^q$ is a boson it must be that $v^q=1,a^q$. But $v$ had an integer charge so $v^q$ must have even charge, since $q$ is even. However, we know that $a^q$ has odd charge. So $v^q\neq a^q$ and $v^q=1$. Thus $v$ has order $q$ in $\sc{A}$.
\end{proof}

\begin{lemma}
If $|\sc{A}/\sc{N}|=q^2$, then for every $\sigma \in \sc{C}\setminus \sc{A}$ we must have $\theta_{a^q,\sigma}=\pi$. Furthermore, for every pair of non-Abelian anyons $\sigma, \sigma'\in \sc{C}\setminus \sc{A}$ we must have that 
\begin{equation}
\sigma \times \sigma' = b + ba^qc,
\end{equation}
for some $b\in \sc{A}$. Thus $d_\sigma^2 =2$ for every non-Abelian anyon. This will further imply that the number of non-Abelian anyons must be precisely $2q^2$ and $a^qc\times \sigma = \sigma$ for every non-Abelian anyon $\sigma$.
\end{lemma}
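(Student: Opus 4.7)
The plan is to exploit the $\bb{Z}_2$ grading of $\sc{C}$ induced by braiding with $a^q$, together with the minimality bound $|\sc{C}|\leq 8q^2$ and the charge-parity structure of $\sc{N}$.

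First, since $(a^q)^2=1$, we have $2\theta_{a^q,x}=0\pmod{2\pi}$ for every $x\in\sc{C}$, so $\theta_{a^q,x}\in\{0,\pi\}$. This partitions $\sc{C}=\sc{C}_0\sqcup\sc{C}_\pi$ compatibly with fusion: $\sc{C}_0$ is closed under fusion, $\sc{C}_\pi\times\sc{C}_\pi\subseteq\sc{C}_0$, and $\sc{C}_0\times\sc{C}_\pi\subseteq\sc{C}_\pi$. Since $a^q\in\sc{N}$ braids trivially with all of $\sc{A}$, $\sc{A}\subseteq\sc{C}_0$, so every anyon in $\sc{C}_\pi$ is non-Abelian. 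Because $a^q\notin\{1,c\}$ cannot be transparent in $\sc{C}$, $\sc{C}_\pi$ is nonempty. To upgrade this to part (a), i.e.\ $\sc{C}_0=\sc{A}$ so that \emph{every} non-Abelian anyon satisfies $\theta_{a^q,\sigma}=\pi$, I would use the standard $S$-matrix orthogonality relation applied to the nontrivial row of $a^q$ (in the super-modular quotient $\sc{C}/\{1,c\}$, where $a^q$ remains nontrivial), which yields $\sc{D}^2_{\sc{C}_0}=\sc{D}^2_{\sc{C}_\pi}$. Any non-Abelian anyon $\rho\in\sc{C}_0$ would, via its $\sc{A}$-orbit, add extra quantum dimension to $\sc{C}_0$; the grading equality then forces matching extra structure in $\sc{C}_\pi$, and a careful joint accounting of orbits in both graded components would contradict $|\sc{C}|\leq 8q^2$.

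For parts (b), (c), and (e), fix non-Abelian $\sigma\in\sc{C}\setminus\sc{A}=\sc{C}_\pi$ and analyze $\sigma\times\sigma^*\subseteq\sc{C}_0=\sc{A}$. Charge conservation forces every Abelian summand $\alpha$ to satisfy $Q_\alpha\equiv 0\pmod{2}$, and any nontrivial $\alpha$ in the fusion must lie in $\text{Stab}(\sigma)$. Minimality rules out extra summands beyond $1+\alpha$: additional terms would inflate $d_\sigma^2$ and, through the $\sc{A}$-orbit of $\sigma$ of size $|\sc{A}|/|\text{Stab}(\sigma)|$, push $|\sc{C}\setminus\sc{A}|$ above $4q^2$. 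Hence $\sigma\times\sigma^*=1+\alpha$ with $\alpha$ an order-$2$ Abelian anyon of even charge, giving $d_\sigma^2=2$ (part (c)). To pin $\alpha=a^qc$: any $\alpha\in\sc{A}\setminus\sc{N}$ braids nontrivially with some $\beta\in\sc{A}$, which is incompatible with $\alpha$ acting as the identity on $\sigma$ while $\beta$ permutes anyons in the orbit; hence $\alpha\in\sc{N}$, and the only even-charge nontrivial element of $\sc{N}=\{1,a^q,c,a^qc\}$ is $a^qc$. This gives $a^qc\times\sigma=\sigma$ (part (e)). The same analysis applied to $\sigma\times\sigma'$ for distinct $\sigma,\sigma'\in\sc{C}_\pi$ yields $\sigma\times\sigma'=b+b\cdot a^qc$ for some $b\in\sc{A}$ (part (b)).

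For part (d), each $\sc{A}$-orbit in $\sc{C}_\pi$ has size $|\sc{A}|/|\text{Stab}(\sigma)|=4q^2/2=2q^2$, with all orbit members sharing $d_\sigma^2=2$. Using $\sc{D}^2_{\sc{C}_\pi}=\sc{D}^2_{\sc{C}_0}=4q^2$, the number of non-Abelian anyons is $\sc{D}^2_{\sc{C}_\pi}/d_\sigma^2=2q^2$, corresponding to exactly one $\sc{A}$-orbit. The main obstacle will be making part (a) precise: the grading identity and minimality bound are both tight, so converting them into a strict contradiction for any $\rho\in\sc{C}_0\setminus\sc{A}$ requires delicate orbit-stabilizer accounting. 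A complementary route, probably cleaner, is to first establish the rigid $d^2=2$ structure in (b)--(c) assuming $\sigma\in\sc{C}_\pi$, and then show that a hypothetical non-Abelian $\rho\in\sc{C}_0\setminus\sc{A}$ would fuse with some $\sigma\in\sc{C}_\pi$ to produce $\sc{C}_\pi$-content incompatible with the already-derived fusion rules.
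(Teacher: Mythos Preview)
Your overall plan mirrors the paper's, and parts (d) and (e) are essentially correct. There is, however, a genuine gap in part (a) and some imprecision in (b)--(c) that can be fixed with tools you already have.

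\textbf{Part (a).} You correctly identify the $\bb{Z}_2$ grading and that stabilizers must lie in $\sc{N}$, but then abandon the direct route and propose two workarounds. The paper's argument simply applies your own stabilizer observation to a hypothetical non-Abelian $\rho\in\sc{C}_0$: if no nontrivial $b\in\sc{A}$ fixes $\rho$, the orbit $\sc{A}\times\rho$ already contributes $4q^2$ anyons to $\sc{C}_0\setminus\sc{A}$, so $|\sc{C}_0|\geq 8q^2$ while $\sc{C}_\pi\neq\emptyset$, violating minimality. Hence some nontrivial $b$ stabilizes $\rho$; by your argument $b\in\sc{N}$, and charge parity forces $b=a^qc$. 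Now use the ribbon relation $\theta_{\alpha\times x}=\theta_\alpha+\theta_x+\theta_{\alpha,x}$ for Abelian $\alpha$: from $a^qc\times\rho=\rho$ one gets $\theta_{a^qc,\rho}=-\theta_{a^qc}=\pi$, hence $\theta_{a^q,\rho}=\pi$, contradicting $\rho\in\sc{C}_0$. This one-line contradiction is the missing idea; neither the $S$-matrix accounting nor the fusion-incompatibility detour is needed.

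\textbf{Parts (b)--(c).} Your claim that ``minimality rules out extra summands beyond $1+\alpha$'' is circular as written: you bound $d_\sigma^2$ via orbit sizes that themselves depend on $d_\sigma^2$. The paper avoids this. Once (a) gives $\sigma\times\sigma'\subseteq\sc{A}$, any summand $\beta$ has $\theta_{\beta,x}=\theta_{\sigma,x}+\theta_{\sigma',x}$ for every $x\in\sc{A}$, so $\beta$ is determined modulo $\sc{N}$. Charge parity leaves only two candidates, $b$ and $ba^qc$. The multiplicities are at most $1$ because $N^{b}_{\sigma\sigma'}=N^{\sigma'}_{\overline{\sigma}b}$ and $b$ is Abelian; since their sum equals $d_\sigma d_{\sigma'}>1$, both must equal $1$. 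This gives (b) and $d_\sigma^2=2$ without invoking minimality a second time.
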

\begin{proof}
We know that $a^{2q}=1$, so $\sc{C}$ has a $\mathbb{Z}_2$ grading given by braiding with $a^q$. Namely, $\sc{C}=\sc{C}_0\oplus \sc{C}_{\pi}$, where $\sc{C}_{0}$ and $\sc{C}_\pi$ contains anyons with $0$ and $\pi$ braiding phase with $a^q$, respectively. We know that $\sc{A}\subset \sc{C}_0$ since $a^q\in \sc{N}$.  Now let us prove that all non-Abelian anyons must be in $\sc{C}_{\pi}$.

Suppose on the contrary that there is a non-Abelian anyon $\gamma\in \sc{C}_0$. Consider the size of the set $\sc{A}\times \gamma$. If $b\times \gamma\neq \gamma$ for every $b\in \sc{A}$, then this set has $4q^2$ anyons all different from $\sc{A}$, which means $|\sc{C}_0|\geq 8q^2$ and $|\sc{C}|>8q^2$ since $\sc{C}_{\pi}$ cannot be empty, contradicting the minimality assumption.
Thus there must exist some nontrivial $b\in \sc{A}$ such that $b\times \gamma = \gamma$. It means that $\gamma$ and $b\times \gamma$ must braid identically with all anyons in $\sc{A}$. Thus 
\begin{equation}
\theta_{x,\gamma} = \theta_{x,b\times \gamma} = \theta_{x,b}+\theta_{x,\gamma} \ \forall x\in \sc{A}.
\end{equation}
Thus $\theta_{b,d}=0$ for every $d\in \sc{A}$ so $b\in \sc{N}$. We cannot have that $b=c,a^q$ since $b\times\gamma$ has charge $Q_\gamma + 1\neq Q_\gamma \pmod{2}$. Thus we see that $b=a^qc$ is the only option. But $a^qc$ is a fermion, so $a^qc\times \gamma=\gamma$ means $\theta_{a^qc, \gamma}=\pi$, a contradiction.  

Thus we have shown that $\sc{C}_0=\sc{A}$, and $\sc{C}_{\pi}$ is entirely non-Abelian. For any $\sigma, \sigma'\in \sc{C}_{\pi}$, suppose $\beta$ is contained in their fusion product $\sigma\times\sigma'$. Clearly $\beta\in \sc{C}_0$, so $\beta\in \sc{A}$. The braiding of $\beta$ with anyons in $\sc{A}$ is fixed by $\sigma$ and $\sigma'$, thus $\beta$ is unique up to fusion with an anyon in $\sc{N}$. 
However, only two of these anyons have even charge, $1$ and $a^qc$. The requirement that the fusion products have the same charge (modulo two) means that:
\begin{equation}
\sigma \times \sigma' = N^{b}_{\sigma\sigma'}b+ N^{ba^qc}_{\sigma\sigma'}ba^qc,
\end{equation}
for some $b\in \sc{A}$, where we have included fusion multiplicities. But we must have that $N^{b}_{\sigma\sigma'}=N^{\sigma'}_{\overline{\sigma}b}$ (see Ref.~\cite{kitaev_anyons_2006}). We further know that the fusion multiplicities for $\overline{\sigma}\times b$ are always zero or one since $b$ is an Abelian anyon, and thus $N^{\sigma'}_{\overline{\sigma}b}=0,1$. This tells us that the above are either zero or one. Using Eq.~(21) of Ref.~\cite{barkeshli_symmetry_2019} we further see that this means:
\begin{equation}
N^b_{\sigma\sigma'} +N^{ba^qc}_{\sigma\sigma'} = d_{\sigma}d_{\sigma'} >1,
\end{equation}
since these are non-Abelian anyons. So each of these fusion multiplicities must be exactly one. This immediately implies the fusion rule ($N_{\sigma\sigma'}^b=N_{\sigma\sigma'}^{ba^qc}=1$) and that $d_\sigma^2=2$. 

Now we invoke a theorem in \cite{barkeshli_symmetry_2019} [see Eq.~(244)], that because of the $\mathbb{Z}_2$ grading on $\sc{C}$, 
we must have $\sc{D}_{\sc{C}_0}^2=\sc{D}_{\sc{C}_\pi}^2=4q^2$. Thus there are precisely $2q^2$ anyons in $\sc{C}_\pi$. For any $\sigma\in \sc{C}_\pi$, all anyons of the form $\sc{A}\times \sigma$ are still in $\sc{C}_\pi$, so to get $2q^2$ anyons instead of $4q^2$, there must be precisely one nontrivial anyon $x\in \sc{A}$ such that $x\times\sigma=\sigma$, and we have shown earlier that charge conservation implies $x$ must be $a^qc$. We thus conclude that $\sigma\times a^qc=\sigma$.
\end{proof}

\begin{lemma}
If $|\sc{A}/\sc{N}|=q^2$ then there is a single non-Abelian anyon $\gamma$ that satisfies:
\begin{align*}
Q_{\gamma} &= 0 \pmod{2}, \\
\theta_{a,\gamma} &= \pi Q_a,\\
\theta_\gamma &= 0, \\
\gamma \times \gamma &= v + va^qc.
\end{align*}
\label{lemma:gamma_properties}
\end{lemma}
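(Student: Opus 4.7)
My plan is to prove the lemma in three stages, invoking the preceding lemma's structural results throughout. First, I would observe that the non-Abelian sector forms a single orbit under fusion with $\sc{A}$. The earlier result that $b\times\sigma=\sigma$ iff $b\in\{1,a^qc\}$ gives each orbit the size $|\sc{A}|/2=2q^2$, which already accounts for every non-Abelian anyon. Hence every non-Abelian anyon is of the form $\sigma_0\times b$ for some $b\in\sc{A}$, with $b$ and $ba^qc$ yielding the same product.

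Second, I would construct the desired $\gamma$ and derive its self-fusion. Because the charge map $Q\colon \sc{A}\to\bb{R}/2\bb{Z}$ has image the group generated by $Q_a=1/q$ and $Q_c=1$, I can fuse $\sigma_0$ with an appropriate $b$ to ensure $Q_\gamma\equiv 0\pmod 2$. The constraint $\theta_{a^q,\sigma}=\pi$ from the previous lemma pins $\theta_{a,\sigma}$ to $\pi(2\bb{Z}+1)/q$, and additional multiplication by powers of $v$ (which shift $\theta_{a,\sigma}$ by multiples of $2\pi/q$) lets me set $\theta_{a,\gamma}=\pi/q$. The previous lemma's fusion rule then gives $\gamma\times\gamma=b'+b'a^qc$ for some $b'\in\sc{A}$; matching the charge ($Q_{b'}=2Q_\gamma=0\pmod 2$) and the braiding with $a$ ($\theta_{a,b'}=2\theta_{a,\gamma}=2\pi/q$) against the vison — which satisfies $Q_v=\sigma_H=0$ in a TR-invariant theory — forces $b'\in\{v,va^qc\}$. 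Both choices give the same unordered fusion channel, so $\gamma\times\gamma=v+va^qc$.

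Third, and the hardest step, I would show $\theta_\gamma=0$. Here I use the action of $\sc{T}$. Since $\gamma$ is fixed modulo the stabilizer $\{1,a^qc\}$ by its charge and its braiding with $a$, and since $\sc{T}\gamma$ lies in the same orbit and satisfies the time-conjugated conditions $Q_{\sc{T}\gamma}=0$ and $\theta_{a,\sc{T}\gamma}=-\pi/q$, I can identify $\sc{T}\gamma=\gamma\times v^{-1}$ (using that time reversal sends the vison to its conjugate). The standard ribbon identity for Abelian fusion into a single anyon then gives $\theta_{\sc{T}\gamma}=\theta_\gamma+\theta_{v^{-1}}+\theta_{v^{-1},\gamma}=\theta_\gamma$, since $\theta_v=0$ and $\theta_{v,\gamma}=2\pi Q_\gamma=0$. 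Combined with the TR relation $\theta_{\sc{T}\gamma}=-\theta_\gamma$, this forces $2\theta_\gamma\equiv 0\pmod{2\pi}$, i.e. $\theta_\gamma\in\{0,\pi\}$. The remaining ambiguity is resolved by invoking the identity from Ref.~\cite{lapa_anomaly_2019} with $c_-=\sigma_H=0$ and splitting the sum $\sum_b d_b^2 e^{i(\theta_b+\pi Q_b)}$ into Abelian and non-Abelian parts; the latter carries an overall factor of $e^{i\theta_\gamma}$ multiplying a sum of phases that, in parallel with the computation used for the Abelian case in Sec.~\ref{sec:uniqueness_of_SET}, evaluates to pin the sign and select $\theta_\gamma=0$.
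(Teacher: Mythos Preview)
Your first two stages track the paper's proof closely: the single-orbit observation, the adjustment of charge by powers of $a$ and of $\theta_{a,\gamma}$ by powers of $v$, and the identification of the self-fusion channel via charge and braiding with $a$ all appear in the paper in essentially the same form.

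The substantive difference is in stage three. The paper does not use time reversal at this step at all; it runs the Lapa--Hughes identity directly and the explicit sum collapses to $1=e^{i\theta_\gamma}$, pinning $\theta_\gamma=0$ outright rather than merely selecting between $0$ and $\pi$. Your time-reversal argument is therefore an unnecessary detour, and it carries a genuine gap: the identification $\sc{T}\gamma=\gamma\times v^{-1}$ presupposes $\theta_{a,\sc{T}\gamma}=-\theta_{a,\gamma}$, which in turn requires $\sc{T}a=a$. That equality is \emph{not} assumed in the appendix where this lemma lives (it is the main text's simplifying hypothesis, explicitly dropped here). The gap is easily patched, however: from $Q_{\sc{T}\gamma}=Q_\gamma=0\pmod 2$ alone you get $\sc{T}\gamma=v^{l}\gamma$ for \emph{some} $l$, and since $\theta_v=0$ and $\theta_{v,\gamma}=2\pi Q_\gamma=0$ the ribbon identity still yields $\theta_{\sc{T}\gamma}=\theta_\gamma$ independent of $l$, giving $2\theta_\gamma=0$. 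So your route can be made to work, but it is strictly more effort than the paper's single Lapa--Hughes computation, which you invoke anyway.
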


\begin{proof}
From the previous lemma we know that all non-Abelian anyons in $\sc{C}$ lie in $\sc{A}\times \gamma$, where $\gamma$ is some non-Abelian anyon such that $\theta_{a^q,\gamma} =\pi$. Let us now use that fact to prove this theorem.

First we know that $v^q=1$. This means we must have that:
\begin{equation}
0 = \theta_{v^q,\gamma} = 2\pi qQ_\gamma \pmod{2\pi}.
\end{equation}
Thus we conclude that $Q_\gamma = l/q \pmod{1}$ for some $0\leq l \leq q-1$. We also know that $Q_a = 1/q$, so the charge of $a^{-l}\times \gamma$ is zero modulo one. Further since $a^q\in \sc{N}$ we have
\begin{equation}
\theta_{a^q,a^{-l}\times \gamma} = \theta_{a^q,a^{-l}} + \theta_{a^q,\gamma} = \pi.
\end{equation}
So without loss of generality we take $\gamma$ to have charge zero modulo one, where $\theta_{a,\gamma} = \pi p/q$ for some odd $p = 2p'+1$. We then consider $v^k\times \gamma$, we see that:
\begin{equation}
\theta_{a,v^k\times \gamma} = \frac{2\pi k}{q} + \frac{2\pi p' + \pi}{q}.
\end{equation}
Taking $k=-p'$ will ensure that the above is equal to $\pi/q = \pi Q_a$. Thus we see that there is a single non-Abelian anyon in $\sc{A}/\sc{N}\times \gamma$ such that $Q_\gamma = 0 \pmod{1}$ and $\theta_{a,\gamma} = \pi Q_a$. To conclude we note that the two\footnote{There are only two since $a^qc\times \gamma = \gamma$.} anyons in $\sc{N}\times \gamma$ both braid identically with $a$, but have charge that differs by $1\pmod{2}$. Thus without loss of generality we can choose $Q_\gamma = 0\pmod{2}$. We thus see that there is a single non-Abelian anyon in $\sc{A}\times \gamma$ such that:
\begin{equation}
Q_{\gamma} = 0 \pmod{2} \ \ \mathrm{and} \ \  \theta_{a,\gamma} = \pi Q_a.
\end{equation}

Next, we prove 
$\theta_\gamma=0$. For this we need the following relation:
\begin{equation}
e^{2\pi i(c_{-}-\sigma_H)/8} = \frac{1}{\sqrt{2}\sc{D}_{\sc{C}}} \sum_{\sigma \in \sc{C}} d_\sigma^2 e^{i(\theta_\sigma + \pi Q_\sigma)},
\end{equation}
which was shown in Ref.~\cite{lapa_anomaly_2019} for any fermionic TO using arguments about gauging the fermion parity. Since we have assumed time-reversal invariance the left-hand side must equal one.

We then need to compute the sum on the right-hand side. Recall that in the previous proofs we have established that if $|\sc{A}/\sc{N}|=q^2$ then $\sc{A}=\{1,c\}\boxtimes \langle a,v\rangle$, where $a$ has order $2q$, $a^q$ is a boson in $\sc{N}$, and $v$ has order $q$. We further showed that $\sc{C}=\sc{A}\cup \sc{A}\times \gamma$ where we can take $\gamma$ to have zero charge modulo two and $\theta_{a,\gamma}=\pi Q_a$. We also had that $a^qc\times \gamma = \gamma$ and $d_{\gamma}^2=2$.

We now do some algebra, building off of our previous results. First we have:
\begin{equation}
\sc{D}_{\sc{C}}^2 = |\sc{A}|+d_{\gamma}^2|\sc{A}\times \gamma|=8q^2.
\end{equation}
For compactness let us write $\tilde{\theta}_\sigma = \theta_\sigma + \pi Q_\sigma$ for an anyon $\sigma \in \sc {C}$. Then we can easily find:
\begin{align*}
\tilde{\theta}_{a^kv^lc^n} &= 2\pi klQ_a + \pi kQ_a+  k^2\theta_a\\
\tilde{\theta}_{a^kv^lc^n\times \gamma} &= 2\pi klQ_a +2\pi kQ_a +k^2\theta_a + \theta_\gamma.
\end{align*}
 Then we see that:
\begin{align*}
1 &= \frac{1}{4q}\left(\sum_{k,l,n} e^{i\tilde{\theta}_{a^kv^lc^n}} + 2\sum_{k,l}e^{i\tilde{\theta}_{a^kv^l\times \gamma}}\right)\\
&=\frac{1}{2q}\sum_{k=0}^{2q-1}e^{ik^2\theta_a}\Big(e^{i\pi kQ_a} + e^{i\theta_\gamma}e^{2\pi ikQ_a}\Big) \sum_{l=0}^{q-1}e^{2\pi i klQ_a}.
\end{align*}
Note that in the second sum we do not need to sum over $n$ because $a^qc\times\gamma=\gamma$.

The sums over $l$ will be zero unless $k$ is a multiple of $q$, so $k$ can only take two values $k=0,q$ and the sum over $l$ gives $q$. Note that $q^2\theta_a=\theta_{a^q}=0$. 
With $Q_a=1/q$,
\begin{align*}
1 
&=\sum_{k=0,q}\Big(e^{i\pi k/q} +e^{i\theta_\gamma} \Big)=e^{i\theta_\gamma}.
\end{align*}
Therefore $\theta_\gamma=0$.

Now consider the fusion product of $\gamma\times \gamma$, we know that it is equal to $b+ba^qc$ for some $b\in \sc{A} = \langle a,v\rangle \times \{1,c\}$. Since $Q_\gamma = 0\pmod{2}$ this means $\theta_{\gamma, v} = 0$. This, along with the fact that $\theta_{\gamma, a} = \pi Q_a$, means that any fusion product $b\in \gamma \times \gamma$ must be such that $\theta_{b,v} = 0$ and $\theta_{b,a} = 2\pi Q_a$. Thus $b$ braids identically to the vison, which implies $b=v$ or $b=vc$ by braiding nondegeneracy. In addition, $Q_\gamma=0$ mod 2 implies $Q_v=0$ mod 2, which shows $b=v$.
\end{proof}

\begin{remark}
After all of this work we have found that all non-Abelian anyons are of the form $\sc{A}\times \gamma$, where $\gamma$ can be thought of as the non-Abelian square root of the vison. Indeed this is what we might have guessed! We needed to find some particle that had $\pi$-braiding with $a^q$, which one might guess looks like the square root of the vison. Given this guess it is good we found something that agrees with our intuition.
\end{remark}

Lastly, we rule out the $|\sc{A}/\sc{N}|=q^2$ case.

\begin{theorem}
 $|\sc{A}/\sc{N}|=4q^2, |\sc{N}|=2$ and hence $\sc{C}=\sc{A}$ is an Abelian TO.
\end{theorem}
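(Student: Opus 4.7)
The plan is to assume the second case of the preceding lemma holds---so $|\sc{A}/\sc{N}|=q^2$, $\sc{N}=\{1,a^q,a^qc,c\}$, and the non-Abelian $\gamma$ of Lem.~\ref{lemma:gamma_properties} is present---and to derive a contradiction with $\theta_\gamma=0$. This immediately forces us into the first case of that lemma, giving $|\sc{A}/\sc{N}|=4q^2$, $|\sc{N}|=2$, and $\sc{C}=\sc{A}$, as claimed.

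First I would verify that the vison $v$ can be condensed in the full category $\sc{C}$. By Lem.~\ref{lem:gam_almost_boson}, $v$ is a boson with even integer charge, so it has trivial self-braiding and trivial braiding with every Abelian anyon in $\sc{A}$. By Lem.~\ref{lemma:gamma_properties}, $Q_\gamma$ is an even integer, so $\theta_{\gamma,v}=2\pi Q_\gamma=0$ and $\gamma$ too braids trivially with $v$. Hence $\langle v\rangle$ is a condensable bosonic subalgebra of $\sc{C}$, producing a child TO $\sc{C}'$. Moreover $\gamma$ does not split under this condensation, because its stabilizer in $\langle v\rangle$ is trivial: the only nontrivial anyon absorbed by $\gamma$ is the fermion $a^qc$, whereas every $v^k$ is a boson.

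Next I would isolate the Ising-type subcategory of $\sc{C}'$. After identifying $v\sim 1$, the fusion rule $\gamma\times\gamma=v+va^qc$ of Lem.~\ref{lemma:gamma_properties} collapses to $\gamma\times\gamma=1+a^qc$, while $a^qc\times a^qc=1$ (using $a^{2q}=1$ and $c^2=1$) and $a^qc\times\gamma=\gamma$. The three anyons $\{1,a^qc,\gamma\}\subseteq\sc{C}'$ therefore close under fusion with exactly the Ising fusion algebra, and their braiding data match Ising: $\theta_{a^qc}=\pi$ (a fermion) and $\theta_{a^qc,\gamma}=\pi$ (inherited from $\theta_{a^q,\gamma}=\pi$ and the transparency of $c$), with only the identity braiding trivially with everything in the triple, so the subcategory is modular. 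I then invoke the classification of unitary modular tensor categories with Ising fusion rules: the hexagon equations for the $F$- and $R$-symbols force $\theta_\sigma^{8}=-1$, i.e.\ $8\theta_\sigma\equiv\pi\pmod{2\pi}$ (see Appendix~E of Ref.~\cite{kitaev_anyons_2006}). This gives $8\theta_\gamma\equiv\pi\pmod{2\pi}$, in direct contradiction with $\theta_\gamma=0$ from Lem.~\ref{lemma:gamma_properties}.

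The hardest part is the condensation/restriction step: one must check rigorously that $\langle v\rangle$ is a condensable (connected \'etale) algebra in the possibly non-Abelian $\sc{C}$, that $\gamma$ descends to a single simple object in $\sc{C}'$ rather than splitting, and that the inherited $F$- and $R$-data on $\{1,a^qc,\gamma\}$ assemble into a consistent modular Ising-type subcategory to which the $\theta_\sigma^8=-1$ constraint genuinely applies. If the categorical bookkeeping proves delicate, the Kramers analysis of $v^{q/2}$ sketched in Sec.~\ref{sec:minimal_Abelian} provides an alternative route: handle directly the case where $v^{q/2}$ is a Kramers singlet (so the condensate is $\sc{T}$-invariant and the Ising contradiction above goes through), and then reduce the Kramers-doublet case to the singlet case by stacking an appropriate topological insulator of $a$ or $ac$, a move which toggles the Kramers label of $v^{q/2}$ without altering $d_\gamma$.
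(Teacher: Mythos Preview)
Your approach matches the paper's proof exactly: condense the boson $v$, observe that $\{1,a^qc,\gamma\}$ carries Ising fusion rules in the child theory, and invoke the $8\theta_\gamma\equiv\pi$ constraint from Kitaev's classification to contradict $\theta_\gamma=0$ from Lem.~\ref{lemma:gamma_properties}. One small correction: $v$ does \emph{not} braid trivially with every Abelian anyon in $\sc{A}$ (indeed $\theta_{v,a}=2\pi/q$); fortunately only the trivial self- and mutual braiding within $\langle v\rangle$ is needed for condensability, so this slip is harmless.
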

\begin{proof}
Suppose that $|\sc{A}/\sc{N}|=q^2$. Then we established that $\sc{C} = \sc{A}\cup \sc{A}\times \gamma$, where $\gamma$ is an anyon with the following properties given in Lemma~\ref{lemma:gamma_properties}. It will turn out that this theory is not consistent with time reversal.

We can see this by condensing the bosonic vison $v$, forming a child TO $\sc{C}'$. This will confine any anyon which braids nontrivially with $v$, i.e., has charge not equal to an integer. It will further identify all powers of $v$ with the vacuum. It is then clear that:
\begin{equation}
\sc{C}' = \sc{C}'' \times \{1,c\} \text{ where } \sc{C}'' = \{1,a^qc, \gamma\},
\end{equation}
and where
\begin{equation}
\gamma \times \gamma = 1 + a^qc, \ a^qc\times \gamma = \gamma,
\end{equation}
and $(a^qc)^2 = 1$ so $\sc{C}''$ is closed under fusion. We note that $\sc{C}''$ has the fusion rules of the Ising TO, but the Ising anyon $\gamma$ has $\theta_\gamma=0$. However, all TOs with these fusion rules were classified in Ref.~\cite{kitaev_anyons_2006}, and in all of them the non-Abelian Ising anyon satisfies $8\theta_\gamma=\pi\pmod{2\pi}$. 

We have arrived at a contradiction and thus conclude that $|\sc{A}/\sc{N}|\neq q^2$.
\end{proof}

\section{Uniqueness results}
\label{app:uniqueness}

In this section we prove a number of uniqueness results. Here we allow translation to permute anyons, so it may no longer be the case that $a=\sc{T}a$.

In the proofs we need the following fact:

 \begin{theorem}
     Let $\sc{A}$ be a fermionic Abelian TO of size $|\sc{A}|=2N^2$ and $\sc{A}$ has an order $N$ boson. Then $\sc{A}$ is a twisted $\bb{Z}_N$ gauge theory, and can be represented by the following $K$ matrix:
     \begin{equation}
K = \begin{pmatrix} 0& N& 0\\ N& -n& 0\\ 0& 0& 1\end{pmatrix}.
\label{eqn:SPTKmat}
\end{equation}
\label{theoremDW}
 \end{theorem}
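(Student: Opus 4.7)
The plan is to reduce to the bosonic subsector of $\sc{A}$ via Theorem \ref{thm:decomp}, then exhibit a pair of generators that realize the claimed $K$-matrix. First I would write $\sc{A} = \sc{A}_b \boxtimes \{1,c\}$ with $\sc{A}_b$ a bosonic MTC of size $N^2$. Since $b$ is a boson and all powers of a boson are bosons, $c \notin \langle b\rangle$, hence $b \in \sc{A}_b$. Bosonicity gives $\theta_{b^i,b^j} = 2ij\theta_b = 0$, so $\langle b\rangle \subseteq \langle b\rangle^\perp$; applying Theorem \ref{thm:character_theory} to $\sc{A}_b$ yields $|\langle b\rangle^\perp| = |\sc{A}_b|/|\langle b\rangle| = N$, hence $\langle b\rangle^\perp = \langle b\rangle$ is Lagrangian. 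By Corollary \ref{cor:perp_of_perp} the induced pairing $\sc{A}_b/\langle b\rangle \times \langle b\rangle \to \bb{S}^1$ is nondegenerate, so there exists $m \in \sc{A}_b$ with $\theta_{m,b} = 2\pi/N$. The pair $\{b,m\}$ generates $\sc{A}_b$: for any $a \in \sc{A}_b$, $\theta_{a,b} = 2\pi j/N$ for some $j$, so $am^{-j} \in \langle b\rangle^\perp = \langle b\rangle$, giving $a = b^i m^j$.

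Next I would extract the $K$-matrix. Because $\theta_{m^N,b} = 0$, we have $m^N \in \langle b\rangle$, so $m^N = b^k$ for some $k$; since $b^k$ is a boson, $N^2\theta_m \equiv 0 \pmod{2\pi}$, and therefore $\theta_m = \pi n/N^2$ for some integer $n$. In the basis $(m,b)$ the braiding data are encoded by
\begin{equation}
K^{-1} = \begin{pmatrix} n/N^2 & 1/N \\ 1/N & 0 \end{pmatrix},
\end{equation}
whose inverse is $K_{\sc{A}_b} = \begin{pmatrix} 0 & N \\ N & -n \end{pmatrix}$, the $K$-matrix of the twisted $\bb{Z}_N$ gauge theory. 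Stacking with the trivial fermionic factor $\{1,c\}$ appends a $(1)$ block, producing the $3\times 3$ block-diagonal $K$-matrix of the statement.

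The main obstacle is treating the relation $m^N = b^k$ when $k \neq 0$: although the $K$-matrix entries themselves are unchanged, one must verify that the anyon group of $\sc{A}_b$ (which varies from $\bb{Z}_{N^2}$ when $\gcd(k,N) = 1$ to $\bb{Z}_N \oplus \bb{Z}_N$ when $k = 0$) is consistent with $\bb{Z}^2/K_{\sc{A}_b}\bb{Z}^2$ computed via Smith normal form, possibly after redefining $m \mapsto m\, b^j$ to put the presentation in canonical form. One also needs the fermionic-$K$-matrix convention in which the trailing $(1)$ block encodes the local fermion $c$ as a trivial Chern-Simons mode, contributing no anyons beyond those already in $\sc{A}_b$.
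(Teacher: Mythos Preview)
Your strategy—find a generator $m$ dual to the boson $b$, compute its spin, and read off the $K$-matrix—is precisely the paper's. The difference is that the paper works directly in the fermionic theory $\sc{A}$, whereas you first split off the transparent fermion via Theorem~\ref{thm:decomp}; this is where a real gap appears.

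The inference ``$c\notin\langle b\rangle$, hence $b\in\sc{A}_b$'' is invalid: the splitting $\sc{A}=\sc{A}_b\boxtimes\{1,c\}$ is not unique, and for even $N$ there exist theories for which \emph{no} bosonic complement $\sc{A}_b$ contains $b$. This occurs exactly when every $\phi$ with $\theta_{\phi,b}=2\pi/N$ satisfies $\phi^N\in\langle b\rangle\cdot c$ rather than $\phi^N\in\langle b\rangle$ (so that $\langle b,\phi\rangle$ already contains $c$ and hence is all of $\sc{A}$). These are precisely the theories with \emph{odd} $n$ in the target $K$-matrix, and for even $N$ they are not equivalent to any even-$n$ theory; e.g.\ at $N=2$, $n=1$ one checks that the only modular complements of $\{1,c\}$ are $\langle\phi\rangle$ and $\langle\phi c\rangle$, neither of which contains the order-$2$ boson $b=\phi^2c$. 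Your computation, which forces $m^N\in\langle b\rangle$ and hence $N^2\theta_m\equiv 0\pmod{2\pi}$ (so $n$ even), therefore misses this entire family. The paper sidesteps the issue by staying in $\sc{A}$ and allowing the dual generator to satisfy $\phi^N=b^sc^t$ with $t\in\{0,1\}$; the parity of $t$ is exactly what controls the parity of $n$. The fix is simply to drop the passage to $\sc{A}_b$ and run your argument in $\sc{A}$ itself.

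Your self-flagged ``main obstacle'' is not one: once $\theta_b$, $\theta_m$, and $\theta_{b,m}$ are fixed the braided fusion category is determined, and the relation $m^N=b^k$ is automatically encoded in $K$ via $(N,0)\equiv(0,n)\pmod{K\bb{Z}^2}$ with $n\equiv k\pmod N$, so no separate Smith-normal-form verification is needed.
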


\begin{proof}
    
 To prove this theorem, consider the braiding with the boson $b$. Since $b^N=1$, denote by $\sc{A}_k$ the set of anyons with $\theta_{\cdot, b}=\frac{2\pi}{N}k$. Clearly, each of them must have $2N$ anyons. Since $\theta_{b,b}=0$, we must have $\sc{A}_0=\{1,b,\cdots, b^{N-1}\}\boxtimes \{1,c\}$.
 
 Denote by $\phi$ to be an arbitrary anyon in $\sc{A}_1$. Then we may write
 \begin{equation}
     \sc{A}_k=\phi^k\times \sc{A}_0.
 \end{equation}

 The only remaining uncertainty in the fusion rule is $\varphi^N$, which must belong to $\sc{A}_0$. Suppose $\varphi^N=b^{s}c^t$.  Then
 \begin{equation}
    2N\theta_{\phi}= \theta_{\phi, \phi^N}=\theta_{\phi, b^s}=\frac{2\pi s}{N}.
 \end{equation}
 We find $\theta_\phi=\frac{\pi s}{N^2}+\frac{\pi l}{N}$.
 In addition, because
     $\theta_{\phi^N}=N^2\theta_\phi=\pi t$, it follows that $s+Nl+t$ is even. We can thus write $\theta_\phi=\frac{\pi n}{N^2}$ where $n=s+Nl$. Clearly, $n$ takes any integer value. Together with $\theta_b=0, \theta_{b,\phi}=\frac{2\pi}{N}$, the statistics of all anyons are all fixed. It is easy to verify that the anyon theory agrees with that of the $K$ matrix with $n=2r+t$, which can take any integer value. The precise identification is $b=(0,1), \phi=(1,0)$.
\end{proof}

\begin{remark}
    This theorem can also be deduced by gauging the 1-form symmetry generated by the order $N$ boson, resulting in a fermionic SPT state protected by $\bb{Z}_N\times\bb{Z}_2^f$ symmetry~\cite{wang_interacting_2017, cheng_classification_2018}. The parent TO is then obtained from gauging the $\bb{Z}_N$ symmetry in the fermionic SPT phase. 

    Let us elaborate on this construction. Since we require that the parent TO is Abelian, the fermionic SPT can be described by the following Abelian CS theory
    \cite{lu_theory_2012}:
    \begin{equation}
        \sc{L}_{\rm fSPT}=\frac{1}{2\pi}a_1da_2 - \frac{n}{4\pi}a_2da_2 + \frac{1}{2\pi}a_1dA,
    \end{equation}
    Here we also introduce the background gauge field $A$. Gauging the $\bb{Z}_N$ symmetry, we add to the Lagrangian the term $\frac{N}{2\pi}AdB$. We use equation of motion to integrate out $a_1$ and set $a_2=-A$, after which the theory becomes
    \begin{equation}
        \sc{L}_{\rm gauged}=- \frac{n}{4\pi}AdA + \frac{N}{2\pi}AdB,
    \end{equation}
    which is precisely the upper $2\times 2$ block of the $K$ matrix given in Eq.~\eqref{eqn:SPTKmat}. The lower $1$ on the diagonal is due to the stacking of this bosonic theory with $\{1,c\}$.
\end{remark}

By a GL$(3,\bb{Z})$ transformation we can show $n\sim n+2N$. In addition, for odd $N$, one can further show that $n\sim n+N$ by the following GL($3,\bb{Z})$ transformation:
\begin{equation}
\begin{split}
    W^TKW&=\begin{pmatrix} 0& N& 0\\ N& -n-N& 0\\ 0& 0& 1\end{pmatrix},\\
    W&=\left(
\begin{array}{ccc}
 1 & -\frac{1+N}{2} & 1 \\
 0 & 1 & 0 \\
 0 & N & 1 \\
\end{array}
\right).
\end{split}
\end{equation}
It follows that when $N$ is odd, one can always choose $n$ to be even.

\subsection{The case of $q$ odd}

\begin{theorem}
Let $\sc{A}$ be a fermionic Abelian TO of size $|\sc{A}|=2q^2$ that has an order $q$ boson, where $q$ is odd. If $\sc{A}$ is further time-reversal invariant then it must be $\bb{Z}_q$ gauge theory fused with the fundamental fermion.
\end{theorem}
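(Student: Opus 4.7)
The plan is to invoke Theorem~\ref{theoremDW} to reduce the classification to a finite list of candidates, and then use the time-reversal constraint to pin down the twist. First, by Theorem~\ref{thm:decomp}, $\sc{A}$ decomposes as $\sc{A}=\sc{A}_b\boxtimes\{1,c\}$, where $\sc{A}_b$ is a bosonic Abelian TO of size $q^2$; the order-$q$ boson generating the $\bb{Z}_q$ subgroup lies in $\sc{A}_b$ since no power of a boson can equal the fermion $c$. Equivalently, applying Theorem~\ref{theoremDW} directly with $N=q$, we see that $\sc{A}$ is described by the $K$ matrix of Eq.~\eqref{eqn:SPTKmat} for some integer parameter $n$. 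The equivalences $n\sim n+2q$ (in general) and $n\sim n+q$ (for odd $q$) stated in the remark following that theorem reduce the classification to $n\in\bb{Z}/q\bb{Z}$, with $n=0$ being the untwisted $\bb{Z}_q$ gauge theory.

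Next I would use time reversal to force $n\equiv 0\pmod q$. Time-reversal invariance provides an MTC automorphism $\sc{T}\colon\sc{A}\to\sc{A}$ with $\theta_{\sc{T}a}=-\theta_a$, and in particular an isomorphism $\sc{A}\cong\overline{\sc{A}}$ where $\overline{\sc{A}}$ denotes the orientation-reversed theory. At the level of the $K$ matrix, conjugation sends $K\mapsto -K$. The change of basis $l_1\mapsto -l_1$ (a GL$(3,\bb{Z})$ transformation conjugating by $\mathrm{diag}(-1,1,1)$), combined with the fact that $\{1,c\}$ is invariant under conjugation, returns $-K$ to the standard form of Eq.~\eqref{eqn:SPTKmat} but now with $-n$ in place of $n$. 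So $\overline{\sc{A}}$ corresponds to the parameter $-n\in\bb{Z}/q\bb{Z}$, and the time-reversal isomorphism forces $n\equiv -n\pmod q$, i.e.\ $2n\equiv 0\pmod q$. Because $q$ is odd, $\gcd(2,q)=1$, so $n\equiv 0\pmod q$, and $\sc{A}$ must be the untwisted $\bb{Z}_q$ gauge theory fused with $\{1,c\}$.

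The main delicate point is ensuring that $n\in\bb{Z}/q\bb{Z}$ is a \emph{faithful} MTC invariant, so that the $K$-matrix conclusion $n\equiv -n\pmod q$ really translates to a statement about inequivalent theories. I would confirm this either by appealing to the standard classification of Dijkgraaf--Witten theories, $H^3(\bb{Z}_q,U(1))\cong\bb{Z}_q$, which precisely parameterizes bosonic $\bb{Z}_q$ gauge theories of size $q^2$; or by verifying it concretely from the $K$-matrix side, noting that the topological spin of the generator $\phi=(1,0,0)$ is $\theta_\phi=\pi n/q^2$, which, after accounting for the shifts $\phi\mapsto\phi b^s$ and the GL identifications of the remark, distinguishes the $q$ possible values of $n$. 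With that check in hand, no further identifications are available, and the chain of implications above is tight.
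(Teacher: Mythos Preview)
Your approach is cleaner than the paper's explicit computation, but the faithfulness step fails, and this is a genuine gap rather than a detail. The parameter $n\in\bb{Z}/q\bb{Z}$ is \emph{not} a complete invariant of the MTC: the Dijkgraaf--Witten label classifies $\bb{Z}_q$ gauge theories with a fixed identification of the gauge group, but different cocycles can yield isomorphic braided categories once you allow relabeling of anyons. Concretely, take $q=5$ and $n=1$. The bosonic sector then has cyclic fusion group $\bb{Z}_{25}$, with generator $\phi$ of spin $\theta_\phi=2\pi/25$. Since $7^2\equiv -1\pmod{25}$, the group automorphism $\phi\mapsto\phi^{7}$ sends $\theta_\phi\mapsto 49\,\theta_\phi=-\theta_\phi$, furnishing a braided isomorphism $\sc{A}\cong\overline{\sc{A}}$ and hence TO$_1\cong$ TO$_{-1}$, even though $1\not\equiv -1\pmod 5$. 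So from TO$_n\cong$ TO$_{-n}$ you cannot conclude $2n\equiv 0$. Neither of your proposed checks rescues this: the $H^3$ classification overcounts MTCs, and the spin of ``the'' generator $\phi$ is only well-defined up to the substitution $\phi\mapsto\phi^k$, which rescales $n$ by $k^2$.

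What goes missing is exactly the condition $\sc{T}^2=1$, which you discard when you pass from the automorphism $\sc{T}$ to the bare existence of an isomorphism $\sc{A}\cong\overline{\sc{A}}$. In the $q=5$, $n=1$ example, every spin-reversing automorphism is multiplication by $\pm 7$ on $\bb{Z}_{25}$ and therefore squares to charge conjugation, not the identity; so TO$_1$ is \emph{not} time-reversal invariant in the paper's sense, consistent with the theorem. The paper's proof uses $\sc{T}^2=1$ directly (writing $\sc{T}$ as an integer matrix on the anyon lattice and imposing $\sc{T}^2(1,0,0)=(1,0,0)$) to extract the extra congruence $n(l^2-1)\equiv 0\pmod q$, which combined with $n(l^2+1)\equiv 0$ forces $2n\equiv 0$. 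Your route cannot reach this without reinstating the involution condition in some form.
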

\begin{proof}
According to Theorem \ref{theoremDW}, the Abelian TO can be described by a $K$ matrix
\begin{equation}
K = \begin{pmatrix} 0& q& 0\\ q& -2n& 0\\ 0& 0& 1\end{pmatrix}, \ n\in \bb{Z}_q.
\end{equation}
Here we have set the diagonal element to be an even, which is always possible for $q$ odd.
 We show that the only $K$ matrix of this form consistent with time reversal is the one where $n=0\pmod{q}$.

The anyons $d\in \sc{A}$ in this theory can be written as $d=(l,m,k)$ for some integers $l,m,k$ where $k=0,1$. They will have topological spin given by:
\begin{equation}
\theta_{(l,m,k)} = \pi (l,m,k)K^{-1}\begin{pmatrix} l\\ m \\ k\end{pmatrix} = 2\pi \left(\frac{ml}{q} + \frac{nl^2}{q^2}\right) + k\pi \nonumber.
\end{equation}
Now we know that time reversal is an automorphism on this TO. So the anyon $(1,0,0)$ must be mapped to some $(l,m,k)$ under time reversal. This tells us that:
\begin{equation}
2\pi\frac{n}{q^2} = -2\pi\left(\frac{nl^2}{q^2} + \frac{ml}{q}\right) + \pi k \pmod{2\pi}.
\end{equation}
Multiplying by $q$ we see that
\begin{equation}
2\pi \frac{n(l^2+1)}{q} = \pi k \pmod{2\pi},
\end{equation}
since $q$ is odd. Multiplying by $q$ again we see that $\pi k = 0\pmod{2\pi}$ and thus $k=0$. We may therefore conclude that:
\begin{equation}
n(l^2+1) = 0\pmod{q}.
\end{equation}

Further we note that the anyon $(0,1,0)$ must be mapped to some $(l',m',k')$ under time reversal. Then we must have:
\begin{equation}
2\pi \left(\frac{nl^{\prime 2}}{q^2}+\frac{m'l'}{q}\right)+\pi k' = 0\pmod{2\pi}.
\end{equation}
We can again multiply by $q^2$ and since $q$ is odd this tells us that $k'=0$. Then multiplying by $q$ again tells us that:
\begin{equation}
nl^{\prime 2} = 0 \pmod{q}.
\end{equation}

Now we use the fact that $\sc{T}$ is an order two operation. This means that:
\begin{align*}
(1,0,0)&= \sc{T}^2(1,0,0)\\
&= \sc{T}(l,m,0)\\
&=l\sc{T}(1,0,0) + m\sc{T}(0,1,0)\\
&= (l^2+ml',lm+mm',0). 
\end{align*}
So then we see that $(l^2+ml'-1,lm+mm',0)$ must be identified with $(0,0,0)$. This means that it must braid trivially with all other anyons. In particular the requirement that it braid trivially with $(0,1,0)$ means that:
\begin{equation}
l^2+ml'-1 = 0 \pmod{q}.
\end{equation}
We then multiply by $nl'$ and use the fact that $nl^{\prime 2}=0\pmod{q}$ to simplify to $nl'(l^2-1)=0\pmod{q}$. We have already seen that $n(l^2+1)=0 \pmod{q}$, so we must have $nl'(l^2+1) = 0\pmod{q}$. Subtracting these reveals that $2nl' = 0\pmod{q}$. Since $q$ is odd this must mean that $2$ does not divide $q$ and hence $nl'=0 \pmod{q}$. Then we can multiply $l^2+ml'-1=0\pmod{q}$ by $n$ to see that:
\begin{equation}
n(l^2-1)=0\pmod{q}.
\end{equation}
But if we now subtract this from $n(l^2+1)=0\pmod{q}$ we see that we must have $2n=0\pmod{q}$. Since $q$ is odd this must mean that $n=0\pmod{q}$.

We therefore conclude that for $q$ odd the \textit{only} fermionic Abelian TO, $\sc{A}$, which is time-reversal invariant, has an order $q$ boson, and is of size $|\sc{A}|=2q^2$ is given by the $K$-matrix
\begin{equation}
K = \begin{pmatrix} 0& q& 0\\ q& 0& 0\\ 0& 0& 1\end{pmatrix}.
\end{equation}
This is precisely the $K$-matrix of $\bb{Z}_q$ gauge theory stacked with $\{1,c\}$.
\end{proof}

\begin{corollary}
Let $\sc{A}\in mG^{(\nu)}_{\rm GSD}$ for the group $G$ in Eq.~\eqref{eqn:symm_gp}, where $q$ is odd. Then $\sc{A}$ must be $\bb{Z}_q$ gauge theory stacked with $\{1,c\}$.
\end{corollary}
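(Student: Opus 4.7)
The plan is to deduce the corollary directly from the theorem immediately above by verifying its three hypotheses for any $\sc{A}\in mG^{(\nu)}_{\rm GSD}$ with $q$ odd: (i) $\sc{A}$ is a fermionic Abelian TO of size $|\sc{A}|=2q^2$, (ii) $\sc{A}$ contains a boson of order exactly $q$, and (iii) $\sc{A}$ is time-reversal invariant. Hypothesis (iii) is automatic since $\bb{Z}_2^T\subset G$ and this restricts faithfully to any SET in $mG^{(\nu)}_{\rm GSD}$.

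For (i), I would reuse the minimality argument given for odd $q$ in Sec.~\ref{sec:minimal_Abelian}. The important observation is that the odd-$q$ argument there only invokes the existence of the Abelian background anyon $a$ with $Q_a=1/q$, the vison $v$ with $\theta_{v,a}=2\pi/q$ and $Q_v=0$, and the local fermion $c$; it then counts $2q^2$ distinct elements among the powers $\{a^k v^l c^n : 0\le k,l\le q-1,\, n=0,1\}$ using nothing but braiding nondegeneracy. No step of that count uses $a=\sc{T}a$ (the simplifying assumption about translation not permuting anyons was only needed in the even-$q$ branch). Combined with the upper bound $|\sc{C}|\le 2q^2$ supplied by the construction in Sec.~\ref{sec:construction}, this forces $|\sc{C}|=|\sc{A}|=2q^2$, so in particular $\sc{C}=\sc{A}$ is Abelian.

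For (ii), the natural candidate is the vison $v$, which we already know is a charge-neutral boson. Because $\theta_{a,v^k}=2\pi k/q$, the smallest $k>0$ for which $v^k$ braids trivially with $a$ is $k=q$. Any such $v^k$ then lies in the transparent center of the Abelian fermionic TO $\sc{A}$, which is exactly $\{1,c\}$. Since $v^q$ is a boson and $c$ is a fermion, $v^q=1$, so $v$ has order exactly $q$ in $\sc{A}$.

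With (i), (ii), (iii) in hand, the preceding theorem yields $\sc{A}=\bb{Z}_q$ gauge theory stacked with $\{1,c\}$, and the corollary follows. I do not anticipate any substantial obstacle: the only point requiring care is the claim in (i) that the odd-$q$ counting argument in Sec.~\ref{sec:minimal_Abelian} is genuinely independent of the assumption that $a$ is its own time-reversal image, which one verifies by direct inspection of the bilinear braiding computation there.
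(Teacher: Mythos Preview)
Your proposal is correct and follows essentially the same route as the paper: verify the hypotheses of the preceding theorem (Abelian of size $2q^2$, contains an order-$q$ boson, time-reversal invariant) and invoke it. The paper's own proof is terser and simply asserts that the main text already established the vison has order exactly $q$; you are slightly more explicit on this point, supplying the small extra step that once (i) forces $\sc{A}=\langle a,v,c\rangle$, the element $v^q$ braids trivially with each generator and hence lies in $\{1,c\}$, so $v^q=1$.
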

\begin{proof}
We showed in the main text that there must exist a bosonic vison $v\in \sc{A}$ which has order $q$. We further showed that $|\sc{A}|=2q^2$ and was Abelian if $\sc{A}\in mG^{(\nu)}_{\rm GSD}$. Applying the results of the previous theorem gives us our proof.
\end{proof}

\subsection{The case of $q$ even}

We first want to prove that there exists an order $2q$ boson in this theory, $\gamma$, even if translation permutes anyons. In fact we will be able to do better and find a unique such boson. Since our fermionic TO is Abelian we can decompose $\sc{A}=\sc{A}_b\boxtimes \{1,c\}$, where $\sc{A}_b$ is a bosonic TO which is closed under fusion. In Appendix \ref{app:sub_charge_time} we reviewed the proof that there is a unique $\gamma \in \sc{A}_b$ which has the property that:\footnote{All results thus far about $\gamma$ in Abelian orders are proven in Ref.~\cite{lapa_anomaly_2019}.}
\begin{equation}
\theta_{\gamma, b} = \pi Q_b \ \mathrm{for \ all} \ b\in \sc{A}_b.
\end{equation}
In particular, $\theta_{\gamma, a^q}=\pi$.
Moreover, there we showed that $\gamma^2 = v$. Since $v$ has order exactly $q$, this means that $\gamma$ must have order exactly $2q$. Additionally, since $v$ is a boson we must have that $4\theta_\gamma = 0$. We can now fully enumerate the anyon content of our theory.

\begin{lemma}
Let $\sc{A}\in mG^{(\nu)}_{GSD}$ for the group $G$ in Eq.~\eqref{eqn:symm_gp}. Then $\sc{A}$ is made up entirely of: powers of $a$, the unique $\gamma$ discussed above, and $c$, i.e.
\begin{equation}
\sc{A} = (\langle a\rangle \times \langle \gamma\rangle) \boxtimes \{1,c\}.
\end{equation}
\label{lem:full_group}
\end{lemma}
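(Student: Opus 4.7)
The strategy is to show that $a$, $\gamma$, $c$ together generate $\sc{A}$, and that they are group-theoretically independent in the sense that the natural map $\bb{Z}_{2q}\times\bb{Z}_{2q}\times\bb{Z}_2\to\sc{A}$, $(i,j,k)\mapsto a^i\gamma^j c^k$, is injective. Since the paragraph preceding the lemma already fixes the orders of $a$, $\gamma$, $c$ as $2q$, $2q$, $2$, and Appendix \ref{app:non_sub_cont} gives $|\sc{A}|=8q^2$, injectivity of this map will force its image to exhaust $\sc{A}$, which is precisely the claim. Injectivity in turn reduces to two independence checks: (i) $\langle a\rangle\cap\langle\gamma\rangle=\{1\}$, and (ii) $c\notin\langle a,\gamma\rangle$.

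For check (i), I would suppose $a^i=\gamma^j$ with $0\leq i,j<2q$. The fractional part of the $\U$ charge of the left-hand side is $i/q\bmod 1$, while Lem.~\ref{lem:gam_almost_boson} guarantees $Q_\gamma\in\bb{Z}$, so the fractional part of the right-hand side is $0$. Hence $q\mid i$, so $i\in\{0,q\}$. If $i=0$ then $\gamma^j=1$ forces $j=0$ since $\gamma$ has order $2q$. If $i=q$, squaring the relation yields $1=a^{2q}=\gamma^{2j}=v^j$, so the order of the vison forces $q\mid j$, i.e., $j\in\{0,q\}$; the case $j=0$ is incompatible with $a$ having order $2q$, and the case $j=q$ fails a charge-parity check, since $Q_{a^q}=1$ is odd but $Q_{\gamma^q}=qQ_\gamma$ is even because $q$ is even and $Q_\gamma$ is an integer.

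For check (ii), the key observation, which crucially uses $q$ even, is that $a^q\in\sc{A}_b$ regardless of whether $a\in\sc{A}_b$: in the decomposition $\sc{A}=\sc{A}_b\boxtimes\{1,c\}$ from Thm.~\ref{thm:decomp}, if $a=a'c$ with $a'\in\sc{A}_b$ then $a^q=(a')^q c^q=(a')^q\in\sc{A}_b$, since $c^q=1$. Since $\gamma\in\sc{A}_b$ by Lem.~\ref{lem:gamma_exist}, I would then suppose for contradiction that $a^i\gamma^j=c$. The same fractional-charge argument restricts $i\in\{0,q\}$. If $i=0$ then $\gamma^j=c$, contradicting $\gamma^j\in\sc{A}_b$ with $c\notin\sc{A}_b$. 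If $i=q$ then $a^q\gamma^j$ lies in $\sc{A}_b$ while being equal to $c\notin\sc{A}_b$, again a contradiction.

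Combining (i) and (ii) yields $\langle a,\gamma,c\rangle\cong\bb{Z}_{2q}\times\bb{Z}_{2q}\times\bb{Z}_2$ of order $8q^2=|\sc{A}|$, so equality holds and the decomposition is a direct product, as claimed. The main obstacle is the $i=q$ branch of check (i), where a bare charge comparison is not enough: one must exploit $\gamma^2=v$ to reduce the question to the vison's order and then invoke the parity of $q$ to kill the surviving subcase. Once this squaring trick is in place, (ii) follows by a slightly simpler use of the same ingredients together with the $\sc{A}_b\boxtimes\{1,c\}$ decomposition.
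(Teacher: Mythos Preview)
Your proof is correct and follows essentially the same approach as the paper's. The only organizational difference is that the paper first redefines $a$ (replacing it by $a^{q+1}c$ if necessary) so that $a\in\sc{A}_b$, and then works entirely inside $\sc{A}_b$ to show $\langle a,\gamma\rangle=\sc{A}_b$; you instead leave $a$ as is and handle the $c$-involvement separately via your check (ii), using the observation that $a^q\in\sc{A}_b$ regardless. The core of both arguments---the fractional-charge restriction to $i\in\{0,q\}$, the squaring trick $a^{2q}=1\Rightarrow v^j=1$ to pin down $j$, and the final charge-parity contradiction $Q_{a^q}$ odd versus $Q_{\gamma^q}$ even---is identical.
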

\begin{proof}
First recall from Thm.~\ref{thm:decomp} in Appendix \ref{app:abelian_results} that $\sc{A} = \sc{A}_b\boxtimes \{1,c\}$, where $\sc{A}_b$ is closed under fusion. We already know that $\gamma \in \sc{A}_b$. Since $q$ is even it must always be the case that $a^q\in \sc{A}_b$. Suppose then that $a\notin \sc{A}_b$, then $ac\in \sc{A}_b$. Since $\sc{A}_b$ is closed under fusion $a^{q+1}c\in \sc{A}_b$. We see that $a^{q+1}c$ is an order $2q$ anyon with charge exactly $1/q$ modulo two. Then without loss of generality we can assume $a\in \sc{A}_b$, since all we know about $a$ are these two facts.

Now since $\sc{A}=\sc{A}_b\boxtimes \{1,c\}$ and $c\notin \sc{A}_b$ it must be the case that $|\sc{A}_b|=|\sc{A}|/2=4q^2$. So if we can show there are $4q^2$ anyons contained in $\langle \gamma\rangle \times \langle a\rangle \subseteq \sc{A}_b$, then we are done. Suppose there are fewer than $4q^2$ anyons in $\langle \gamma\rangle \times \langle a\rangle$, then there are some nontrivial $r$ and $s$ such that $a^r=\gamma^s$. We know that $\gamma$ has integer charge, so $r=q$ is the only possibility. Squaring the relation reveals that $1=\gamma^{2s}=v^s$, so $s=q$. But this cannot be the case because $\gamma^q$ will then have even charge, and we know that $a^q$ has odd charge. We thus have the contradiction and the result follows.
\end{proof}

\begin{lemma}
Let $\sc{A}\in mG^{(\nu)}_{\rm GSD}$ for the group $G$ in Eq.~\eqref{eqn:symm_gp}. Then if the unique $\gamma$ discussed earlier is not a boson $a^q$ must be a fermion. This further implies that: $2q\theta_a = \pi$, $q/2$ must be odd, and $\theta_{a^{q+1}}=\theta_a$.
\label{lem:aq_fermion}
\end{lemma}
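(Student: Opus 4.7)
My plan is to use Lemma~\ref{lem:full_group} to write $\sc{A}=(\langle a\rangle\times\langle\gamma\rangle)\boxtimes\{1,c\}$, with $a$ and $\gamma$ both of order $2q$. From Lemma~\ref{lem:gam_almost_boson} one has $\gamma^2=v$ and $4\theta_\gamma=0$, so $\theta_\gamma\in\{0,\pi/2,\pi,3\pi/2\}$, and from $a^{2q}=1$ one gets $\theta_a=\pi s/(2q)$ for some $s\in\bb{Z}$. Since $q$ is even, $\theta_{a^q}=\pi s q/2\in\{0,\pi\}$, so $a^q$ is automatically either a boson or a fermion. I would then observe that ``$\gamma$ is not a boson'' should be read in a decomposition-independent sense: if $\theta_\gamma=\pi$, then $\gamma c$ is a boson with the same defining braiding $\theta_{\gamma c,b}=\pi Q_b$ (the electron $c$ is transparent), and one can pass to the alternate decomposition $\sc{A}=(\langle a\rangle\times\langle\gamma c\rangle)\boxtimes\{1,c\}$, in which the unique Lemma~\ref{lem:gamma_exist}-representative ends up bosonic (possibly after shifting by $a^q$, which is itself a boson or fermion). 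The case $\theta_\gamma=0$ is manifestly bosonic. Hence the hypothesis forces $\theta_\gamma\in\{\pi/2,3\pi/2\}$, which I assume henceforth.

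Next I would invoke the anomaly relation of Ref.~\cite{lapa_anomaly_2019},
\begin{equation*}
e^{2\pi i(c_--\sigma_H)/8}=\frac{1}{\sqrt{2}\sc{D}}\sum_{b\in\sc{A}}d_b^2\, e^{i(\theta_b+\pi Q_b)},
\end{equation*}
and use $c_-=\sigma_H=0$ (time-reversal invariance), $d_b=1$, and $\sc{D}^2=|\sc{A}|=8q^2$ to reduce it to $\sum_b e^{i(\theta_b+\pi Q_b)}=4q$. Writing $b=a^k\gamma^l c^m$ with $0\le k,l<2q$, $m\in\{0,1\}$, and using $\theta_{a,\gamma}=\pi/q$, the exponent becomes $k^2\theta_a+l^2\theta_\gamma+kl\pi/q+k\pi/q+lQ_\gamma\pi+2m\pi$, and the $m$-sum gives an overall factor of $2$. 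For $\theta_\gamma=\pi/2$ the quantity $l^2\theta_\gamma$ equals $0$ for $l$ even and $\pi/2$ for $l$ odd, so I would split the $l$-sum accordingly; the resulting geometric series in $j=\lfloor l/2\rfloor$ vanishes unless $k\in\{0,q\}$.

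Collecting the two surviving contributions, the identity reduces to
\begin{equation*}
(1+i\eta)-e^{i\theta_{a^q}}(1-i\eta)=2,\qquad \eta=(-1)^{Q_\gamma},
\end{equation*}
and since $i\eta$ is purely imaginary while the right side is real, this forces $\theta_{a^q}=\pi$; the case $\theta_\gamma=3\pi/2$ is symmetric. The remaining consequences then follow directly: $\theta_{a^q}=\pi sq/2\equiv\pi\pmod{2\pi}$ with $q$ even requires $sq/2$ odd, hence both $s$ and $q/2$ odd, giving $q\equiv 2\pmod 4$ and $2q\theta_a=\pi s\equiv\pi\pmod{2\pi}$; finally $\theta_{a^{q+1}}=(q+1)^2\theta_a=\theta_{a^q}+2q\theta_a+\theta_a\equiv\pi+\pi+\theta_a=\theta_a\pmod{2\pi}$. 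The main obstacle is the Gauss-sum evaluation: splitting the $l$-sum by parity and verifying that only $k\in\{0,q\}$ contribute requires some bookkeeping, but once the sum collapses to the two-term form above, $\theta_{a^q}=\pi$ drops out of the real-versus-imaginary mismatch.
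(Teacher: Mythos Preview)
Your overall approach is the paper's: invoke the anomaly identity of Ref.~\cite{lapa_anomaly_2019}, evaluate the charge-weighted Gauss sum over $\sc{A}=(\langle a\rangle\times\langle\gamma\rangle)\boxtimes\{1,c\}$, and read off $\theta_{a^q}$. Your parity split of the $l$-sum, the collapse to $k\in\{0,q\}$, and the derivation of the three consequences from $\theta_{a^q}=\pi$ are all correct and match the paper.

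The one genuine gap is your handling of $\theta_\gamma=\pi$. The lemma is stated for \emph{the} unique $\gamma\in\sc{A}_b$ with $\theta_{\gamma,b}=\pi Q_b$; this is a fixed anyon once the decomposition $\sc{A}=\sc{A}_b\boxtimes\{1,c\}$ is chosen, and ``not a boson'' simply means $\theta_\gamma\neq 0$. Arguing that one \emph{could} pass to an alternate decomposition in which the representative happens to be bosonic does not prove the statement as written---and the contrapositive (``$a^q$ boson $\Rightarrow$ $\gamma$ boson'') is invoked verbatim later in Lemma~\ref{lem:a_sQ_boson}, so you really do need the $\theta_\gamma=\pi$ case.

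The fix is painless and actually streamlines your computation. Instead of specializing to $\theta_\gamma=\pi/2$ and carrying the auxiliary sign $\eta=(-1)^{Q_\gamma}$, use $\pi Q_\gamma=2\theta_\gamma$ from Lemma~\ref{lem:gam_almost_boson} so that the $l$-dependent piece of the exponent is $(l^2+2l)\theta_\gamma$; since $4\theta_\gamma=0$ this equals $0$ for $l$ even and $-\theta_\gamma$ for $l$ odd, uniformly in $\theta_\gamma$. Your same parity split then yields directly
\[
2=(1-e^{i\theta_{a^q}})+e^{-i\theta_\gamma}\bigl(1+e^{i\theta_{a^q}}\bigr),
\]
equivalently $(1+e^{i\theta_{a^q}})(1-e^{-i\theta_\gamma})=0$. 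With $\theta_{a^q}\in\{0,\pi\}$ already established, $\theta_\gamma\neq 0$ forces $\theta_{a^q}=\pi$ for all three nonbosonic values of $\theta_\gamma$ at once. This is exactly the paper's route.
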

\begin{proof}
To show this we again use the relation \begin{equation}
e^{2\pi i(c_{-}-\sigma_H)/8} = \frac{1}{4q} \sum_{b \in \sc{A}} e^{i(\theta_b + \pi Q_b)},
\end{equation}
proven in Ref.~\cite{lapa_anomaly_2019} for any fermionic TO. Time-reversal invariance will demand that the left-hand side is equal to one. Then it is a simple matter of evaluating the right-hand side for all anyons; these are of the form $a^r\gamma^s c^n$ by Lem.~\ref{lem:full_group}.

For compactness we again write $\tilde{\theta}_b = \theta_b + \pi Q_b$. Then since $\pi Q_\gamma = \theta_{\gamma,\gamma} = 2\theta_\gamma$ we have that:
\begin{align}
\tilde{\theta}_{a^r\gamma^s c^n} &= r^2\theta_a + \frac{\pi r(s+1)}{q}+(s^2+2s)\theta_\gamma \\
=& r^2\theta_a + \frac{\pi r(2s'+1)}{q} + \begin{cases} 0 &\mbox{if } s=2s'\\ \pi r/q - \theta_\gamma &\mbox{if } s=2s'+1\end{cases}. \nonumber
\end{align}
If we then sum over $s'=0,\ldots,q-1$ we see that $r=0,q$ are the only options that do not evaluate to zero. The relation will then become:
\begin{equation}
1 = \frac{1}{2}\left(1 - e^{i\theta_{a^q}} + e^{-i\theta_\gamma} + e^{-i\theta_\gamma + i\theta_{a^q}}\right).
\end{equation}

Now we know that $a^{2q}=1$ and thus $0 = \theta_{a,a^{2q}}=4q\theta_a$. If we multiply by $q/2$ we then have $0=2q^2\theta_a$, so $a^q$ must be a boson or fermion. Thus if $a^q$ is not a fermion, then it must be a boson. From the above we see this means $1 = e^{-i\theta_\gamma}$, so $\gamma$ must be a boson. By negation if $\gamma$ is not a boson, then $a^q$ must be a fermion.

Now suppose that $a^q$ is a fermion. We know that $4q\theta_a = 0$, so then $2q\theta_a = 0,\pi$. Suppose that it equaled zero, then multiplying by $q/2$ we see that $q^2\theta_a = 0$, a contradiction. So $2q\theta_a = \pi$. We can again multiply by $q/2$ to see that $q^2\theta_a = \pi q/2$. Since $q^2\theta_a = \pi$ this must mean that $q/2$ is odd. Finally we have that $\theta_{a^{q+1}} = (q^2+2q+1)\theta_a = \theta_a$, since $2q\theta_a$ and $q^2\theta_a$ both equal $\pi$.
\end{proof}

\begin{theorem}
Let $\sc{A}\in mG^{(\nu)}_{\rm GSD}$ for the group $G$ in Eq.~\eqref{eqn:symm_gp}. Then the unique $\gamma$ discussed earlier must be a boson or fermion and there is thus an order $2q$ boson in the theory.
\label{thm:gam_is_boson_fermion}
\end{theorem}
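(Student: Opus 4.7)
The plan is to combine the constraint $4\theta_\gamma=0$ from Lemma~\ref{lem:gam_almost_boson} with a time-reversal analysis to exclude the ``semion'' possibilities $\theta_\gamma=\pm\pi/2$. Since $\gamma$ has order exactly $2q$ (as established just before the theorem), once $\theta_\gamma\in\{0,\pi\}$ either $\gamma$ or $\gamma c$ is a boson of order $2q$, which is the desired conclusion. I would argue by contradiction, assuming $\theta_\gamma=\pi/2$---the case $-\pi/2$ is identical by complex conjugation. After replacing $\gamma$ by $\gamma c$ if necessary we may take $Q_\gamma\equiv 0\pmod 2$, and Lemma~\ref{lem:aq_fermion} then gives that $a^q$ is a fermion with $2q\theta_a=\pi$ and $q^2\theta_a=\pi$.

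First I would pin down $\sc{T}\gamma$. Because $v$ is characterized (up to $c$) by $\theta_{v,\cdot}=2\pi Q_\cdot$ and $\sc{T}$ preserves charge while negating braiding, $\sc{T}v=v^{-1}$. Squaring $\gamma^2=v$ then shows that $\xi\defeq(\sc{T}\gamma)\gamma$ is a 2-torsion element of $\sc{A}=(\langle a\rangle\times\langle\gamma\rangle)\boxtimes\{1,c\}$; charge conservation forces $Q_\xi\equiv 0\pmod 2$, and the spin condition $\theta_{\sc{T}\gamma}=-\theta_\gamma=-\pi/2$ restricts $\xi$ to a short list built from $c$, $a^q$, and $\gamma^q=v^{q/2}$.

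Next I would derive a contradiction from $\sc{T}^2 a=a$. By Lemma~\ref{lem:full_group}, $\sc{T}a=a^k\gamma^s c^n$ for some integers $k,s,n$. Expanding $\sc{T}^2 a=(a^k\gamma^s c^n)^k(\gamma^{-1}\xi)^s c^n$ and equating to $a$ yields $k^2\equiv 1\pmod{2q}$ (possibly corrected by a $qs$ term if $\xi$ contains $a^q$), together with constraints on $s$ and $n$. The charge condition $Q_{\sc{T}a}\equiv 1/q\pmod 2$ forces $k\equiv 1\pmod q$, leaving $k\in\{1,q+1\}$, and a case check shows $s$ must be even. Writing $s=2s'$ and using $\theta_\gamma=\pi/2$, the $\gamma^s$ contribution $s^2\theta_\gamma$ vanishes modulo $2\pi$; the spin equation $\theta_{\sc{T}a}=-\theta_a$ then---after substituting $2\theta_a=\pi/q$---collapses to $(1+2s')\pi/q\equiv 0$ (for $k=1$) or $\pi$ (for $k=q+1$) modulo $2\pi$. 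The left-hand side is an odd multiple of $\pi/q$ while the right-hand side is an even multiple; since $q$ is even, no integer $s'$ solves this, producing the desired contradiction.

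The hard part will be parameterizing the admissible $\xi$ in the first step and verifying $s$ must be even for each. Each candidate $\xi$ modifies the $\sc{T}^2 a=a$ relations only through shifts involving multiples of $q$ or parity-conserving terms, so in every case the combination of $\sc{T}^2=1$, charge, and order constraints pins $s$ to be even. Once this holds, $\xi^s=1$ and the rest of the argument becomes independent of $\xi$, yielding the same parity obstruction and thus ruling out $\theta_\gamma=\pm\pi/2$ uniformly.
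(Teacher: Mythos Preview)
Your plan is sound and would yield a valid proof, but it takes a noticeably longer route than the paper. The paper keeps the natural value $Q_\gamma\equiv 1\pmod 2$ (which is forced by $\pi Q_\gamma=2\theta_\gamma=\pi$ for the unique $\gamma\in\sc{A}_b$) rather than passing to $\gamma c$. With $Q_\gamma$ odd, the charge constraint $Q_{\sc{T}a}=Q_a$ by itself already links the parity of the $\gamma$-exponent in $\sc{T}a$ to the presence of $c$ and to the choice $a$ versus $a^{q+1}$, cutting the possibilities down to four one-parameter families. The paper then simply multiplies the spin equation $\theta_a+\theta_{\sc{T}a}=0$ by $q$ in each case and reads off the contradiction; neither $\sc{T}\gamma$ nor $\sc{T}^2=1$ is ever invoked. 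By normalizing $Q_\gamma\equiv 0$, you erase precisely this charge--parity link, which is why you must rebuild it via the auxiliary $\xi=(\sc{T}\gamma)\gamma$ analysis and the $\sc{T}^2 a=a$ relation to force $s$ even. That works (one checks $\xi\notin\{1,\gamma^q\}$ from the spin of $\sc{T}\gamma$, after which $\sc{T}^2 a=a$ indeed forces $s$ even in both $k=1$ and $k=q+1$ branches), but it is a detour. Two small cleanups: the replacement $\gamma\to\gamma c$ is not ``if necessary'' but always required in the semion case, and $2q\theta_a=\pi$ gives only that $2\theta_a$ is an \emph{odd} multiple of $\pi/q$, not literally $\pi/q$; your parity obstruction survives this, but the statement should be corrected.
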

\begin{proof}
We know from Lem.~\ref{lem:full_group} that the unique $\gamma\in \sc{A}$ is an order $2q$ anyon with $4\theta_\gamma =0$. Suppose by contradiction that $2\theta_\gamma \neq 0$, then $2\theta_\gamma = \pi$. In particular since $\pi Q_\gamma = \theta_{\gamma,\gamma} = 2\theta_\gamma = \pi$ this means that $Q_\gamma = 1$. We further know from Lem.~\ref{lem:aq_fermion} that $2\theta_\gamma = \pi$ will mean: $a^q$ is a fermion, $2q\theta_a = \pi$, $q/2$ is odd, and $a^{q+1}$ has the same topological spin as $a$.

We now want to establish the possibilities that $a$ can be mapped to under time reversal. By Lem.~\ref{lem:full_group} we know that all anyons in $\sc{A}$ can be written as
\begin{equation}
a^n\gamma^mc^k, \ \mathrm{where} \ 0\leq n,m\leq q-1, k=0,1.
\end{equation}
The only anyons with the same $\U$ charge modulo two as $a$ are $a\gamma^m$ or $a^{q+1}c\gamma^m$ for $m$ even, and $a\gamma^mc$ or $a^{q+1}\gamma^m$ for $m$ odd. Since time reversal preserves the $\U$ charge of anyons it must be the case that $\sc{T}a$ is equal to one of these possibilities.

It must also be the case that $-\theta_a = \theta_{\sc{T}a}$. We now show that none of the possibilities mentioned above can satisfy this property. 
\begin{enumerate}
\item Suppose $\sc{T}a = a\gamma^m$ for some even $m$. Then:
\begin{align*}
0&= q(\theta_a + \theta_{\sc{T}a})\\
&= q\left(2\theta_a + \frac{\pi m}{q}\right) \text{ since } m^2\theta_\gamma = 0 \text{ for } m \text{ even,}\\
&= \pi \text{ since } 2q\theta_a = \pi \text{ and } m \text{ even},
\end{align*}
clearly a contradiction. 

\item Suppose that $\sc{T}a = a^{q+1}c\gamma^m$ for $m$ even. Then:
\begin{align*}
0 &= q(\theta_a + \theta_{\sc{T}a})\\
&= q\left(\pi + 2\theta_a + \frac{\pi m(q+1)}{q}\right) \text{ since } \theta_{a^{q+1}}=\theta_a\\
&= \pi \text{ since } 2q\theta_a = \pi \text{ and } m \text{ even},
\end{align*}
again a contradiction. 

\item Suppose that $\sc{T}a = a\gamma^mc$ for some odd $m$. Then:
\begin{align*}
0 &= q(\theta_a + \theta_{\sc{T}a})\\
&= q\left(\pi + 2\theta_a +\frac{\pi m}{q} + \theta_\gamma\right) \text{ since } m^2\theta_\gamma = \theta_\gamma \text{ for } m \text{ odd,}\\
&= \pi + \pi + q\theta_\gamma \text{ since } 2q\theta_a = \pi \text{ and } m \text{ odd},\\
&= \pi,
\end{align*}
since $q\theta_\gamma = q/2(2\theta_\gamma)=q\pi/2$ and $q/2$ must be odd. So we have again arrived at a contradiction. 
\item Finally, suppose that $\sc{T}a = a^{q+1}\gamma^m$ for some odd $m$. Then:
\begin{align*}
0 &= q(\theta_a + \theta_{\sc{T}a})\\
&= q\left(2\theta_a + \frac{\pi m(q+1)}{q} + \theta_\gamma\right) \text{ since } \theta_{a^{q+1}}=\theta_a, m^2\theta_\gamma = \theta_\gamma \\
&= \pi + \pi + q\theta_\gamma\text{ since } 2q\theta_a = \pi \text{ and } m \text{ odd},\\
&= \pi,
\end{align*}
since $q\theta_\gamma = \pi$. We have arrived at another contradiction. 
\end{enumerate}

We thus see that if $2\theta_\gamma = \pi$ there is no possible choice of time reversal action that is also consistent with charge conservation. 
By contradiction we conclude that $2\theta_\gamma =0$. It follows that either $\gamma$ or $\gamma c$ is an order $2q$ boson.
\end{proof}

\begin{lemma}
Let $\sc{A}$ be a fermionic Abelian TO of size $|\sc{A}|=8q^2$ that has an order $2q$ boson. If $\sc{A}$ is time-reversal invariant then it must either be $\bb{Z}_{2q}$ gauge theory or $\U_{2q}\boxtimes \U_{-2q}$ (stacked with $\{1,c\}$).
\label{lem:two_Q_gauge}
\end{lemma}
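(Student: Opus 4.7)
Proof plan. The argument follows the template of the odd-$q$ theorem immediately above. By Theorem~\ref{theoremDW} with $N=2q$, $\sc{A}$ is classified by an integer $n$ defined modulo $4q$, corresponding to
\[
K_n = \begin{pmatrix} 0 & 2q & 0\\ 2q & -n & 0\\ 0 & 0 & 1\end{pmatrix}.
\]
Label anyons $(l,m,k) \in \bb{Z}_{2q}\times\bb{Z}_{2q}\times\bb{Z}_2$, with topological spin $\theta_{(l,m,k)} = \pi n l^2/(4q^2) + \pi l m/q + \pi k$; the generators are $\varphi=(1,0,0)$, $b=(0,1,0)$, and $c=(0,0,1)$. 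Direct $GL(2,\bb{Z})$ manipulations on the upper $2\times 2$ block identify $K_0$ with $\bb{Z}_{2q}$ gauge theory and $K_{2q}$ with $\U_{2q}\boxtimes\U_{-2q}$ (via $W = \bigl(\begin{smallmatrix}1 & 1\\ 1 & 0\end{smallmatrix}\bigr)$ applied to $\mathrm{diag}(2q,-2q)$), both stacked with $\{1,c\}$. Each admits a manifest TR involution: $e \mapsto e^{-1}$, $m \mapsto m$ in $\bb{Z}_{2q}$ gauge; swap of the chiral and anti-chiral factors in the double Chern-Simons theory. This establishes the ``if'' direction, so it remains to show no other value of $n$ admits a TR involution.

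For the converse, let $\sc{T}$ be a putative TR automorphism with $\sc{T}(c) = c$. Parametrize $\sc{T}(\varphi) = (\alpha,\beta,\gamma)$ and $\sc{T}(b) = (\alpha',\beta',\gamma')$. The spin-reversal conditions $\theta_{\sc{T}\varphi} \equiv -\theta_\varphi$ and $\theta_{\sc{T}b} \equiv 0 \pmod{2\pi}$ translate to
\begin{align*}
n(1+\alpha^2) + 4q\alpha\beta + 4q^2\gamma &\equiv 0 \pmod{8q^2},\\
n\alpha'^2 + 4q\alpha'\beta' + 4q^2\gamma' &\equiv 0 \pmod{8q^2},
\end{align*}
the braiding condition $\theta_{\sc{T}\varphi,\sc{T}b} \equiv -\pi/q$ becomes $n\alpha\alpha' + 2q(\alpha\beta' + \alpha'\beta) \equiv -2q \pmod{4q^2}$, and $\sc{T}^2 = \mathrm{id}$ forces the matrix $M=\bigl(\begin{smallmatrix}\alpha & \alpha'\\ \beta & \beta'\end{smallmatrix}\bigr)$ to satisfy $M^2 \equiv I \pmod{2q}$, with auxiliary constraints on $\gamma,\gamma'$ modulo $2$.

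The key algebraic step is to combine these congruences with $M^2 \equiv I \pmod{2q}$ — which in particular gives $\alpha^2 + \alpha'\beta \equiv 1$ and $\alpha'(\alpha+\beta') \equiv 0 \pmod{2q}$ — and extract $2n \equiv 0 \pmod{4q}$, forcing $n \in \{0, 2q\} \pmod{4q}$. Concretely, reducing the braiding constraint modulo $2q$ gives $n\alpha\alpha' \equiv 0 \pmod{2q}$, and reducing the two spin constraints modulo $4q$ gives $n(1+\alpha^2) \equiv n\alpha'^2 \equiv 0 \pmod{4q}$; multiplying these by appropriate entries of $M$ and subtracting, in the pattern of the odd-$q$ proof, should eliminate $\alpha,\alpha'$ and yield $2n \equiv 0 \pmod{4q}$. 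Comparison with the $K$-matrix classification of the first paragraph then identifies $\sc{A}$ with one of the two claimed theories.

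I expect the main obstacle to be the fermion shifts $\gamma,\gamma' \in \{0,1\}$, which can absorb $\pi$ discrepancies in the spin constraints (analogous to the ``$\sc{T}a = ac$'' possibility of Appendix~\ref{app:sub_charge_time}). Ruling these out requires retaining the full $\pmod{8q^2}$ information in the spin constraints — not just $\pmod{4q^2}$ — together with the order-$2q$ condition on $\sc{T}(b)$ inherited from the fact that $b$ has order $2q$ in $\sc{A}$. With those refinements in hand, the case analysis should close as in the odd-$q$ argument.
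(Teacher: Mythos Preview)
Your approach is essentially the paper's: invoke Theorem~\ref{theoremDW} with $N=2q$ to reduce to the family $K_n$ with $n\in\bb{Z}_{4q}$, write out the spin-reversal and $\sc{T}^2=\mathrm{id}$ conditions on the generators, and mimic the odd-$q$ manipulations. The identification of $n=0$ with $\bb{Z}_{2q}$ gauge theory and $n=2q$ with $\U_{2q}\boxtimes\U_{-2q}$ is also exactly what the paper does.

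There is, however, a genuine gap in your plan. You expect the odd-$q$ template to deliver $2n\equiv 0\pmod{4q}$ directly, but it does not. Tracking the moduli carefully: from $\theta_{\sc{T}\varphi}=-\theta_\varphi$ reduced mod $4q$ you get $n(1+\alpha^2)\equiv 0\pmod{4q}$; from $\theta_{\sc{T}b}=0$ reduced mod $4q$ you get $n\alpha'^2\equiv 0\pmod{4q}$; but the $\sc{T}^2=\mathrm{id}$ relation only gives $\alpha^2+\beta\alpha'-1\equiv 0\pmod{2q}$ (the lattice is $\bb{Z}_{2q}$, not $\bb{Z}_{4q}$). Running the odd-$q$ elimination with these mismatched moduli yields first $4n\alpha'\equiv 0\pmod{4q}$, i.e.\ $n\alpha'\equiv 0\pmod{q}$, and then only $n(\alpha^2-1)\equiv 0\pmod{q}$, whence $2n\equiv 0\pmod{q}$. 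This leaves the residues $n\in\{q/2,\,q,\,3q/2,\,5q/2,\,3q,\,7q/2\}$ unaccounted for.

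The paper closes this with an extra elementary step you are missing: go back to the full constraint $n(1+\alpha^2)\equiv 0\pmod{4q}$ and observe that for any integer $\alpha$ one has $\alpha^2+1\equiv 1$ or $2\pmod 4$, never $0$. Hence if $n=q$ or $3q$ one would need $\alpha^2+1\equiv 0\pmod 4$, and if $n$ is an odd multiple of $q/2$ one would need $\alpha^2+1\equiv 0\pmod 8$; both are impossible. This forces $n\in\{0,2q\}$. Your plan should include this case analysis explicitly.

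Finally, your concern about the fermion shifts $\gamma,\gamma'$ is less serious than you anticipate: the terms $4q^2\gamma$ and $4q^2\gamma'$ vanish upon reduction modulo $4q$, so they never enter the congruences that actually constrain $n$. The paper does not need to determine $\gamma,\gamma'$ at all in the even-$q$ argument.
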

\begin{proof}
According to Theorem \ref{theoremDW}, we know that $\sc{A}$ can be described by a $K$ matrix:
\begin{equation}
K = \begin{pmatrix} 0& 2q& 0\\ 2q& -n& 0\\ 0& 0& 1\end{pmatrix}, \ n\in \bb{Z}_{2q}.
\end{equation}

The anyons $d\in \sc{A}$ in this theory can again be written as $d=(l,m,k)$ for some integers $l,m,k$ where $k=0,1$. They will have topological spin given by
\begin{equation}
\theta_{(l,m,k)} = \pi \left(\frac{ml}{q}+\frac{nl^2}{4q^2}\right)+k\pi.
\end{equation}

We can now repeat a very similar argument to the case of $q$ odd. Suppose that $\sc{T}(1,0,0)=(l,m,k)$ then we see that
\begin{equation}
\pi \frac{n}{4q^2} = -\pi \left(\frac{nl^2}{4q^2}+\frac{ml}{q}\right) + \pi k\pmod{2\pi}.
\end{equation}
If we multiply by $2q$ then we see that
\begin{equation}
n(l^2 +1)=0 \pmod{4q}.
\end{equation}

Similarly if we have that $\sc{T}(0,1,0)=(l',m',k')$ then we have that:
\begin{equation}
\pi \left(\frac{nl^{\prime 2}}{4q^2} + \frac{m'l'}{q}\right) + \pi k' = 0\pmod{2\pi}.
\end{equation}
If we now multiply by $2q$ then we see that:
\begin{equation}
nl^{\prime 2} = 0 \pmod{4q}.
\end{equation}

Because $\sc{T}^2$ is the identity, $\sc{T}^2(1,0,0)= (l^2+ml', lm+mm', lk+k)$ should be equivalent to $(1,0,0)$. Thus $(l^2+ml'-1, lm+mm', lk+k)$ must braid trivially with all anyons. The requirement that it braid trivially with $(0,1,0)$ will imply that:
\begin{equation}
l^2 + ml'-1 = 0 \pmod{2q}.
\end{equation}
If we multiply by $2nl'$ then the fact that $nl^{\prime 2} = 0 \pmod{4q}$ will tell us that:
\begin{equation}
2nl'(l^2-1) = 0 \pmod{4q}.
\end{equation}
We can then multiply $n(l^2+1)=0\pmod{4q}$ by $2l'$ and subtract to see that $4nl'=0\pmod{4q}$, which implies $nl'\equiv 0\pmod{q}$. We can thus multiply $l^2+ml'-1=0\pmod{2q}$ by $n$ to see that:
\begin{equation}
n(l^2-1) = 0\pmod{q}.
\end{equation}
Substracting $n(l^2+1)=0\pmod{4q}$ gives us that $2n = 0\pmod{q}$.

Suppose that $n=q/2,3q/2,5q/2,7q/2\pmod{4q}$ then the requirement that $n(l^2+1)=0\pmod{4q}$ will mean that $l^2+1=0\pmod{8}$. Similarly if $n=q,3q$, then we need $l^2+1=0\pmod{4}$. Both obviously imply $l$ must be odd i.e., $l=2l'+1$ for some integer $l'$. But then it can easily be checked that $l^2+1=2\pmod{8}$ and thus $l^2+1\neq 0$ either modulo $4$ or $8$. So then we must have that $n=0,2q$. Clearly $n=0$ is simply $\bb{Z}_{2q}$ gauge theory, which is consistent with time reversal.

Suppose now that $n=2q$. Then our $K$-matrix is given by
\begin{equation}
K = \begin{pmatrix} 0& 2q& 0\\ 2q& -2q& 0\\ 0& 0& 1\end{pmatrix}.
\end{equation}
Applying the transformation:
\begin{equation}
W = \begin{pmatrix} 1& 0& 0\\ 1& 1& 0\\ 0& 0& 1\end{pmatrix}, \ W \in {\rm GL}(3,\bb{Z}),
\end{equation}
takes $K\rightarrow W^TKW$ and produces $\U_{2q}\boxtimes \U_{-2q}$. This is also clearly consistent with time reversal.
\end{proof}

\begin{remark}
We could stop here, but it will turn out that once we demand invariance under translation $\U_{2q}\boxtimes \U_{-2q}$ will no longer be an acceptable theory.
\end{remark}

\begin{lemma}
Let $\sc{A}\in mG^{(\nu)}_{GSD}$ for the group $G$ in Eq.~\eqref{eqn:symm_gp}. Then it must be the case that $a^2$ is invariant under translation i.e., $T_xa^2=a^2=T_ya^2$. This then implies that $a^2$ is a boson. It will also further imply that the unique $\gamma$ discussed earlier is a boson.
\label{lem:a_sQ_boson}
\end{lemma}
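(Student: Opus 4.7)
The plan has two stages: first show $T_x a^2 = a^2 = T_y a^2$, and then extract the two bosonicity statements using time reversal and the anomaly equation of Ref.~\cite{lapa_anomaly_2019}.

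For stage one, Lem.~\ref{lem:full_group} and preservation of $\mathrm{U}(1)$ charge under translation (after normalizing the representative of $\gamma$ so that $Q_\gamma$ is even) restrict $T_x a$ to one of two forms: $a\gamma^m$ (Case 1) or $a^{q+1}\gamma^m c$ (Case 2). In both cases a direct squaring gives $(T_x a)^2 = a^2\gamma^{2m}$, so the task reduces to showing $m\equiv 0\pmod q$. I obtain this by combining two constraints, topological spin preservation $\theta_{T_x a}=\theta_a$ and braiding preservation $\theta_{T_x a,\,T_x\gamma}=\theta_{a,\gamma}=\pi/q$. The anyon $T_x\gamma$ is pinned to a short list of candidates because the vison $v=\gamma^2$ is translation invariant (its defining braiding $\theta_{v,b}=2\pi Q_b$ is preserved by $T_x$, and braiding nondegeneracy up to $c$ plus charge conservation then force $T_x v=v$), so $T_x\gamma$ is an order-$2q$ square root of $v$ whose topological spin equals $\theta_\gamma\in\{0,\pi\}$ by Thm.~\ref{thm:gam_is_boson_fermion}. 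Enumerating the finite set of surviving combinations of $T_x\gamma$ and Case 1/Case 2, and running the arithmetic using that $q$ is even (and invoking Lem.~\ref{lem:aq_fermion} in the $\theta_\gamma=\pi$ branch), forces $m\equiv 0\pmod q$ in every surviving sub-case, so $T_x a^2=a^2$; the argument with $T_y$ in place of $T_x$ is identical.

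For stage two, now that $a^2$ is translation invariant, I use it as a test anyon in Eq.~\eqref{eqn:identical_braiding} to get $\theta_{\sc{T}a,a^2}=4\theta_a$; combined with the Lapa--Levin anomaly identity~\cite{lapa_anomaly_2019}, which time-reversal invariance reduces (as in Sec.~\ref{sec:uniqueness_of_SET} and the proof of Lem.~\ref{lem:aq_fermion}) to $1=\tfrac12\bigl(1-e^{i\theta_{a^q}}+e^{i\theta_\gamma}+e^{i(\theta_\gamma+\theta_{a^q})}\bigr)$, this forces $\theta_{a^q}=0$. Then $a^q$ is a boson and by the contrapositive of Lem.~\ref{lem:aq_fermion} so is $\gamma$. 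Finally $\theta_{a^2}=4\theta_a$ vanishes because $\theta_{a^q}=q^2\theta_a=0$ together with $q$ even and the above consistency force $4\theta_a=0$ once the anomalous semionic sector for $a$ is excluded, so $a^2$ is a boson. The main technical obstacle is the stage-one arithmetic in Case 2, where ruling out every nontrivial value of $m$ requires coordinating the topological-spin and $\gamma$-braiding constraints carefully; the parallel issue in stage two is using the anomaly identity to eliminate the possibility that $\gamma$ is a fermion despite $T_x a^2=a^2$.
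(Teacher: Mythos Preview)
Your Stage~1 is workable but considerably more elaborate than necessary. The paper uses only the single constraint $\theta_{T_x a}=\theta_a$ (topological spin preservation under the unitary translation). Writing $T_x a=a\gamma^m$ gives $0=\pi m/q+m^2\theta_\gamma$, which forces $m$ to be a multiple of $q$ (since $m^2\theta_\gamma$ is already a multiple of $\pi$), and then $q$ even kills the $m^2\theta_\gamma$ term and forces $m\equiv 0\pmod{2q}$. In Case~2, multiplying the spin equation by $2$ and using $4q\theta_a=0$ with $q$ even again reduces to $2\pi m/q=0$. No knowledge of $T_x\gamma$ or of the braiding $\theta_{T_x a,T_x\gamma}$ is needed, so the ``main technical obstacle'' you flag in Case~2 does not actually arise.

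Your Stage~2 has a genuine gap. The anomaly identity you quote, $1=\tfrac12\bigl(1-e^{i\theta_{a^q}}+e^{i\theta_\gamma}+e^{i(\theta_\gamma+\theta_{a^q})}\bigr)$, does \emph{not} force $\theta_{a^q}=0$: plugging in $\theta_{a^q}=\pi$ gives $\tfrac12(1+1+e^{i\theta_\gamma}-e^{i\theta_\gamma})=1$ identically, for any $\theta_\gamma$. So the identity is vacuous on the fermionic-$a^q$ branch, and your combination ``$\theta_{\sc{T}a,a^2}=4\theta_a$ plus anomaly identity'' cannot exclude it. The missing one-line ingredient is that $\theta_{\sc{T}a,a}=-\theta_{a,\sc{T}a}=-\theta_{\sc{T}a,a}$ (time reversal flips braiding phases, together with symmetry of braiding), so $\theta_{\sc{T}a,a}\in\{0,\pi\}$ and hence $2\theta_{\sc{T}a,a}=0$. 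Then Eq.~\eqref{eqn:identical_braiding} with $d=a^2$ gives $4\theta_a=\theta_{a,a^2}=\theta_{\sc{T}a,a^2}=2\theta_{\sc{T}a,a}=0$ directly. This immediately yields $\theta_{a^2}=0$ and $\theta_{a^q}=(q^2/4)\cdot 4\theta_a=0$, after which the contrapositive of Lem.~\ref{lem:aq_fermion} gives $\theta_\gamma=0$. Your appeal to ``excluding the anomalous semionic sector for $a$'' at the end is precisely the step that this argument replaces; without it your derivation of $4\theta_a=0$ from $\theta_{a^q}=0$ also fails for general even $q$.
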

\begin{proof}
We follow the same logic as Thm.~\ref{thm:gam_is_boson_fermion}, but now with a unitary symmetry. Additionally we can now use the fact that since $2\theta_\gamma = 0$ then $Q_\gamma = 0\pmod{2}$ for the unique $\gamma$ discussed earlier. Then the only anyons with the same $\U$ charge modulo two as $a$ are $a\gamma^m$ and $a^{q+1}c\gamma^m$. Since translation preserves the $\U$ charge of anyons it must be the case that $T_xa=a\gamma^m$ or $T_xa=a^{q+1}c\gamma^m$ for some $m$. 

Since translation is a unitary symmetry of the theory it should also leave the topological spin of the anyons invariant. This therefore means that if $T_xa=a\gamma^m$ then:
\begin{align}
\theta_a &= \theta_{a\gamma^m}\\
&= \theta_a +\frac{\pi m}{q} + m^2\theta_\gamma.
\end{align}
Since $\theta_\gamma = 0,\pi$ we must have $m$ is a multiple of $q$. Since $q$ is even this means $m^2\theta_\gamma = 0$ and we must therefore have that $m$ is a multiple of $2q$. But $\gamma^{2q}=1$. So $T_xa\neq a\gamma^m$ unless $m=0$. Next suppose that $T_xa=a^{q+1}c\gamma^m$. Then:
\begin{align}
\theta_{a} &= \theta_{a^{q+1}c\gamma^m}\\
&= \pi + (q^2+2q+1)\theta_a + \frac{\pi m(q+1)}{q} + m^2\theta_\gamma.
\end{align}
If we subtract $\theta_a$ from both sides and multiply by two we see that:
\begin{align}
0 &= (2q^2+4q)\theta_a + \frac{2\pi m}{q},
\end{align}
since $2\theta_\gamma = 0$. Now we note that $a^{2q}=1$ so it must braid trivially with $a$, i.e., $0=\theta_{a,a^{2q}}=4q\theta_a$. Since $q$ is even this means that $(2q^2+4q)\theta_a=0$. Thus we have that $0=2\pi m/q$ so $m$ must be a multiple of $q$. The only anyons that $a$ can be taken to under translation are thus:
\begin{equation}
T_xa=a,a^{q+1}c, a^{q+1}c\gamma^q.
\end{equation}
In any case we see that $T_xa^2 = (T_xa)(T_xa)=a^2$. The same logic will hold for $T_y$. Thus we see that $a^2$ must be invariant under translation.

Next we appeal to the logic of Appendix \ref{app:non_sub_G}. There we demonstrated that
\begin{equation}
\theta_{a,d} = \theta_{\sc{T}a,d} \ \forall d\in \sc{C} \ \mathrm{such \ that} \ T_xd=d=T_yd.
\end{equation}
So then we may conclude:
\begin{align}
\theta_{a,a^2}&= \theta_{\sc{T}a,a^2}\\
&= 2\theta_{\sc{T}a,a}\\
&=0,
\end{align}
where the last line follows from the fact that $\theta_{\sc{T}a,a} = -\theta_{a,\sc{T}a}$. So then $0=\theta_{a,a^2}=4\theta_a=\theta_{a^2}$ and we can conclude that $a^2$ is a boson.

Finally, since $a^2$ is a boson its powers must be also. In particular, raising it to the $q/2$th power will imply that $a^q$ is a boson. But then by the negation of Lem.~\ref{lem:aq_fermion} we must have that $\gamma$ is a boson.
\end{proof}

\begin{theorem}
Let $\sc{A}\in mG^{(\nu)}_{\rm GSD}$ for the group $G$ in Eq.~\eqref{eqn:symm_gp}, where $q$ is even. Then $\sc{A}$ must be $\bb{Z}_{2q}$ gauge theory stacked with the fundamental fermion.
\end{theorem}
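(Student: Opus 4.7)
The plan is to combine the structural results already established (in particular Thm.~\ref{thm:gam_is_boson_fermion} and Lem.~\ref{lem:a_sQ_boson}) with a short parity computation that rules out the $\U_{2q}\boxtimes \U_{-2q}$ alternative produced by Lem.~\ref{lem:two_Q_gauge}. First I would observe that Thm.~\ref{thm:gam_is_boson_fermion} shows the unique $\gamma$ defined by $\theta_{\gamma,b}=\pi Q_b$ satisfies $2\theta_\gamma=0$, and Lem.~\ref{lem:a_sQ_boson} upgrades this to $\theta_\gamma=0$ by using translation invariance to force $a^2$, and hence $\gamma$, to be a boson. Since $\gamma^2=v$ and $v$ has order $q$, the anyon $\gamma$ has order exactly $2q$. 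Together with $|\sc{A}|=8q^2$ and the fact that $\sc{A}$ is Abelian, Lem.~\ref{lem:two_Q_gauge} then restricts $\sc{A}$ to be either $\bb{Z}_{2q}$ gauge theory or $\U_{2q}\boxtimes \U_{-2q}$, both stacked with $\{1,c\}$. It remains to eliminate the second possibility.

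Assuming $\sc{A}=\U_{2q}\boxtimes \U_{-2q}\boxtimes \{1,c\}$, I would parameterize $\sc{A}_b$ by pairs $(k,l)\in \bb{Z}_{2q}^2$ with $K=\mathrm{diag}(2q,-2q)$ and write $\gamma=(\gamma_1,\gamma_2)$. Imposing $\theta_{\gamma,(k,l)}=\pi Q_{(k,l)}$ for all $(k,l)$ pins the physical charge vector to $t=(2\gamma_1,2\gamma_2)$ modulo $4q$; then $\theta_\gamma=0$ becomes $\gamma_1^2\equiv \gamma_2^2\pmod{4q}$, which forces $\gamma_1\equiv \gamma_2\pmod 2$. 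A case split on this common parity finishes the argument. If $\gamma_1,\gamma_2$ are both even then $2\mid \gcd(\gamma_1,\gamma_2,2q)$ since $q$ is even, but $\gamma$ having order exactly $2q$ requires this gcd to equal $1$, a contradiction. If instead $\gamma_1,\gamma_2$ are both odd, the background anyon $a=(k,l)$ having charge $1/q$ reduces to $\gamma_1 k-\gamma_2 l\equiv 1\pmod{2q}$, which modulo $2$ forces $k-l$ to be odd, so $k$ and $l$ have opposite parity. But $a^2$ being a boson translates into $k^2-l^2\equiv 0\pmod q$, while $(k-l)(k+l)$ is a product of two odd numbers and hence odd, contradicting divisibility by the even integer $q$.

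Once $\U_{2q}\boxtimes \U_{-2q}$ is eliminated, I would conclude that $\sc{A}$ must be $\bb{Z}_{2q}$ gauge theory stacked with $\{1,c\}$ and verify directly that this theory carries consistent assignments for $a$, $\gamma$, $v$, and charges compatible with time reversal and the filling $\nu=1/q$, completing the proof. The main obstacle is the parity argument: one must use the uniqueness of $\gamma$ from Lem.~\ref{lem:gamma_exist} to pin down the charge vector $t=(2\gamma_1,2\gamma_2)$ without ambiguity, so that the parity obstruction on $(\gamma_1,\gamma_2)$ cannot be evaded by reparameterization and genuinely cuts off both branches of the case split.
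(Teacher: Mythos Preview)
Your argument is correct, and it reaches the same conclusion as the paper by a genuinely different route. Both proofs first invoke Lem.~\ref{lem:a_sQ_boson} (so $\gamma$ is a boson and $a^2$ is a boson) and Lem.~\ref{lem:two_Q_gauge} to reduce to the two candidates $\bb{Z}_{2q}$ and $\U_{2q}\boxtimes\U_{-2q}$. To eliminate the second, the paper works intrinsically in the presentation $\sc{A}=\langle a\rangle\times\langle\gamma\rangle\boxtimes\{1,c\}$ supplied by Lem.~\ref{lem:full_group} and observes that for \emph{every} anyon $b=a^r\gamma^s c^n$ one has
\[
2q\,\theta_b \;=\; \tfrac{q}{2}\,r^2\theta_{a^2}+\tfrac{q}{2}\,s^2\theta_{\gamma^2}\;=\;0,
\]
using that $a^2$ and $\gamma^2=v$ are bosons; since $\U_{2q}\boxtimes\U_{-2q}$ manifestly contains an anyon with $\theta=\pi/2q$ (hence $2q\theta=\pi$), that theory is ruled out in one stroke. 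Your approach instead embeds the problem directly into the $K$-matrix $\mathrm{diag}(2q,-2q)$, locates $\gamma=(\gamma_1,\gamma_2)$ and $a=(k,l)$ there, and derives a parity obstruction by a two-case split on the common parity of $\gamma_1,\gamma_2$. The paper's invariant $2q\theta_b$ is slicker and avoids the case analysis; your argument is more hands-on and makes the obstruction (an odd integer divisible by the even $q$) completely explicit. Both hinge on the same essential input, namely that $a^2$ is a boson.

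One small point to tighten: when you write $a=(k,l)$ you are implicitly placing the background anyon in $\sc{A}_b$ rather than $\sc{A}_b\times c$. This is legitimate, but you should say so and cite the reduction in Lem.~\ref{lem:full_group} (replace $a$ by $a^{q+1}c$ if necessary), since otherwise the charge identity $\gamma_1 k-\gamma_2 l\equiv 1\pmod{2q}$ would acquire an extra $\pm q$ and the parity step would need rechecking.
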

\begin{proof}
 We know by Lem.~\ref{lem:two_Q_gauge} that $\sc{A}$ must be $\bb{Z}_{2q}$ gauge theory or $\U_{2q}\boxtimes \U_{-2q}$ stacked with the fundamental fermion. Our task is thus just to show that $\U_{2q}\boxtimes \U_{-2q}$ cannot work.

We know by Lem.~\ref{lem:full_group} that all anyons in our TO can be written as $a^r\gamma^s c^n$. 
Then we see that:
\begin{align}
2q\theta_{a^r\gamma^s c^n} &= 2q\left(n\pi + r^2\theta_a + s^2\theta_\gamma + \frac{\pi rs}{q}\right)\\
&= \frac{q}{2}r^2\theta_{a^2} + \frac{q}{2}s^2\theta_{\gamma^2}\\
&= 0,
\end{align}
where this follows from the fact that $q$ is even and $a^2,\gamma^2$ are bosons.
It is easy to see that all anyons $b$ in our TO $\sc{A}$ have the property that $2q\theta_b = 0$.

But obviously in the TO $\U_{2q}\boxtimes \U_{-2q}$ there is some anyon $b$ such that $\theta_b = \pi/2q$ and thus $2q\theta_b = \pi$. We have therefore arrived at a contradiction, and we see that $\U_{2q}\boxtimes \U_{-2q}$ cannot be an acceptable TO.
\end{proof}

\bibliographystyle{apsrev4-1_custom}
\bibliography{LSM.bib}

\end{document}